\newcommand{\bd}{\begin{description}}
\newcommand{\ed}{\end{description}}
\newcommand{\bi}{\begin{itemize}}
\newcommand{\ei}{\end{itemize}}
\newcommand{\be}{\begin{enumerate}}
\newcommand{\ee}{\end{enumerate}}
\newcommand{\beq}{\begin{equation}}
\newcommand{\eeq}{\end{equation}}
\newcommand{\beqs}{\begin{eqnarray*}}
\newcommand{\eeqs}{\end{eqnarray*}}
\definecolor{DarkGreen}{rgb}{0.2, 0.6, 0.3}
\newtheorem{problem}{Problem}
\newtheorem{observation}{Observation}[section]
\begin{document}


\setcounter{page}{141}
\publyear{24}
\papernumber{2176}
\volume{191}
\issue{2}

\finalVersionForARXIV

 \title{Perturbation Results for Distance-edge-monitoring Numbers\thanks{Supported by the National Science Foundation
                                                of China (Nos. 12061059, 12271362), the Qinghai Key Laboratory of Internet of Things
                                                Project (2017-ZJ-Y21), and by the ANR project TEMPOGRAL (ANR-22-CE48-0001).}}

\author{Chenxu Yang \\
School of Computer\\
Qinghai Normal University\\
 Xining, Qinghai 810008, China\\
 cxuyang@aliyun.com
\and
 Ralf Klasing \thanks{Corresponding author: Universit\'{e} de Bordeaux, Bordeaux INP, CNRS, LaBRI, UMR 5800, Talence, France.\newline \newline
                    \vspace*{-6mm}{\scriptsize{Received May 2023; \ accepted December 2023.}}}\\
 Universit\'{e} de Bordeaux\\
 Bordeaux INP, CNRS, LaBRI\\
  UMR 5800, Talence, France\\
  ralf.klasing@labri.fr
\and
 Changxiang He \\
 College of Science\\
   University of Shanghai\\
 for Science and Technology\\
 Shanghai 200093, China\\
 changxiang-he@163.com
 \and
 Yaping Mao \\
 Academy of Plateau Science and \\
 Sustainabilit and School of Mathematics\\
 and Statistics, Xining \\
 Qinghai 810008, China\\
 maoyaping@ymail.com}

\date{}

\runninghead{Ch. Yang et al.}{Perturbation Results for Distance-edge-monitoring Numbers}

\maketitle

\vspace*{-6mm}
\begin{abstract}
Foucaud {\em et al.}~recently introduced and initiated the study of a new graph-theoretic
concept in the area of network monitoring.
Given a graph $G=(V(G), E(G))$, a set $M \subseteq V(G)$ is a \emph{distance-edge-monitoring set} if for every edge $e \in E(G)$, there is a vertex $x \in M$ and a vertex $y \in V(G)$ such that the edge $e$ belongs to all shortest paths between $x$ and $y$.
The smallest size of such a set in $G$ is denoted by $\operatorname{dem}(G)$.
Denoted by $G-e$ (resp. $G \backslash u$) the subgraph of $G$ obtained by removing the edge $e$ from $G$ (resp. a vertex $u$ together with all its incident edges from $G$). In this paper, we first show that $\operatorname{dem}(G- e)-
\operatorname{dem}(G)\leq 2$ for any graph $G$
and edge $e \in E(G)$. Moreover, the bound is sharp.
Next, we construct two graphs $G$ and $H$ to show that $\operatorname{dem}(G)-\operatorname{dem}(G\setminus u)$ and
$\operatorname{dem}(H\setminus v)-\operatorname{dem}(H)$ can be arbitrarily large, where $u \in V(G)$ and $v \in V(H)$.
We also study the relation between
$\operatorname{dem}(H)$ and $\operatorname{dem}(G)$, where $H$ is a subgraph of $G$. In the end, we give an algorithm to judge whether the distance-edge-monitoring set still remain in the resulting graph when any edge of a graph $G$ is deleted.\\[2mm]
{\bf Keywords:}
Distance; Perturbation result;
Distance-edge-monitoring set.\\[2mm]
{\bf AMS subject classification 2020:} 05C12; 11J83; 35A30; 51K05.\smallskip
\end{abstract}

\section{Introduction}
In 2022, Foucaud {\it et~al.}~\cite{FKKMR21} introduced a new graph-theoretic concept called {\em distance-edge-monit\-oring set} (DEM for short), which means network monitoring using distance probes.
Networks are naturally modeled by finite undirected simple connected graphs, whose vertices represent
computers and whose edges represent connections between them. When a connection (an edge) fails in the network,
we can detect this failure, and thus achieve the purpose of monitoring the network.
Probes are made up of vertices we choose in the network. At any given moment, a probe of the network can measure its graph distance to
any other vertex of the network.
Whenever an edge of the network fails, one of the measured distances
changes, so the probes are able to detect the failure of any edge.
Probes that measure distances in graphs are present in real-life networks. They are useful in the fundamental
task of routing \cite{DABV06,GT00} and are also frequently used for problems concerning network verification \cite{BBDG15,BEEH06,BEMW10}.

In a network, we can put as few detectors as possible to monitor all the edges,
a natural question is whether the detectors placed in the original graph are still sufficient and need to be supplemented or reduced when some nodes or edges in the original graph are subjected to external interference and damage, we refer to \cite{Delen22,EROH15,Monson96,WEI22,Ye09}.
This kind of problem is usually called perturbation problem.

Graphs considered are finite, undirected and simple.
Let $G=(V(G),E(G))$ be a graph with vertex set $V(G)$ and edge set $E(G)$,
whose cardinality are
denoted by $|V(G)|$ and $e(G)$, respectively.
The \emph{neighborhood
set} of a vertex $v\in V(G)$ is $N_G(v)=\{u\in V(G)\,|\,uv\in
E(G)\}$.
Let $N_G[v]=N_G(v)\cup \{v\}$ be the  \emph{closed neighborhood
set  of a vertex $v$}.
The \emph{degree} of a vertex $v$ in $G$ is denoted
$d(v)=|N_{G}(v)|$. Let $\delta(G)$ and $\Delta(G)$ be the minimum and
maximum degree of a graph $G$, respectively.
For any subset $X$ of $V(G)$,
let $G[X]$
denote the subgraph of $G$ induced by $X$; similarly, for any subset
$F$ of $E(G)$, let $G[F]$ denote the subgraph induced by $F$.
We use
$G\setminus X$ to denote the subgraph of $G$ obtained by removing all the
vertices of $X$ together with the edges incident with them from $G$;
similarly, we use $G-F$ to denote the subgraph of $G$ obtained by removing all the edges of $F$ from $G$.
 If $X=\{v\}$ and
$F=\{e\}$, we simply write $G\setminus v$ and $G- e$ for $G-\{v\}$
and $G-\{e\}$, respectively.
For an edge $e$ of $G$, we denote by $G+e$ the graph obtained by adding an edge $e\in E(\overline{G})$ to $G$.
The {\it Cartesian product}\index{Cartesian product} $G\square H$ of
two graphs $G$ and $H$ is the graph whose vertex set is $V(G)\times
V(H)$ and whose edge set is the set of pairs $(u, v)(u',v')$ such
that either $uu'\in E(G)$ and $v=v'$, or $vv'\in E(H)$ and $u=u'$.
Let $G\vee H$ be a
\emph{join} graph of $G$ and $H$ with $V(G\vee H)=V(G)\cup V(H)$
and $E(G\vee H)=\{uv\,|\,u\in V(G),\,v\in V(H)\}\cup E(G)\cup E(H)$.
We denote by $d_G(x,y)$ the \emph{distance} between two vertices $x$ and $y$ in graph $G$.
For an edge $uv$ and a vertex $w\in V(G)$,
the distance between them is defined as
$d_G\left(uv, w\right)
=\min \{d_G\left(u, w\right),
d_G\left(v, w\right)\}$.
A $x$-$y$ path with length $d_G(x, y)$
in $G$ is a \emph{$x$-$y$ geodesic}.
Let $P_n$, $C_n$ and $K_n$ be the path, cycle and
complete graph of order $n$, respectively.

\subsection{DEM sets and numbers}

Foucaud et al.~\cite{FKKMR21} introduced a new graph-theoretic concept called DEM sets, which is relevant to network monitoring.
\begin{definition}\label{Defination:$P(M, e)$}
For a set $M$ of vertices and an edge $e$
of a graph $G$, let $P(M, e)$ be the set of pairs $(x, y)$ with a vertex $x$ of
$M$ and a vertex $y$ of $V(G)$ such that
 $d_G(x, y)\neq  d_{G- e}(x, y)$. In other words, $e$ belongs to all shortest paths between $x$ and $y$
in $G$.
\end{definition}

\begin{definition}
For a vertex $x$, let $EM(x)$ be the set of edges $e$ such that there exists a vertex $v$ in $G$ with $(x, v) \in  P(\{x\}, e)$, that is
$EM(x)=\{e\,|\,e \in E(G) \textrm{~and~ }
\exists v \in V(G)\textrm{~such that~}
d_G(x,v)\neq d_{G- e}(x,v)\}$
or $EM(x)=\{e\,|\,e \in E(G) \textrm{and }
P(\{x\}, e) \neq \emptyset \}$.
If $e \in EM(x)$,
we say that $e$ is monitored by $x$.
\end{definition}

Finding a particular vertex set $M$ and placing a detector on that set to monitor all edge sets
in $G$ have practical applications in sensor and network systems.
\begin{definition}
A vertex set $M$ of the graph $G$ is \emph{distance-edge-monitoring set} (DEM set for short) if every edge $e$ of $G$ is monitored by some vertex of $M$,
that is, the set $P(M, e)$ is nonempty. Equivalently, $\cup_{x\in M}EM(x)=E(G)$.
\end{definition}

\begin{theorem}{\upshape\cite{FKKMR21}}
\label{Th-Ncover}
Let $G $ be a connected graph with a vertex $x$ of $G$ and for any $y\in N(x)$, then,
we have $xy \in EM(x)$.
\end{theorem}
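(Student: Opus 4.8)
The plan is to verify the definition directly with the simplest possible witness. Given the vertex $x$ and a neighbour $y\in N(x)$, I would take the pair $(x,y)$ and show that $d_G(x,y)\neq d_{G-xy}(x,y)$; by the definition of $P(M,e)$ this says $(x,y)\in P(\{x\},xy)$, and hence, by the definition of $EM(x)$, that $xy\in EM(x)$.

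First I would record that, since $xy\in E(G)$, we have $d_G(x,y)=1$. Next I would look at $G-xy$ and split into two cases according to whether $x$ and $y$ still lie in the same component. If they do not, then there is no $x$-$y$ path in $G-xy$, so $d_{G-xy}(x,y)=\infty\neq 1$. If they do, then a shortest $x$-$y$ path in $G-xy$ must use at least one intermediate vertex: it cannot be the single edge $xy$, which has been deleted, and $G$ (hence $G-xy$) is simple, so there is no parallel edge. Thus $d_{G-xy}(x,y)\geq 2>1=d_G(x,y)$. In either case the distance from $x$ to $y$ strictly increases (or becomes infinite) when $xy$ is removed.

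Combining the two cases, $d_G(x,y)\neq d_{G-xy}(x,y)$, so the edge $xy$ lies on all shortest $x$-$y$ paths of $G$ and is therefore monitored by $x$; that is, $xy\in EM(x)$, as claimed. There is essentially no obstacle here; the only point meriting a word of care is the disconnected case, where one adopts the usual convention $d_{G-xy}(x,y)=\infty$ (equivalently, one observes that a shortest $x$-$y$ path of $G$ survives in $G-xy$ if and only if $xy$ is not a bridge), but either formulation makes the inequality immediate.
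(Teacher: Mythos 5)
Your proof is correct: taking the witness pair $(x,y)$, observing $d_G(x,y)=1$, and noting that in the simple graph $G-xy$ any surviving $x$-$y$ path has length at least $2$ (or no path exists at all) immediately gives $d_G(x,y)\neq d_{G-xy}(x,y)$, hence $xy\in EM(x)$. The paper itself states this theorem as an imported result from Foucaud et al.\ without reproving it, so there is no in-paper argument to compare against; your direct verification is the standard one and handles the bridge case appropriately.
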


One may wonder to know the existence of such an edge detection set $M$. The
answer is affirmative. If
we take $M=V(G)$, then it follows
from Theorem \ref{Th-Ncover} that
$$
E(G) \subseteq  \cup_{x\in V(G)}
\cup_{y\in N(x)}\{ xy\}
 \subseteq \cup_{x\in V(G)}EM(x).
$$
Therefore, we consider the smallest cardinality of $M$ and give the following parameter.

\begin{definition}
The \emph{distance-edge-monitoring number} (DEM number for short) $\operatorname{dem}(G)$ of a graph $G$ is defined as the smallest size of a distance-edge-monitoring set of $G$, that is
$$
\operatorname{dem}(G)=\min\left\{|M|| \cup_{x\in M}EM(x)=E(G)\right\}.
$$
Furthermore, for any DEM set $M$ of $G$, $M$ is called a \emph{DEM
basis} if $|M|=\operatorname{dem}(G)$.
\end{definition}

The vertices
of $M$ represent distance probes in a network modeled by $G$. The DEM sets are very effective in network fault tolerance testing. For example, a DEM set can detect a failing edge, and
it can correctly locate the failing edge by distance from $x$ to $y$, because the
distance from $x$ to $y$ will increases when the edge $e$ fails.

Foucaud et al. \cite{FKKMR21} showed that
$1 \leq \operatorname{dem}(G) \leq n-1$ for any $G$ with order $n$, and graphs with $\operatorname{dem}(G)=1,n-1$ was characterized in \cite{FKKMR21}.

\begin{theorem}{\upshape\cite{FKKMR21}}
\label{th-dem-1}
Let $G$ be a connected graph with at least one edge. Then $\operatorname{dem}(G) = 1$ if and only if $G$ is a tree.
\end{theorem}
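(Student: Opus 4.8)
The plan is to prove the two implications separately: the forward direction (if $G$ is a tree then $\operatorname{dem}(G)=1$) is short, while the converse rests on analysing the breadth-first-search layers around a single candidate monitoring vertex.

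First I would handle the easy direction. Suppose $G$ is a tree and fix any $x\in V(G)$. For an arbitrary edge $e=uv$, deleting $e$ splits $G$ into two components; choosing $y$ in the component not containing $x$, the unique $x$–$y$ path of the tree must use $e$, so $d_{G-e}(x,y)=\infty\neq d_G(x,y)$ and hence $e\in EM(x)$. Thus $\{x\}$ is a DEM set, and since $G$ has an edge we get $\operatorname{dem}(G)=1$.

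For the converse I would argue contrapositively, showing that if $G$ contains a cycle then no single vertex monitors every edge, so $\operatorname{dem}(G)\ge 2$. Fix $x$ and write $L_i=\{v:d_G(x,v)=i\}$. Since the endpoints of any edge have $x$-distances differing by at most $1$, every edge is \emph{horizontal} (both ends in one $L_i$) or \emph{vertical} (ends in consecutive layers). The two facts I would establish are: (i) a horizontal edge lies on no shortest path emanating from $x$, because consecutive vertices of such a path have $x$-distances differing by exactly $1$, so a horizontal edge belongs to no $EM(x)$; and (ii) a vertical edge $e=uv$ with $u\in L_i$, $v\in L_{i+1}$ satisfies $e\in EM(x)$ if and only if $u$ is the unique neighbour of $v$ in $L_i$, equivalently $e$ lies on every $x$–$v$ geodesic. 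For the nontrivial direction of (ii): if $e$ lies on all shortest $x$–$y$ paths for some witness $y$, then splitting such a path at $v$ shows $e$ lies on all shortest $x$–$v$ paths, since a shortest $x$–$v$ path avoiding $e$, concatenated with the tail from $v$ to $y$, would be a walk of length $d_G(x,y)$ avoiding $e$, and a shortest walk is a path.

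Granting (i) and (ii), if some $\{x\}$ were a DEM basis then $G$ would have no horizontal edges, and every $v\neq x$ would have exactly one neighbour in the layer $L_{d_G(x,v)-1}$ below it. Assigning to each $v\neq x$ the edge joining it to that neighbour gives an injection into $E(G)$ (the higher-layer endpoint determines the edge) that is onto, since every edge is vertical; hence $e(G)=|V(G)|-1$, and a connected graph with that many edges is a tree, contradicting the presence of a cycle. I expect the main obstacle to be the careful bookkeeping behind (ii): reducing an arbitrary monitoring witness $y$ to the witness $y=v$, and treating the case $d_{G-e}(x,y)=\infty$ on the same footing as the finite case.
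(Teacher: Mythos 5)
Your proof is correct, but note that the paper does not prove this statement at all: it is imported verbatim from the cited reference \cite{FKKMR21}, so there is no in-paper argument to match. Within the paper's own toolkit the shortest route would be: if $\{x\}$ is a DEM basis then $E(G)=EM(x)$, which by Theorem \ref{Th-forest} induces a forest, and a connected graph with at least one edge whose edge set induces a forest is a tree; conversely, in a tree every edge $e$ is a bridge, so choosing $y$ on the far side of $e$ gives $d_{G-e}(x,y)=\infty\neq d_G(x,y)$, whence any single vertex monitors all edges. Your argument reaches the same conclusion by a self-contained BFS-layer analysis that in effect reproves the relevant case of Theorem \ref{Th-forest}: your facts (i) and (ii) show that $EM(x)$ consists exactly of the ``unique parent'' edges $uv$ with $v\in L_{i+1}$ and $u$ the sole neighbour of $v$ in $L_i$, and the injection/surjection count $e(G)=|V(G)|-1$ then forces $G$ to be a tree. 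The reduction in (ii) from an arbitrary witness $y$ to the witness $y=v$ (splicing a shortest $x$--$v$ path avoiding $e$ onto the tail of a shortest $x$--$y$ path) is handled correctly, including the bridge case where $d_{G-e}(x,y)=\infty$. What your approach buys is independence from the cited forest theorem and an explicit structural description of $EM(x)$; what the citation-based route buys is brevity. No gaps.
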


\begin{theorem}{\upshape\cite{FKKMR21}}
\label{th-dem-n}
$\operatorname{dem}(G) = n-1$ if and only if $G$ is the complete graph of order $n$.
\end{theorem}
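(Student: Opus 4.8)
The plan is to prove the two implications separately; the ``only if'' part follows almost immediately from Theorem~\ref{Th-Ncover}, while the ``if'' part requires a small argument specific to $K_n$.

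For the ``if'' direction I would show $\operatorname{dem}(K_n)=n-1$. Since $\operatorname{dem}(G)\le n-1$ holds for every graph $G$ of order $n$ (as recalled above from \cite{FKKMR21}), it suffices to prove $\operatorname{dem}(K_n)\ge n-1$. The key observation is that in $K_n$ any two distinct vertices are at distance $1$, and the unique geodesic between them is the single edge joining them. Hence, for a fixed edge $uv$ and any pair $(x,y)$ with $x\neq y$ and $\{x,y\}\neq\{u,v\}$, the edge $uv$ lies on no $x$-$y$ geodesic, so $d_{K_n}(x,y)=d_{K_n-uv}(x,y)$ and therefore $(x,y)\notin P(\{x\},uv)$; consequently $uv$ can be monitored only by $u$ or by $v$. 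Thus, if $M$ is any DEM set of $K_n$, every edge has at least one endpoint in $M$, so $V(K_n)\setminus M$ is an independent set; since $K_n$ has no independent set of size $2$, we get $|V(K_n)\setminus M|\le 1$, that is $|M|\ge n-1$.

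For the ``only if'' direction I would argue by contraposition: assuming $G$ is connected but not complete, I would construct a DEM set of size $n-2$. Pick two non-adjacent vertices $u,v$ of $G$ (which exist since $G\neq K_n$) and set $M=V(G)\setminus\{u,v\}$, so $|M|=n-2$. Any edge $e=xy$ of $G$ has at least one endpoint in $M$, since otherwise $\{x,y\}=\{u,v\}$, contradicting $uv\notin E(G)$; choosing such an endpoint $x\in M$, we have $y\in N_G(x)$, so Theorem~\ref{Th-Ncover} gives $e\in EM(x)$. Hence $\bigcup_{x\in M}EM(x)=E(G)$, so $M$ is a DEM set and $\operatorname{dem}(G)\le n-2<n-1$, which is the contrapositive of the desired statement.

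I expect this to be a short argument with no serious obstacle. The one step needing genuine care is the lower bound $\operatorname{dem}(K_n)\ge n-1$: one must rule out that an edge $uv$ of $K_n$ could be monitored by a third vertex $w\notin\{u,v\}$, and this is precisely where the fact that every geodesic of $K_n$ has length at most $1$ enters. The degenerate cases $n\le 2$ need only a passing remark, since then $G$ is necessarily complete and the ``non-complete'' side of the equivalence is vacuous.
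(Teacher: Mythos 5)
This theorem is quoted from \cite{FKKMR21}, and the paper gives no proof of its own, so there is nothing to compare against directly; your argument is correct and is the standard one, built from exactly the tools the paper uses elsewhere. Your lower bound $\operatorname{dem}(K_n)\ge n-1$ (every geodesic of $K_n$ is a single edge, so each edge is monitored only by its two endpoints and any DEM set is a vertex cover, in the spirit of Proposition~\ref{Obs:CUV}) and your upper bound $\operatorname{dem}(G)\le n-2$ for non-complete $G$ (the complement of a non-adjacent pair is a vertex cover, hence a DEM set by Theorem~\ref{Th-Ncover}, i.e.\ Theorem~\ref{Theorem:Upperbond}) are both sound, with only the trivial cases $n\le 2$ left to a remark as you note.
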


\begin{theorem}{\upshape\cite{FKKMR21}}
\label{Th-forest}
For a vertex $x$ of a graph $G$, the set of edges $EM(x)$ induces a forest.
\end{theorem}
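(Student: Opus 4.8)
The plan is to prove that $G[EM(x)]$ is acyclic by combining the layered (breadth-first) structure of $G$ around $x$ with a ``take the vertex of a putative cycle that is farthest from $x$'' argument. Throughout, partition $V(G)$ into the distance classes $L_i=\{v\in V(G):d_G(x,v)=i\}$, and recall the elementary fact that the $i$-th vertex of any shortest path emanating from $x$ lies in $L_i$.

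\emph{Step 1: the edges of $EM(x)$ join consecutive layers.} Suppose $e=uv\in EM(x)$. By definition there is $w\in V(G)$ with $d_{G-e}(x,w)\neq d_G(x,w)$; since removing an edge cannot decrease a distance, this forces $d_{G-e}(x,w)>d_G(x,w)$, which means every shortest $x$-$w$ path of $G$ uses $e$. Fixing one such path and using the layering fact above, we get $u\in L_i$ and $v\in L_{i+1}$ for some $i$ (after possibly swapping $u,v$); in particular $|d_G(x,u)-d_G(x,v)|=1$.

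\emph{Step 2: a farthest vertex on a cycle yields a contradiction.} Assume $G[EM(x)]$ contains a cycle $C$ and pick $v\in V(C)$ with $d_G(x,v)$ maximum. Let $u_1,u_2$ be the two neighbours of $v$ on $C$; they are distinct, and by Step 1 together with the maximality of $d_G(x,v)$ we have $d_G(x,u_1)=d_G(x,u_2)=d_G(x,v)-1$. Since $u_1v\in EM(x)$, fix a witness $w$ so that every shortest $x$-$w$ path traverses $u_1$ and then $v$; hence $d_G(x,w)=d_G(x,v)+d_G(v,w)$ and $d_G(v,w)=d_G(u_1,w)-1$. Now concatenate a shortest $x$-$u_2$ path, the edge $u_2v$, and a shortest $v$-$w$ path: the resulting walk has length $(d_G(x,v)-1)+1+d_G(v,w)=d_G(x,w)$, so after deleting any repeated portion it yields a shortest $x$-$w$ path. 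This path avoids $e=u_1v$: the $x$-$u_2$ part stays inside $L_0\cup\cdots\cup L_{d_G(x,v)-1}$ and so does not meet $v$; the edge $u_2v$ differs from $u_1v$ because $u_1\neq u_2$; and the $v$-$w$ part cannot step from $v$ to $u_1$, as that would strictly increase the distance to $w$. This contradicts $u_1v\in EM(x)$, so $G[EM(x)]$ is a forest.

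The heart of the argument — and the step most likely to require care — is the final verification that the rerouted $x$-$w$ path genuinely avoids $e$; this rests on the two distance identities $d_G(x,w)=d_G(x,v)+d_G(v,w)$ and $d_G(v,w)=d_G(u_1,w)-1$ extracted from the witness $w$, together with the layering fact for shortest paths out of $x$. Everything else is routine bookkeeping about breadth-first layers, and Step 1 could alternatively be absorbed into Step 2, but stating it separately keeps the logic transparent and may be reusable elsewhere in the paper.
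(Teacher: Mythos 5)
Your argument is correct: Step 1 (edges of $EM(x)$ join consecutive BFS layers) and Step 2 (reroute a witness geodesic around the deepest vertex of a putative cycle via its second cycle-neighbour, using $d_G(x,w)=d_G(x,v)+d_G(v,w)$ and $d_G(v,w)=d_G(u_1,w)-1$ to check the rerouted path is still shortest and avoids $u_1v$) together give a complete proof. Note, however, that the paper states this theorem only as a quoted result from the cited reference and gives no proof of its own, so there is nothing internal to compare against; your proof is essentially the standard argument for this fact.
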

In a graph $G$,
the \emph{base graph $G_b$} of a graph $G$ is
the graph obtained from $G$ by iteratively
removing
vertices of degree $1$.
\begin{observation}{\upshape \cite{FKKMR21}}
\label{Obs:G_b}
Let $G$ be a graph  and $G_b$ be its base graph. Then we have $\operatorname{dem}(G) = \operatorname{dem}(G_b).$
\end{observation}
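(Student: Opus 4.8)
The plan is to reduce the statement to the case of deleting a single pendant vertex and then to iterate. Concretely, I would first prove the following local claim: if $v$ is a vertex of $G$ with $d(v)=1$ and $G'=G\setminus v$ still contains an edge, then $\operatorname{dem}(G)=\operatorname{dem}(G')$. Applying this repeatedly along any sequence of successive degree-one deletions reaches the base graph $G_b$ (so long as $G_b$ still has an edge, i.e. $G$ is not a tree), which gives the observation. When $G$ is a tree the deletion process eventually empties the graph; in that degenerate range both $\operatorname{dem}(G)$ and the natural value assigned to the trivial base graph are $1$ by Theorem~\ref{th-dem-1}, so the equality persists.

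The heart of the argument is a description of the sets $EM$ in terms of the pendant structure; throughout, write $u$ for the unique neighbour of $v$ and write $EM_H(x)$ for the set $EM(x)$ computed inside the graph $H$. First, since deleting the edge $uv$ sends the distance to $v$ to $\infty$ while leaving every other distance unchanged, one checks that $uv\in EM_G(x)$ for \emph{every} $x\in V(G)$. Second, because $d_G(v,w)=1+d_G(u,w)$ for all $w\neq v$ and this identity survives the removal of any edge $e\neq uv$ (the vertex $v$ is still pendant at $u$ in $G-e$), we get $d_G(v,w)\neq d_{G-e}(v,w)$ if and only if $d_G(u,w)\neq d_{G-e}(u,w)$; moreover $v$ is never itself a witness for an edge $e\neq uv$. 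Combining these with the fact that the pendant vertex $v$ does not shorten any distance inside $G'$ should yield, for every $x\in V(G')$, the identity $EM_G(x)=\{uv\}\cup EM_{G'}(x)$, and also $EM_G(v)=\{uv\}\cup EM_{G'}(u)=EM_G(u)$: a pendant vertex monitors exactly the same edges as its support vertex.

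Granting these identities, the two inequalities are short. For $\operatorname{dem}(G)\le\operatorname{dem}(G')$ I would take a DEM basis $M'$ of $G'$ and compute $\bigcup_{x\in M'}EM_G(x)=\{uv\}\cup\bigcup_{x\in M'}EM_{G'}(x)=\{uv\}\cup E(G')=E(G)$, so that $M'$ is already a DEM set of $G$. For $\operatorname{dem}(G')\le\operatorname{dem}(G)$ I would take a DEM basis $M$ of $G$ and form $M'$ by replacing $v$ with $u$ whenever $v\in M$; since $EM_G(v)=EM_G(u)$ this keeps $M'$ a DEM set of $G$ with $M'\subseteq V(G')$ and $|M'|\le|M|$, and then $\{uv\}\cup\bigcup_{x\in M'}EM_{G'}(x)=\bigcup_{x\in M'}EM_G(x)=E(G)$ forces $\bigcup_{x\in M'}EM_{G'}(x)=E(G')$, so $M'$ is a DEM set of $G'$.

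The step I expect to be the main obstacle is the distance bookkeeping underlying the $EM$ identities: one must verify carefully that no witness vertex is lost when passing among $G$, $G-e$, $G'$ and $G'-e$ — precisely, that $d_G(x,w)=d_{G'}(x,w)$ and $d_{G-e}(x,w)=d_{G'-e}(x,w)$ for all $x,w\in V(G')$ and all $e\in E(G')$, and that a witness equal to $v$ can always be traded for the witness $u$. The remaining subtlety, the boundary convention for the one-vertex graph obtained from a tree, is handled as above via Theorem~\ref{th-dem-1}.
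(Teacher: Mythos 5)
The paper does not prove this observation at all --- it is imported verbatim from Foucaud et al.~\cite{FKKMR21} --- so there is no internal proof to compare against; I can only judge your argument on its own terms, and it is correct and is the natural one. The reduction to a single pendant deletion, the two identities $EM_G(x)=\{uv\}\cup EM_{G'}(x)$ for $x\in V(G')$ and $EM_G(v)=EM_G(u)$, and the two resulting inequalities are all sound: the pendant edge $uv$ is a bridge and hence lies in $EM_G(x)$ for every $x$, distances among vertices of $G'$ are unaffected by the presence of $v$ in both $G$ and $G-e$, and $d_G(\cdot,v)=d_G(\cdot,u)+1$ persists after deleting any $e\neq uv$. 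Two small points. First, your phrase that $v$ ``is never itself a witness for an edge $e\neq uv$'' is literally false --- $v$ can be a witness --- but your closing formulation (a witness equal to $v$ can always be traded for $u$) is the correct statement and is what the argument actually uses, so this is only a wording slip. Second, the tree boundary case is genuinely awkward under this paper's own conventions: here $G_b$ of a tree degenerates to $K_1$ and the paper explicitly sets $\operatorname{dem}(K_1)=0$, while $\operatorname{dem}(T)=1$ by Theorem~\ref{th-dem-1}, so the literal equality $\operatorname{dem}(G)=\operatorname{dem}(G_b)$ fails for trees under those conventions; your resolution (assigning the trivial base graph the value $1$) contradicts the paper's stated convention rather than following from it. The honest fix is to state the observation for graphs whose base graph retains at least one edge (equivalently, graphs containing a cycle), which is exactly the range your inductive step covers, and to note the tree case separately. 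None of this affects the core of your proof, which is complete where it matters.
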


A  vertex set $M$  is called a \emph{vertex cover} of $G$ if
$M\cap \{u,v\}\neq \emptyset$ for  $uv\in E(G)$. The minimum cardinality of a vertex cover $M$ in $G$ is the \emph{vertex covering number} of $G$, denoted by $\beta(G)$.

\begin{theorem}{\upshape\cite{FKKMR21}}
\label{Theorem:Upperbond}
In any graph $G$ of order $n$, any vertex cover of $G$ is a DEM set of $G$, and thus $\operatorname{dem}(G) \leq \beta(G)$.
\end{theorem}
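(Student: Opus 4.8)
The plan is to verify the defining property of a DEM set directly for an arbitrary vertex cover, so that the inequality falls out of the definition of $\operatorname{dem}(G)$. First I would fix a vertex cover $M$ of $G$ and an arbitrary edge $e=uv\in E(G)$. Since $M$ is a vertex cover, $M\cap\{u,v\}\neq\emptyset$, so without loss of generality $u\in M$. Because $v\in N_G(u)$, Theorem~\ref{Th-Ncover} immediately gives $uv\in EM(u)$, i.e.\ $e$ is monitored by a vertex of $M$. As $e$ was arbitrary, this shows $\bigcup_{x\in M}EM(x)=E(G)$, which is exactly the statement that $M$ is a DEM set of $G$.

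Next I would invoke the definition of $\operatorname{dem}(G)$ as the minimum cardinality of a DEM set: since every vertex cover of $G$ is a DEM set, we get $\operatorname{dem}(G)\leq|M|$ for every vertex cover $M$, and in particular, taking $M$ to be a minimum vertex cover, $\operatorname{dem}(G)\leq\beta(G)$.

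I do not expect a genuine obstacle here: the whole argument is a one-line consequence of Theorem~\ref{Th-Ncover}, and the only point requiring (elementary) care is the combinatorial observation that a vertex cover meets every edge, which is precisely what guarantees that the sets $EM(x)$ for $x\in M$ jointly cover all of $E(G)$. One could optionally note that the bound need not be tight --- for instance, trees have $\operatorname{dem}(G)=1$ by Theorem~\ref{th-dem-1} while $\beta(G)$ can be arbitrarily large --- but this is not needed for the stated inequality and belongs more naturally to the later perturbation discussion.
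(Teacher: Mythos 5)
Your proof is correct and is essentially the argument the paper intends: this theorem is quoted from \cite{FKKMR21} without a proof of its own, but the paper uses exactly the same reasoning (Theorem~\ref{Th-Ncover} applied to a neighbor of each edge's covered endpoint) immediately after Theorem~\ref{Th-Ncover} to show that $M=V(G)$ is a DEM set. The only cosmetic caveat is that Theorem~\ref{Th-Ncover} is stated for connected graphs, so for ``any graph $G$'' one should note the argument applies componentwise, which is harmless since each edge lies in a single component.
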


Ji et al.~\cite{JLKZ22} studied the Erd\H{o}s-Gallai-type problems for distance-edge-monitoring numbers. Yang et al.~\cite{Yang22}
obtained some upper and lower bounds of $P(M,e)$,
$EM(x)$, $\operatorname{dem}(G)$, respectively, and characterized the graphs with $\operatorname{dem}(G)=3$, and gave some properties of the graph $G$ with $\operatorname{dem}(G)=n-2$.
Yang et al.~\cite{YG24} determined the exact value of
distance-edge-monitoring numbers of grid-based pyramids, $M(t)$-graphs and
Sierpi\'{n}ski-type graphs.

\subsection{Progress and our results}

Perturbation problems in graph theory are as follows.

\begin{problem}\label{QP}
Let $G$ be a graph, and let $e\in E(G)$ and $v\in V(G)$. Let $f(G)$ be a graph parameter.

$(1)$ The relation between $f(G)$ and $f(G-e)$;

$(2)$ The relation between $f(G)$ and $f(G\setminus v)$.
\end{problem}

Chartrand et al.~\cite{Chart03} studied the perturbation problems on the metric dimension.
Monson et al.~\cite{Monson96} studied the effects of vertex deletion and edge deletion on the clique partition number in 1996. In 2015, Eroh et al.~\cite{EROH15}
considered the effect of vertex or edge deletion on the metric dimension of graphs.
Wei et al.~\cite{WEI22} gave some results on the edge metric dimension of graphs.
Delen et al.~\cite{Delen22} study the effect of vertex and edge deletion on the independence number of graphs.

A graph $H$ is a \emph{subgraph} of a graph $G$ if $V(H) \subseteq V(G)$ and $E(H) \subseteq E(G)$, in which case
 we write $H \sqsubseteq G$.
If $V(H)=V(G)$, then $H$ is a \emph{spanning subgraph} of $G$.
If $H$ is a subgraph of a graph $G$, where $H \neq G$, then $H$ is a \emph{proper subgraph} of $G$. Therefore, if $H$ is a proper subgraph of $G$, then either $V(H)\subset V(G)$ or $E(H)\subset E(G)$.

\medskip
We first consider the existence of graphs with given values of DEM numbers.
\begin{problem}\label{Qst}
Let $r,s,n$ be three integers with $1 \leq r,s
\leq n-1$.

$(1)$ Is there a connected graph $G$ of order $n$ such that $\operatorname{dem}(G)=r$?

$(2)$ Let $G$ be a connected graph of order $n$.
Is there a connected subgraph $H$ in $G$
such that $\operatorname{dem}(H)=s$ and $\operatorname{dem}(G)=r$?
\end{problem}

In Section $2$, we give the answers to Problem \ref{Qst}.
\begin{proposition}\label{Obs:EST}
For any two integers $r, n$ with $1 \leq r \leq n-1$, there exists  a connected graph $G$ of order $n$ such that $\operatorname{dem}(G)=r$.
\end{proposition}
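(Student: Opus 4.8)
The plan is to give an explicit construction that, as $r$ ranges over $\{1,\dots,n-1\}$ with $n$ fixed, realizes every intermediate value of $\operatorname{dem}$. The two extreme cases are already settled by the literature quoted in the excerpt: for $r=1$ we take $G$ to be any tree on $n$ vertices (e.g.\ the path $P_n$), and Theorem~\ref{th-dem-1} gives $\operatorname{dem}(G)=1$; for $r=n-1$ we take $G=K_n$, and Theorem~\ref{th-dem-n} gives $\operatorname{dem}(G)=n-1$. So the real content is the range $2\le r\le n-2$.

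For the intermediate values I would build $G$ as a complete graph on $r+1$ vertices with a path (or a collection of pendant vertices) attached so as to pad the order up to $n$. Concretely, let $G$ consist of a clique $K_{r+1}$ together with $n-r-1$ additional vertices hung off one vertex of the clique as a path of length $n-r-1$ (so $G$ has a base graph $G_b=K_{r+1}$ after iteratively deleting the degree-$1$ vertices). By Observation~\ref{Obs:G_b}, $\operatorname{dem}(G)=\operatorname{dem}(G_b)=\operatorname{dem}(K_{r+1})$, and by Theorem~\ref{th-dem-n} this equals $(r+1)-1=r$. One checks that $|V(G)|=(r+1)+(n-r-1)=n$ and that $G$ is connected, which finishes this case. (The construction works for all $r$ with $1\le r\le n-1$, so in fact a single family handles everything, but it is cleanest to mention the $K_n$ and tree cases separately as degenerate instances.)

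The step requiring the most care is simply verifying that attaching a path of pendant-type vertices really does leave the base graph equal to $K_{r+1}$ and does not accidentally change distances within the clique in a way that matters — but this is immediate, since adding pendant paths never creates new shortest paths through the clique, and the pendant vertices are successively removed in the base-graph reduction. The only genuine constraint is that we need $r+1\le n$, i.e.\ $r\le n-1$, which is exactly the hypothesis. Thus the main "obstacle" is essentially bookkeeping: confirming $n\ge r+1$ so the clique fits, and invoking Observation~\ref{Obs:G_b} and Theorem~\ref{th-dem-n} correctly. No delicate estimates are needed.
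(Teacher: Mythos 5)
Your construction is exactly the paper's: the paper uses the kite graph $K(r,n)$ (a $K_{r+1}$ with a path attached), reduces to the base graph via Observation~\ref{Obs:G_b}, and applies Theorem~\ref{th-dem-n} to get $\operatorname{dem}(K_{r+1})=r$. Your proposal is correct and takes essentially the same approach.
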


\begin{corollary}\label{cor:ESTC}
Given three integers $s, t, n$ with $1 \leq s \leq t \leq n-1$,
there exists a connected graph $H\sqsubseteq G$ such that $\operatorname{dem}(H)=s$ and $\operatorname{dem}(G)=t$.
\end{corollary}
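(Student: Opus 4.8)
The plan is to read both graphs off a single explicit template — a complete graph with pendant vertices attached — so that the statement really is an immediate consequence of (the construction behind) Proposition \ref{Obs:EST}. First I would build $G$. Assume $t \ge 2$; fix a vertex $v_0$ of $K_{t+1}$ and let $G$ be obtained from $K_{t+1}$ by attaching $n-t-1$ new vertices of degree $1$ to $v_0$ (so $G = K_n$ when $t = n-1$). Then $G$ is connected of order $n$, and iteratively removing its degree-$1$ vertices deletes exactly those $n-t-1$ attached vertices, leaving base graph $G_b = K_{t+1}$; hence by Observation \ref{Obs:G_b} and Theorem \ref{th-dem-n}, $\operatorname{dem}(G) = \operatorname{dem}(K_{t+1}) = (t+1)-1 = t$. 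For the remaining value $t=1$ (which also forces $s=1$) I would instead simply take $G = P_n$, a tree, so that $\operatorname{dem}(G) = 1 = t$ by Theorem \ref{th-dem-1}.

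Next I would choose $H$. For $t \ge 2$: since $1 \le s \le t$ we have $2 \le s+1 \le t+1$, so selecting any $s+1$ vertices of the clique $K_{t+1}$ sitting inside $G$, together with all edges among them, produces a copy of $K_{s+1}$; thus $H := K_{s+1} \sqsubseteq K_{t+1} \sqsubseteq G$, it is connected (it contains an edge), and by Theorem \ref{th-dem-n}, $\operatorname{dem}(H) = \operatorname{dem}(K_{s+1}) = (s+1)-1 = s$. For $t=1$ (so $s=1$) take $H = P_2 \sqsubseteq P_n = G$, a tree with $\operatorname{dem}(H) = 1 = s$. In every case $H \sqsubseteq G$ with $\operatorname{dem}(H) = s$ and $\operatorname{dem}(G) = t$, which is exactly the assertion.

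I do not expect a genuine obstacle here: the whole content is in choosing the realizing graph of Proposition \ref{Obs:EST} so that it visibly contains a $K_{t+1}$, which then automatically contains a $K_{s+1}$ for every $s \le t$, and the identity $\operatorname{dem}(K_m) = m-1$ pins down both values. The only points needing a moment's care are the boundary cases already flagged — $t=1$ (handled via Theorem \ref{th-dem-1} rather than the base-graph identity, since the base graph degenerates to a single vertex), $t = n-1$ (no pendants, $G = K_n$), and $s = t$ (where $H = G_b = K_{t+1}$) — each of which is dispatched in one line.
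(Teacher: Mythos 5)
Your proof is correct and follows essentially the same route as the paper: both realize $G$ as $K_{t+1}$ with pendant vertices attached (the paper's kite $K(t,n)$) so that $\operatorname{dem}(G)=t$ via Observation \ref{Obs:G_b} and Theorem \ref{th-dem-n}, and both find $H$ inside the clique. The only cosmetic difference is that the paper takes $H=K(s,n)$, a spanning kite subgraph, while you take $H=K_{s+1}$ directly and treat the degenerate case $t=1$ separately.
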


In Section $3$, we focus on Problem \ref{QP} $(1)$ and study the difference between $\operatorname{dem}(G-e)$ and $\operatorname{dem}(G)$.
\begin{theorem}\label{th-Difference}
Let $G$ be a graph. For any edge $e \in E(G)$, we have
$$
\operatorname{dem}(G-e)-\operatorname{dem}(G) \leq 2.
$$
Moreover, this bound is sharp.
\end{theorem}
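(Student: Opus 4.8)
The plan is to prove the upper bound $\operatorname{dem}(G-e) \le \operatorname{dem}(G) + 2$ by starting from a DEM basis $M$ of $G$ with $e = uv$, and showing that $M \cup \{u, v\}$ (or something only two vertices larger than $M$) is a DEM set of $G-e$. The key point is to understand how removing $e$ can ``break'' the monitoring of the remaining edges: an edge $f \in E(G-e)$ that was monitored by some $x \in M$ in $G$ may fail to be monitored by $x$ in $G-e$ because some shortest path between $x$ and a witness $y$ that used to be forced through $f$ now has an alternative route, or because the relevant distances themselves changed once $e$ was deleted. First I would record the easy observation that if $\operatorname{dem}(G-e)$ is being compared to $\operatorname{dem}(G)$ we may assume $G-e$ is still connected (otherwise $e$ is a bridge and by Observation~\ref{Obs:G_b} / Theorem~\ref{th-dem-1}-type reasoning the monitoring structure is controlled; a bridge deletion splits $G$ into components and one argues componentwise). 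Then for the main case I would fix a DEM basis $M$ of $G$ and set $M' = M \cup \{u,v\}$, and argue that $M'$ monitors every edge of $G-e$.

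The core lemma I would isolate is: for an edge $f \in E(G-e)$ with $f \neq e$, if $f$ is monitored by $x$ in $G$ via the pair $(x,y)$, then in $G-e$ the edge $f$ is monitored by $x$, or by $u$, or by $v$. The intuition is that the only reason $(x,y)$ can stop working is that a $d_{G-e}(x,y)$-geodesic avoiding $f$ appears, and any new geodesic must use the ``detour'' around $e$, i.e., must pass through both $u$ and $v$; combining Theorem~\ref{Th-Ncover} (which guarantees $u$ monitors $uw$ for its neighbors, and more generally controls edges near $u$ and $v$) with a careful distance comparison, one shows $f$ lies on all shortest paths from $u$ (or from $v$) to $y$ in $G-e$. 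I would handle separately the edges incident to $u$ or $v$ in $G-e$: those are directly monitored by $u$ or $v$ by Theorem~\ref{Th-Ncover}. For edges ``far'' from $e$, distances are unchanged in a neighborhood and the original witness $(x,y)$ still works in $G-e$. The delicate middle range — edges $f$ whose shortest-path structure to $x$ is altered by deleting $e$ but which are not incident to $\{u,v\}$ — is where the detour argument through $u,v$ does the real work, and this is the step I expect to be the main obstacle: one must show precisely that a path realizing $d_{G-e}(x,y)$ and avoiding $f$ forces $f$ onto every $u$–$y$ or $v$–$y$ geodesic in $G-e$, which requires tracking how $d_{G-e}$ relates to $d_G$ along the relevant vertices (using $d_{G-e}(a,b) \ge d_G(a,b)$ always, with equality unless every $a$–$b$ geodesic used $e$).

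For sharpness, I would exhibit an explicit family where deleting a single edge forces the DEM number up by exactly $2$. A natural candidate is built so that in $G$ a small, highly symmetric ``core'' (for instance a complete bipartite piece, or a $K_4$-like gadget, possibly combined via a join or Cartesian product with paths as in the constructions flagged in the paper) has a small DEM basis exploiting many shortest paths through a hub edge $e$; once $e$ is removed, the shortest-path geometry degenerates so that two extra probes are provably required. I would verify the lower bound $\operatorname{dem}(G-e) \ge \operatorname{dem}(G) + 2$ on this example by the standard technique: show that any DEM set of $G-e$ of smaller size leaves some edge unmonitored, typically by a counting/covering argument against the forest structure of $EM(x)$ given by Theorem~\ref{Th-forest}, together with the vertex-cover bound of Theorem~\ref{Theorem:Upperbond} applied to the base graph (Observation~\ref{Obs:G_b}). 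The main obstacle on the sharpness side is choosing the gadget so that both inequalities $\operatorname{dem}(G) \le k$ and $\operatorname{dem}(G-e) \ge k+2$ are simultaneously clean to verify; I would aim for a construction small enough that $EM(x)$ can be computed essentially by hand for the few relevant vertices $x$.
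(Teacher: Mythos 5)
Your upper-bound argument follows essentially the same route as the paper: fix a DEM basis $M$ of $G$ with $e=uv$, set $M'=M\cup\{u,v\}$, treat the bridge case componentwise (the paper's Fact~\ref{fact1}, which shows $u$ can replace a witness that lived in the other component), and in the connected case show that any edge $e_1=xy$ whose witness $z\in M$ is broken by the deletion of $e$ must have had all its $z$--$y$ geodesics through $e$, whence $u$ or $v$ monitors $e_1$ in $G-e$. Your ``core lemma'' is exactly the paper's Claim~\ref{claim2} plus the ensuing distance comparison, so on this half the plan is sound and matches the paper.

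The sharpness half, however, has a genuine gap: you never produce an example, and the gadgets you name would not work. For a complete bipartite piece the paper itself shows $\operatorname{dem}(K_{n,n}-e)=\operatorname{dem}(K_{n,n})=n$, and for a $K_4$-like gadget one has $\operatorname{dem}(K_n-e)=\operatorname{dem}(K_n)-1$; in both cases deleting an edge does not increase the parameter, let alone by $2$. Achieving an increase of exactly $2$ requires a construction in which the deleted edge is a ``shortcut'' whose presence lets a few probes monitor many edges via long forced geodesics, and whose removal restores a rigid product-like structure with a large known DEM number. The paper's example (Proposition~\ref{Lem:eq2}) is $C_8\,\square\,P_2$ with an added chord $u_1u_5$: with the chord the graph has DEM number $6$ (lower bound via Proposition~\ref{Obs:CUV} applied to the rungs $u_iv_i$), and after deleting the chord one gets $C_8\,\square\,P_2$, whose DEM number is $8$ by Theorem~\ref{ThmCnPn}. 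Your proposed verification strategy (counting against the forest structure of $EM(x)$ and the vertex-cover bound) is reasonable in spirit, but without a working gadget the sharpness claim is not established.
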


Let $G$ be a graph and $E\subseteq E(\overline{G})$. Denote by $G+E$ the graph with $V(G+E)=V(G)$ and $E(G+E)=E(G)\cup E$.
We construct graphs with the following properties in Section $3$.
\begin{theorem}\label{th-Ei}
For any positive integer $k\geq 2$, there exists
a graph sequence  $\{G^i\,|\,0\leq i\leq k \}$,
with $e(G^i)-e(G^0)=i$ and $V(G^i)=V(G^j)$ for $0\leq i,j \leq k$,
such that
$\operatorname{dem}(G^{i+1})
-\operatorname{dem}(G^0)=i$, where $1\leq i\leq k-1$.
Furthermore,  we have $\operatorname{dem}(G^0)=1$,
$\operatorname{dem}(G^1)=2$ and $\operatorname{dem}(G^i)=i$, where $2\leq i\leq k$.
\end{theorem}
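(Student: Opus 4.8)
The plan is to build, for a fixed $k\ge 2$, a single "base" graph $G^0$ that is a tree (so $\operatorname{dem}(G^0)=1$ by Theorem \ref{th-dem-1}), together with a carefully chosen sequence of non-edges $e_1,\dots,e_k\in E(\overline{G^0})$, and to set $G^i=G^{i-1}+e_i=G^0+\{e_1,\dots,e_i\}$. Then automatically $e(G^i)-e(G^0)=i$ and $V(G^i)=V(G^0)$, so the only real content is to engineer the edges so that $\operatorname{dem}(G^1)=2$ and $\operatorname{dem}(G^i)=i$ for $2\le i\le k$; the displayed relation $\operatorname{dem}(G^{i+1})-\operatorname{dem}(G^0)=i$ for $1\le i\le k-1$ is then a restatement, since $\operatorname{dem}(G^0)=1$ and $\operatorname{dem}(G^{i+1})=i+1$. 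The natural candidate is to take $k$ "gadget" copies, each of which is a small cycle attached to a common structure, so that adding $e_j$ creates the $j$-th cycle; each newly created cycle should force one additional probe while leaving the already-created cycles monitorable by the old probes. A clean choice is to take a long path (the "spine") and hang the gadgets off distinct, far-apart spine vertices, so that shortest paths inside one gadget do not interact with the others; adding $e_j$ turns the $j$-th pendant path into a cycle $C_{\ell}$ of odd length (odd so that antipodal-type monitoring fails and a genuinely new probe is needed).

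The key steps, in order, are: (i) describe $G^0$ explicitly — a spine path $v_1\cdots v_m$ with, at each of $k$ chosen spine vertices $s_1,\dots,s_k$, two pendant paths of equal length $t$ ending at leaves $a_j,b_j$; define $e_j=a_jb_j$, so that in $G^j$ the $j$-th gadget becomes an even... no — an odd cycle through $s_j$; verify $e(G^i)-e(G^0)=i$ and $V$ is unchanged. (ii) Show $\operatorname{dem}(G^0)=1$ directly from Theorem \ref{th-dem-1} since $G^0$ is a tree, and separately verify by hand that $\operatorname{dem}(G^1)=2$ (one probe placed in the unique cycle, using Observation \ref{Obs:G_b} and Theorem \ref{th-dem-1} to see that $1$ probe cannot monitor a graph with a cycle). (iii) Upper bound $\operatorname{dem}(G^i)\le i$: place one probe at a leaf of (or inside) each of the $i$ created cycles and argue, using Theorem \ref{Th-Ncover} for the incident edges and a shortest-path computation along the spine, that every edge of $G^i$ — spine edges, pendant-path edges of all $k$ gadgets, and the $i$ cycle chords — lies on some shortest path from one of these $i$ probes to a suitable far leaf. (iv) Lower bound $\operatorname{dem}(G^i)\ge i$: show each created odd cycle needs "its own" probe. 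The cleanest route is an $EM(x)$-is-a-forest argument (Theorem \ref{Th-forest}): a single vertex $x$ can monitor edges of at most... — more robustly, show that for a single probe $x$, the set of edges of the $j$-th odd cycle that $x$ can monitor is a proper subset, and that a probe located "outside" gadget $j$ monitors no edge of cycle $j$ at all (because removing such an edge does not change any distance from $x$ to anything, as the two $s_j$-to-leaf halves have equal length and the far sides are symmetric). Then a counting/covering argument over the $i$ pairwise-disjoint odd cycles forces $i$ probes.

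The main obstacle will be step (iv), the lower bound — specifically, proving that the gadgets are genuinely "independent" from a monitoring standpoint, i.e.\ that no single probe can help monitor two different created cycles, and that within a single odd cycle at least one probe must be placed "locally." Making the pendant paths long relative to the spacing $|s_{j+1}-s_j|$ on the spine is what buys this independence, and choosing the cycle lengths odd is what prevents the two spine-sides of a cycle from offering an alternative shortest route that would let a distant probe monitor the far edge of the cycle; the delicate part is to quantify "long enough" precisely and to handle probes placed on the spine between two gadgets. A secondary nuisance is that the numbers must line up for the small cases $i=1$ and $i=2$ simultaneously with the general pattern, which is exactly why the statement singles out $\operatorname{dem}(G^0)=1$, $\operatorname{dem}(G^1)=2$, $\operatorname{dem}(G^i)=i$ ($i\ge 2$) rather than a uniform formula; I would verify $i=1,2$ by direct inspection and then run the disjoint-cycle covering argument for $i\ge 2$.
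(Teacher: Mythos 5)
Your overall strategy---start from a tree $G^0$ (so $\operatorname{dem}(G^0)=1$ by Theorem \ref{th-dem-1}), add one chord per step to create disjoint cycles, and argue that each created cycle forces exactly one new probe---is the same strategy the paper uses, but the paper's realization is far simpler: $G^0$ is the star with center $c$ and leaves $u_1,\dots,u_k,v_1,\dots,v_k$, and $G^i=G^{i-1}+u_iv_i$, so that the base graph of $G^i$ is the friendship graph $Fr(i)$ ($i$ triangles sharing $c$). No spine, no long pendant paths, and no parity considerations are needed. The lower bound there is immediate from Proposition \ref{Obs:CUV}: the only common neighbor of $u_j$ and $v_j$ is $c$, and $d_G(c,u_j)=d_{G-u_jv_j}(c,u_j)$, $d_G(c,v_j)=d_{G-u_jv_j}(c,v_j)$, so the chord $u_jv_j$ is monitored only by $u_j$ and $v_j$; hence any DEM set meets each of the $i$ disjoint pairs $\{u_j,v_j\}$. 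The upper bound is $M=\{u_1,\dots,u_i\}$, with the edges $cv_t$ picked up because $d(u_1,v_t)$ increases from $2$ to $3$ when $cv_t$ is deleted.

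The genuine gap in your proposal is exactly where you flag it, step (iv), and the specific sub-claim you lean on is false as stated: it is not true that a probe outside gadget $j$ monitors \emph{no} edge of cycle $j$. The two cycle edges incident to $s_j$ (the last edges of the two pendant paths) \emph{are} monitored from outside, since for an external $z$ every shortest $z$--$w$ path to a pendant-path vertex $w$ passes through $s_j$ and then along that pendant path, so deleting such an edge increases $d(z,w)$. What is true---and all you need---is that the added chord $a_jb_j$ itself is unmonitored from outside the gadget: for external $z$ one has $d(z,a_j)=d(z,b_j)=d(z,s_j)+t$ with shortest paths avoiding $a_jb_j$, so deleting $a_jb_j$ changes no distance from $z$. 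This is again an instance of Proposition \ref{Obs:CUV}, and it immediately yields $\operatorname{dem}(G^i)\ge i$ since the $i$ gadget interiors are disjoint; no counting over odd cycles, no quantification of ``long enough,'' and no antipodal-edge analysis is required (indeed oddness of the cycle is irrelevant for the chord). You would still owe the upper-bound verification that $\{a_1,\dots,a_i\}$ covers everything for $i\ge 2$ (each $a_j$ misses exactly one edge of its own cycle, the one incident to $s_j$, which is then covered by any $a_{j'}$, $j'\ne j$); this works, but it is more bookkeeping than the paper's two-line check on the friendship graph.
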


A \emph{feedback edge set} of a graph $G$ is a
set of edges such that removing them from
$G$ leaves a forest. The smallest size
of a feedback edge set of $G$ is
denoted by $\operatorname{fes}(G)$
(it is sometimes called the
cyclomatic number of $G$).
\begin{theorem}{\upshape\cite{FKKMR21}}
\label{Th-fes}
If $\operatorname{fes}(G) \leq 2$,
then $\operatorname{dem}(G)
\leq \operatorname{fes}(G)+1$.
Moreover, if $\operatorname{fes}(G) \leq 1$,
then equality holds.
\end{theorem}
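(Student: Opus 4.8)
The plan is to pass to the \emph{base graph} and then run a short structural case analysis according to the cyclomatic number. By Observation~\ref{Obs:G_b} we have $\operatorname{dem}(G)=\operatorname{dem}(G_b)$, and since deleting a degree-$1$ vertex lowers both the edge count and the vertex count by one, $\operatorname{fes}(G)=\operatorname{fes}(G_b)$; so I may assume $G=G_b$, hence (if $G$ has an edge) $G$ is connected with $\delta(G)\ge2$. Put $k=\operatorname{fes}(G)=e(G)-|V(G)|+1$. If $k=0$ then $G$ is a tree and $\operatorname{dem}(G)=1=k+1$ by Theorem~\ref{th-dem-1}. For $k\ge1$ the handshake identity $\sum_v d(v)=2e(G)=2|V(G)|+2(k-1)$ together with $\delta(G)\ge2$ shows the ``excess degree'' is $2(k-1)$; thus for $k=1$ every vertex has degree $2$ and $G=C_n$, while for $k=2$ either one vertex has degree $4$ and the rest degree $2$, or exactly two vertices have degree $3$ and the rest degree $2$. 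Suppressing all degree-$2$ vertices, $G$ is seen to be a subdivision of exactly one of: two loops at a common vertex (a \emph{figure-eight}: two cycles meeting at a vertex $w$); one loop at $u$, one loop at $v$, plus the edge $uv$ (a \emph{dumbbell}: two vertex-disjoint cycles joined by a path $Q$); or three parallel $uv$-edges (a \emph{theta graph}: three internally disjoint $u$--$v$ paths $P_1,P_2,P_3$ of lengths $\ell_1\le\ell_2\le\ell_3$). In a simple graph each subdivided loop becomes a cycle of length $\ge3$, and at most one of the three parallel edges stays unsubdivided.

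The one computational input I would isolate is the behaviour of $EM$ on a cycle: for a cycle $C$ and $x\in V(C)$, the forest $EM_C(x)$ consists of all edges of $C$ except the edge antipodal to $x$ when $|C|$ is odd, and except the two edges incident with the vertex $\bar x$ antipodal to $x$ when $|C|$ is even. This follows by listing, for each $z$, the one or two $x$--$z$ geodesics in $C$, and noting that an edge of $C$ lies in $EM_C(x)$ iff it lies on the \emph{unique} $x$--$z$ geodesic for some $z$. In either parity the missed edges are all incident with one vertex of $C$, so taking $M=\{x,\bar x\}$ (with $\bar x$ an endpoint of the antipodal edge in the odd case) gives $\operatorname{dem}(C_n)\le2$; as $C_n$ is not a tree, Theorem~\ref{th-dem-1} gives $\operatorname{dem}(C_n)\ge2$, hence $\operatorname{dem}(C_n)=2$. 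This settles $k=1$, with equality.

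For $k=2$ and the figure-eight or dumbbell I would exploit that the cycles (and, in the dumbbell, the joining path $Q$) are blocks hanging off cut vertices: a geodesic issued from a branch vertex that enters a given cycle stays inside it, and deleting an edge of that cycle changes only distances to vertices of that cycle. Thus for the figure-eight, $EM_G(w)=EM_{C^{(1)}}(w)\cup EM_{C^{(2)}}(w)$, whose missed edges are incident with $\bar w^{(1)}\in C^{(1)}$ and with $\bar w^{(2)}\in C^{(2)}$; then $\{w,\bar w^{(1)},\bar w^{(2)}\}$ is a DEM set of size $3$. For the dumbbell, $EM_G(u)$ restricts to $EM_{C^{(1)}}(u)$ on $C^{(1)}$, contains every (bridge) edge of $Q$, and restricts to $EM_{C^{(2)}}(v)$ on $C^{(2)}$; again the missed edges are incident with one vertex $\bar u^{(1)}$ of $C^{(1)}$ and one vertex $\bar v^{(2)}$ of $C^{(2)}$, so $\{u,\bar u^{(1)},\bar v^{(2)}\}$ is a DEM set of size $3$. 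Hence $\operatorname{dem}(G)\le3=k+1$ in both cases.

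The theta graph is the case I expect to be the real obstacle, since $P_1,P_2,P_3$ all share both endpoints and a geodesic from $u$ to an interior vertex of $P_i$ may detour through $v$ along another path. Writing $m_i=\min_{j\ne i}\ell_j$, a direct distance computation gives that the $t$-th edge of $P_i$ (counting from $u$) lies in $EM_G(u)$ exactly when $2t<\ell_i+m_i$, and symmetrically in $EM_G(v)$ exactly when $2t>\ell_i-m_i+2$; so $\{u,v\}$ monitors all of $E(G)$ unless some $m_i=1$, i.e.\ $u\sim v$ (say $\ell_1=1$). In that case one checks that the only possibly unmonitored edges are the middle edge of $P_2$ (when $\ell_2$ is odd) and the middle edge of $P_3$ (when $\ell_3$ is odd), and a short finite verification shows that replacing $\{u,v\}$ by $u$ together with a suitable interior vertex of $P_2$ and a suitable interior vertex of $P_3$ yields a DEM set of size $\le3$. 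Collecting all cases gives $\operatorname{dem}(G)\le\operatorname{fes}(G)+1$ whenever $\operatorname{fes}(G)\le2$, and the equalities established above prove the ``moreover'' part for $\operatorname{fes}(G)\le1$.
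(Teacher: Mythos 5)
This statement is quoted from \cite{FKKMR21} and the present paper gives no proof of it at all --- it is used as an imported black box (e.g.\ in the proof of Corollary~\ref{cor-e}) --- so there is no in-paper argument to compare yours against. Judged on its own, your proof is sound and follows the natural route: reduce to the base graph (both $\operatorname{dem}$ and $\operatorname{fes}$ are preserved), classify the connected graphs with $\delta\ge 2$ and cyclomatic number $k\le 2$ via the handshake lemma into tree / cycle / figure-eight / dumbbell / theta, compute $EM(x)$ on a cycle, and handle the cut-vertex cases by block decomposition. The cycle computation agrees with the paper's own Lemma~\ref{Lem:EMCN}, and the figure-eight and dumbbell arguments are correct since geodesics issued from the branch vertices do not leave their blocks.

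The one place where you assert rather than prove is the theta graph with $\ell_1=1$, and your phrase ``a short finite verification'' is not accurate: $\ell_2,\ell_3$ are unbounded, so there is genuinely something to check. The check does go through, and here is the missing choice. Write $P_2=u,a_1,\dots,a_{\ell_2-1},v$ and let $C$ be the cycle $P_1\cup P_2$ of length $\ell_2+1$. Since $\ell_3\ge 2>\ell_1=1$, every $G$-geodesic between two vertices of $C$ stays in $C$, so $EM_C(x)\subseteq EM_G(x)$ for $x\in V(C)$. Taking $a=a_{\lceil \ell_2/2\rceil}$, the edge(s) of $C$ missed by $EM_C(a)$ are exactly those incident with $u$ (namely $uv$ and $ua_1$ when $\ell_2$ is odd, and $uv$ alone when $\ell_2$ is even), all of which lie in $EM(u)$ by Theorem~\ref{Th-Ncover}; hence $\{u,a\}$ already monitors $E(P_1)\cup E(P_2)$. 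Symmetrically $\{u,b\}$ with $b=b_{\lceil \ell_3/2\rceil}$ monitors $E(P_1)\cup E(P_3)$, so $\{u,a,b\}$ is a DEM set of size $3$. With this insertion your argument is complete; everything else, including the equality claims for $\operatorname{fes}(G)\le 1$, checks out.
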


Theorem \ref{Th-fes}
implies the following corollary, and its proof will be given in Section $3$.
\begin{corollary}\label{cor-e}
Let $T_n$ be a tree of order $n$, where $n\geq 6$. For  edges
$e_1,e_2\in E(\overline{T_n})$, we have

$(1)$ $\operatorname{dem}(T_n+e_1)=\operatorname{dem}(T_n)+1$.

$(2)$ $\operatorname{dem}(T_n+\{e_1,e_2\})=2$ or $3$.
\end{corollary}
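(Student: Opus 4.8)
The plan is to read off both statements directly from Theorem \ref{Th-fes} together with the basic facts about trees and feedback edge sets. Recall that for a tree $T_n$ we have $\operatorname{fes}(T_n)=0$, and adding a single edge $e_1\in E(\overline{T_n})$ creates exactly one cycle, so $\operatorname{fes}(T_n+e_1)=1$; adding two edges $e_1,e_2$ raises the cyclomatic number by exactly $2$, so $\operatorname{fes}(T_n+\{e_1,e_2\})=2$. The hypothesis $n\geq 6$ guarantees that $E(\overline{T_n})$ contains at least two edges, so the statement is non-vacuous, and also leaves enough room that the resulting graphs are genuinely not forests.

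For part $(1)$: since $\operatorname{fes}(T_n+e_1)=1\leq 1$, the equality clause of Theorem \ref{Th-fes} applies and gives $\operatorname{dem}(T_n+e_1)=\operatorname{fes}(T_n+e_1)+1=2$. On the other hand, by Theorem \ref{th-dem-1} we have $\operatorname{dem}(T_n)=1$ because $T_n$ is a tree with at least one edge. Hence $\operatorname{dem}(T_n+e_1)=2=\operatorname{dem}(T_n)+1$, as claimed.

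For part $(2)$: since $\operatorname{fes}(T_n+\{e_1,e_2\})=2\leq 2$, the first clause of Theorem \ref{Th-fes} gives $\operatorname{dem}(T_n+\{e_1,e_2\})\leq \operatorname{fes}(T_n+\{e_1,e_2\})+1=3$. For the lower bound, note that $T_n+\{e_1,e_2\}$ contains at least one cycle and hence is not a tree, so by Theorem \ref{th-dem-1} its DEM number is at least $2$. Combining the two inequalities yields $\operatorname{dem}(T_n+\{e_1,e_2\})\in\{2,3\}$.

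The only subtlety, and the one point worth spelling out carefully, is the claim that adding $k$ edges to a tree increases the feedback edge set size by exactly $k$; I would justify this by the standard identity $\operatorname{fes}(G)=e(G)-|V(G)|+c(G)$, where $c(G)$ is the number of connected components — adding an edge $e_1\in E(\overline{T_n})$ keeps the graph connected and increases $e(G)$ by one, and the same for $e_2$ (whether or not $e_1,e_2$ share an endpoint). Everything else is an immediate substitution into the two cited theorems, so no genuine obstacle arises here.
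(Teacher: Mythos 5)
Your proposal is correct and follows essentially the same route as the paper: both deduce $\operatorname{fes}(T_n+e_1)=1$ and $\operatorname{fes}(T_n+\{e_1,e_2\})=2$, then apply the equality clause of Theorem~\ref{Th-fes} for part $(1)$ and its inequality clause together with Theorem~\ref{th-dem-1} (non-tree implies $\operatorname{dem}\geq 2$) for part $(2)$. Your explicit justification of the feedback-edge-set count via $\operatorname{fes}(G)=e(G)-|V(G)|+c(G)$ is a welcome bit of extra care that the paper leaves implicit.
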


The following result shows that there exists a graph $G$ and an induced subgraph $H$ such that the difference
$\operatorname{dem}(G)-\operatorname{dem}(H)$ can be arbitrarily large; see Section 4 for proof details. In addition, we also give an answer to the Problem \ref{QP} $(2)$.
\begin{theorem}\label{Obs:dv1}
For any positive integer $k$, there exist two graphs $G_1,G_2$ and their non-spanning subgraphs $H_1,H_2$
such that
$$
\operatorname{dem}(G_1)-\operatorname{dem}(H_1)=k \ and \ \operatorname{dem}(H_2)-\operatorname{dem}(G_2)=k.
$$
\end{theorem}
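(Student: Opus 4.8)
The plan is to build the two families separately, exploiting the two opposite effects that vertex deletion can have on $\operatorname{dem}$. For the first statement, $\operatorname{dem}(G_1) - \operatorname{dem}(H_1) = k$, I would start from a graph that is ``expensive'' to monitor and make it cheap by deleting a single well-chosen vertex; the natural candidate is to take $G_1$ to be (or to contain as its base graph) a complete graph $K_{k+2}$ together with some pendant attachments so that $G_1$ has the right order, since by Theorem~\ref{th-dem-n} and Observation~\ref{Obs:G_b} we have $\operatorname{dem}(G_1) = \operatorname{dem}(K_{k+2}) = k+1$. Deleting one vertex $u$ of the clique leaves $K_{k+1}$ (plus pendants), giving $\operatorname{dem}(G_1 \setminus u) = k$. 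That only yields a difference of $1$, so instead I would arrange that removing $u$ collapses the base graph much more drastically: attach the pendant/tree structure so that once $u$ is gone, almost every remaining vertex has degree $1$ in succession, i.e.\ $(G_1 \setminus u)_b$ is a tree (or a single edge), whence $\operatorname{dem}(G_1 \setminus u) = 1$ by Theorem~\ref{th-dem-1} and Observation~\ref{Obs:G_b}. Concretely, let $G_1$ consist of a clique $K_{k+1}$ on $\{v_1,\dots,v_{k+1}\}$ together with a vertex $u$ adjacent to every $v_i$, and then subdivide or pad appropriately; then $\operatorname{dem}(G_1)=k+1$ while $\operatorname{dem}(H_1) = \operatorname{dem}(G_1\setminus u) = 1$, a difference of $k$. (If one prefers the cleaner ratio, take $G_1 = K_{k+1}$ with a pendant path hung off one vertex so that deleting the non-pendant endpoint of that path is not what we want — the clean choice is really $G_1$ with $(G_1)_b = K_{k+1}$ and $H_1$ a subgraph whose base graph is a tree.)

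For the second statement, $\operatorname{dem}(H_2) - \operatorname{dem}(G_2) = k$, I would go the other direction: take $G_2$ with small $\operatorname{dem}$ and exhibit a subgraph $H_2$ obtained by deleting a cut vertex (or a few vertices) that splinters $G_2$ into many independent pieces each of which is hard to monitor. The model to have in mind is the ``many disjoint copies of $K_3$ (or $K_4$) joined through a common apex'': let $G_2$ be the join or the one-point-union-like graph where a single vertex $w$ is attached to each of $k$ vertex-disjoint triangles. One checks that $G_2$ can be monitored cheaply because $w$ together with a constant number of extra vertices handles all the triangle edges and the spokes (the $\operatorname{dem}$ of $G_2$ is bounded by a constant, e.g.\ using Theorem~\ref{Theorem:Upperbond} with a small vertex cover, or Theorem~\ref{Th-fes} if the cyclomatic structure is simple enough in the right subgraph). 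Removing $w$ leaves $H_2 = G_2 \setminus w$ equal to a disjoint union of $k$ triangles; since $\operatorname{dem}$ of a disjoint union is the sum over components and $\operatorname{dem}(K_3) = 2$, we get $\operatorname{dem}(H_2) = 2k$, which minus the constant $\operatorname{dem}(G_2)$ is $\ge k$ once $k$ is large; rescaling the construction (use $k$ triangles against a gadget with $\operatorname{dem}(G_2)=k$, or simply take enough triangles) makes the difference exactly $k$.

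The main obstacle, and the place where care is needed, is the \emph{upper} bounds — showing $\operatorname{dem}(G_1)$ is not larger than claimed and that $\operatorname{dem}(G_2)$ really is small. For $G_1$ this is handled by Observation~\ref{Obs:G_b} (pendant vertices are free) once the base graph is identified, and for $K_{k+1}$ the value $k$ comes from Theorem~\ref{th-dem-n}. For $G_2$ the delicate point is that adding the apex $w$ and the spokes could in principle \emph{increase} $\operatorname{dem}$; I would control this by writing down an explicit monitoring set of constant size and verifying via Theorem~\ref{Th-Ncover} that the spokes $wv$ are monitored by $w$, and that within each triangle a single chosen vertex of the triangle, using distances through $w$, monitors the two remaining triangle edges — a short case analysis on shortest paths in $G_2$. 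The lower bounds $\operatorname{dem}(H_1)\ge 1$ and $\operatorname{dem}(H_2)\ge 2k$ are immediate (the former trivial, the latter from $\operatorname{dem}(K_3)=2$ applied componentwise, which follows from Theorem~\ref{th-dem-1} since $K_3$ is not a tree, together with the fact that two vertices suffice and one does not). Everything else is bookkeeping to make the orders and the differences land on the nose.
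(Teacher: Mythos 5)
Your high-level plan (make $\operatorname{dem}$ collapse by passing to a tree-like subgraph in one direction, and explode it by splintering the graph into many components in the other) is the same idea the paper uses, but both of your concrete constructions have real gaps. For the first part, your explicit candidate is a clique $K_{k+1}$ on $\{v_1,\dots,v_{k+1}\}$ plus an apex $u$ joined to all of them -- that is, $K_{k+2}$ -- with $H_1=G_1\setminus u$. But deleting a vertex of a clique leaves a clique: $(G_1\setminus u)_b=K_{k+1}$, so $\operatorname{dem}(H_1)=k$ by Theorem~\ref{th-dem-n}, and the difference is $1$, not $k$. You notice this and say you would ``arrange that removing $u$ collapses the base graph much more drastically,'' but no padding or subdivision can make the base graph of $K_{k+2}\setminus u$ a tree, and you never exhibit a construction that does. (The statement only requires $H_1$ to be a non-spanning \emph{subgraph}, so taking $H_1$ to be a tree on a proper subset of $V(K_{k+2})$ would in fact finish this half immediately; you circle around this option in your parenthetical but do not commit to it. The paper instead takes $G_1=\widehat{K}_{2k+2}$ and $H_1=P_{2k+2}=G_1\setminus v_0$, using Proposition~\ref{Lemma:FAN GRAPH} to get $\operatorname{dem}(\widehat{K}_{2k+2})=k+1$.)

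For the second part the gap is more serious: your claim that $\operatorname{dem}(G_2)$ is ``bounded by a constant'' for the apex-plus-$k$-triangles graph is false. If $w$ is adjacent to one vertex $a_i$ of each triangle $\{a_i,b_i,c_i\}$, then $\left(N(b_i)\cup N(c_i)\right)\setminus\{b_i,c_i\}=\{a_i\}$ and the distances from $a_i$ to $b_i$ and $c_i$ are unaffected by deleting $b_ic_i$, so by Proposition~\ref{Obs:CUV} the edge $b_ic_i$ is monitored only by $b_i$ or $c_i$; hence $\operatorname{dem}(G_2)\geq k$. If instead $w$ is adjacent to all three vertices of each triangle, the same argument applies to all three triangle edges and $\operatorname{dem}(G_2)\geq 2k$, which kills the difference entirely. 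In the first variant the construction happens to give $\operatorname{dem}(G_2)=k$ and hence difference exactly $k$, but your proposal neither proves this nor even states the correct value, deferring the ``exactly $k$'' issue to an unspecified rescaling. The paper avoids all of this by taking $G_2$ to be a spider tree (a center $v_0$ with $k+1$ paths of length $2$ attached), so $\operatorname{dem}(G_2)=1$ by Theorem~\ref{th-dem-1}, and $H_2=G_2\setminus v_0\cong (k+1)K_2$ with $\operatorname{dem}(H_2)=k+1$ by Observation~\ref{Obs:disjoint}; the difference is exactly $k$ with no tuning.
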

Furthermore, $\operatorname{dem}(G)-\operatorname{dem}(H)$ can be arbitrarily large, even for $H=G\setminus v$.
\begin{theorem}\label{TH:deEV}
For any positive integer $k$, there exist two graphs $G,H$ and two vertices $u\in V(G)$,
$v\in V(H)$ such that

$(1)$ $\operatorname{dem}(G)
-\operatorname{dem}(G\setminus u)\geq k$;

$(2)$ $\operatorname{dem}(H\setminus v)
-\operatorname{dem}(H)\geq k$.
\end{theorem}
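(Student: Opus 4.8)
The plan is to prove the two parts by two separate explicit constructions, each built on the contrast between $\operatorname{dem}(T)=1$ for a tree $T$ (Theorem~\ref{th-dem-1}) and a large DEM number for a closely related graph. For part $(1)$ I would take $G=K_1\vee P_n$ (the fan) with $n=2k+2$, write $P_n=x_1x_2\cdots x_n$, and let $u$ be the apex vertex. Then $G\setminus u=P_n$ is a tree, so $\operatorname{dem}(G\setminus u)=1$ and it suffices to show $\operatorname{dem}(G)\ge k+1$. The key point is that every ``path edge'' $x_ix_{i+1}$ can be monitored only locally: since $u$ is adjacent to all of $V(P_n)$, every vertex $y\neq u$ is at distance $1$ from $u$, so $EM(u)=\{ux_1,\dots,ux_n\}$ contains no path edge; and for a path vertex $x_j$, any $y\notin N_G[x_j]$ has $y\neq u$ (as $u\in N_G[x_j]$), hence $d_G(x_j,y)=2$ with $x_j\,u\,y$ a shortest path that avoids all path edges, so the only path edges in $EM(x_j)$ are the at most two incident with $x_j$ (which do lie in $EM(x_j)$ by Theorem~\ref{Th-Ncover}). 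Consequently, for any DEM set $M$ of $G$, the set $M\cap V(P_n)$ must meet every path edge, i.e.\ it is a vertex cover of $P_n$, so $|M|\ge\beta(P_n)=\lfloor n/2\rfloor=k+1$. Therefore $\operatorname{dem}(G)-\operatorname{dem}(G\setminus u)\ge(k+1)-1=k$.

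For part $(2)$ I would let $H$ be the tree with $V(H)=\{v\}\cup\{a_i,b_i:1\le i\le k+1\}$ and $E(H)=\{va_i:1\le i\le k+1\}\cup\{a_ib_i:1\le i\le k+1\}$, that is, the spider whose $k+1$ legs are the paths $v\,a_i\,b_i$, and take $v$ to be its centre. Since $H$ is a tree, $\operatorname{dem}(H)=1$ by Theorem~\ref{th-dem-1}. Removing $v$ gives $H\setminus v=(k+1)K_2$, the disjoint union of the edges $a_ib_i$. A probe lying in one component of a disconnected graph can monitor no edge of any other component, since the relevant distances are infinite both in $G$ and in $G-e$ for such an edge $e$; hence each of the $k+1$ components requires its own probe, while a single probe per component suffices because $K_2$ is a tree. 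Thus $\operatorname{dem}(H\setminus v)=k+1$ and $\operatorname{dem}(H\setminus v)-\operatorname{dem}(H)=k$.

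The computations for $H$ and the disconnected-graph remark are routine, and the identity $G\setminus u=P_n$ is immediate; the one step demanding genuine care is the inequality $\operatorname{dem}(K_1\vee P_n)\ge\lfloor n/2\rfloor$, namely the verification that no probe---and in particular not the high-degree apex---monitors a path edge other than one incident with it, so that monitoring all path edges forces a full vertex cover of $P_n$ inside every DEM set. I expect this short shortest-path analysis in $K_1\vee P_n$ to be the main obstacle; once it is in place, both statements follow using only Theorem~\ref{th-dem-1} and Theorem~\ref{Th-Ncover}.
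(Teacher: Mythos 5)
Your proof is correct. Part $(1)$ is essentially the paper's own argument: the paper takes $G=\widehat{K}_{2k+2}=K_1\vee P_{2k+2}$ and deletes the apex, and the lower bound $\operatorname{dem}(\widehat{K}_n)\geq\lfloor n/2\rfloor$ in Proposition~\ref{Lemma:FAN GRAPH} is proved exactly as you do it, by showing each path edge is monitored only by its two endpoints (via Proposition~\ref{Obs:CUV}), so that every DEM set contains a vertex cover of $P_n$; your version, which handles all non-neighbours at once through the apex, is if anything a little cleaner, and correctly needs only the lower bound rather than the exact value (so the restriction $n\geq 7$ in the paper's proposition is irrelevant for you). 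For part $(2)$ you genuinely diverge. The paper takes $H=C(2,t)$, the conical graph, deletes the centre to obtain $C_t\square K_2$, and compares $\operatorname{dem}(C_t\square K_2)=t$ (Theorem~\ref{ThmCnPn}) with $\operatorname{dem}(C(2,t))=\lceil t/2\rceil+\lceil t/6\rceil$ (Theorem~\ref{Thm:C*(n,k)}); that computation is by far the heaviest machinery in the section. Your spider construction, with $H$ a tree and $H\setminus v\cong(k+1)K_2$, is precisely the graph $G_{2k+3}$ the paper uses inside the proof of Theorem~\ref{Obs:dv1}; the authors then deliberately switch to the conical graph because they want $H\setminus v$ to be connected, a refinement they announce after Theorem~\ref{Obs:dv1} but which the statement of Theorem~\ref{TH:deEV} does not require. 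So your route is a valid and much shorter proof of the theorem as stated, at the cost of exhibiting only a disconnected $H\setminus v$; the paper's route buys the stronger connected example.
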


For a connected graph $G$ of order $n$, where $n$ is fixed, the difference between $\operatorname{dem}(G)$ and $\operatorname{dem}(G\setminus v)$ can be bounded.

\begin{proposition}\label{pro-upper}
For a connected graph $G$ with order $n \ (n\!\geq\! 2)$ and $v\!\in\! V(G)$,
if $G\setminus v$ contains at least one edge, then
$\operatorname{dem}(G)-\operatorname{dem}(G\setminus v)\! \leq n-2$.
Moreover, the equality holds if and only if $G$ is$\;K_3$.
\end{proposition}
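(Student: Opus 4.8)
The plan is to squeeze the difference between a universal upper bound for $\operatorname{dem}(G)$ and a universal lower bound for $\operatorname{dem}(G\setminus v)$, and then run the equality analysis through the known characterizations of graphs with extremal DEM number.

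First I would record the two bounds. On the one hand, by the bound $\operatorname{dem}(G)\le n-1$ of Foucaud et al.\ quoted before Theorem~\ref{th-dem-1} (equivalently, Theorem~\ref{Theorem:Upperbond} together with $\beta(G)\le n-1$), every connected graph of order $n$ satisfies $\operatorname{dem}(G)\le n-1$. On the other hand, since $G\setminus v$ contains at least one edge, the empty set is not a DEM set of $G\setminus v$, and hence $\operatorname{dem}(G\setminus v)\ge 1$ (if $G\setminus v$ happens to be disconnected, apply this to the component containing an edge). Subtracting gives $\operatorname{dem}(G)-\operatorname{dem}(G\setminus v)\le (n-1)-1=n-2$, which proves the inequality. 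Note this also forces $n\ge 3$, since for $n=2$ the graph $G\setminus v$ is a single vertex and has no edge.

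Next I would analyze when equality holds. If $\operatorname{dem}(G)-\operatorname{dem}(G\setminus v)=n-2$, then because $\operatorname{dem}(G)\le n-1$ and $\operatorname{dem}(G\setminus v)\ge 1$, both estimates must be tight: $\operatorname{dem}(G)=n-1$ and $\operatorname{dem}(G\setminus v)=1$. By Theorem~\ref{th-dem-n}, $\operatorname{dem}(G)=n-1$ forces $G=K_n$, hence $G\setminus v=K_{n-1}$. Since $G\setminus v$ contains an edge we have $n-1\ge 2$, so by Theorem~\ref{th-dem-n} again (or Theorem~\ref{th-dem-1} in the case $n-1=2$, where $K_2=P_2$ is a tree) we get $\operatorname{dem}(K_{n-1})=n-2$. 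The requirement $\operatorname{dem}(G\setminus v)=1$ then reads $n-2=1$, i.e.\ $n=3$, so $G=K_3$. Conversely, if $G=K_3$ then for any $v\in V(K_3)$ we have $G\setminus v=P_2$, so $\operatorname{dem}(K_3)-\operatorname{dem}(P_2)=2-1=1=n-2$, and equality holds.

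The argument is short; the only points needing a little care are the lower bound $\operatorname{dem}(G\setminus v)\ge 1$ when $G\setminus v$ may be disconnected, and checking the small-order degenerate cases ($n=2$ and $n=3$) so that the invoked characterization theorems apply. I do not anticipate a genuine obstacle beyond this bookkeeping.
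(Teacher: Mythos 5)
Your proposal is correct and follows essentially the same route as the paper: bound the difference by $(n-1)-1$ using $\operatorname{dem}(G)\le n-1$ and $\operatorname{dem}(G\setminus v)\ge 1$, then force $\operatorname{dem}(G)=n-1$ and $\operatorname{dem}(G\setminus v)=1$ in the equality case and invoke Theorem~\ref{th-dem-n} to pin down $G=K_n$ and hence $n=3$. Your direct computation $\operatorname{dem}(K_{n-1})=n-2=1$ is a slightly cleaner way to finish than the paper's case split on $|V(G)|$, and your remark about handling a possibly disconnected $G\setminus v$ is a point the paper glosses over, but the substance is identical.
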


\begin{theorem}\label{th-dem-2}
Let $G$ be a connected graph with
order $n\geq 4$ and
$\operatorname{dem}(G) = 2$.
Let $E\subseteq E(G)$.
If $\operatorname{dem}(G)=
\operatorname{dem}(G-E)$,
then $|E| \leq 2n-6$.
Furthermore, the bound is sharp.
\end{theorem}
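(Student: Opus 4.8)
The plan is to split the statement into two parts: first the upper bound $|E|\le 2n-6$, then the sharpness construction. For the upper bound, I would start from a $\operatorname{dem}$-basis $M=\{x,y\}$ of $G$ (which exists since $\operatorname{dem}(G)=2$), and exploit Theorem \ref{Th-forest}: each of $EM(x)$ and $EM(y)$ induces a forest in $G$. If $\operatorname{dem}(G-E)=2$ as well, then $G-E$ is not a tree (by Theorem \ref{th-dem-1}), so $G-E$ contains a cycle and hence $e(G-E)\ge n$. The key observation is that $E(G)=EM(x)\cup EM(y)$ for a basis $M$, and a forest on $n$ vertices has at most $n-1$ edges, so $e(G)\le 2(n-1)=2n-2$. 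Combining, $|E|=e(G)-e(G-E)\le (2n-2)-n=n-2$ — but this only gives $n-2$, not $2n-6$, so a cruder argument is not enough; I will need to look more carefully at the structure forced when $\operatorname{dem}(G-E)=2$.

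The refinement I expect to need: when $\operatorname{dem}(G-E)=2$, $G-E$ must still admit a $2$-element DEM set, and by Theorem \ref{th-dem-n} (and the characterization work cited for $\operatorname{dem}=3$) the graphs with small $\operatorname{dem}$ are constrained. A better route is to bound $e(G)$ more tightly: since $EM(x)$ and $EM(y)$ are forests and together cover $E(G)$, consider the edges incident to $x$ and to $y$. By Theorem \ref{Th-Ncover}, all edges at $x$ lie in $EM(x)$ and all edges at $y$ lie in $EM(y)$; Theorem \ref{Theorem:Upperbond} combined with $\operatorname{dem}(G)=2$ tells us $\beta(G)\ge 2$, and more usefully one should analyze $G\setminus\{x,y\}$. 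I would argue that $G\setminus\{x,y\}$ cannot contain too many edges — each such edge $e$ must be monitored, so it lies on all shortest $x$–$z$ or $y$–$z$ paths for some $z$, and the forest structure of $EM(x), EM(y)$ restricts how many mutually "independent" such edges can coexist. Pushing this through should yield $e(G)\le 2n-4$ or similar, and combined with $e(G-E)\ge n$ (cycle) plus a $\pm 2$ correction from the degree-$1$ vertices and the base-graph reduction (Observation \ref{Obs:G_b}), arrive at $|E|\le 2n-6$.

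For sharpness, I would exhibit a specific family: take a graph $G$ on $n$ vertices with $\operatorname{dem}(G)=2$, $e(G)$ as large as possible subject to that, and a spanning subgraph $G-E$ that is, say, a single cycle $C_n$ (which has $\operatorname{dem}=2$ by Theorem \ref{Th-fes}, since $\operatorname{fes}(C_n)=1$) plus possibly a chord; then $E$ consists of exactly $e(G)-n$ (or $e(G)-(n+1)$) edges. The candidate is something like $K_{2,n-2}$ together with the edge joining the two vertices of the small side: $K_{2,n-2}$ has $\operatorname{dem}=2$ and $2(n-2)$ edges, so deleting down to an appropriate $2$-DEM spanning subgraph with $6$ edges left... actually I would more likely use $K_2 \vee \overline{K_{n-2}}$-type graphs (the "book"/fan-like constructions the authors favor in Proposition \ref{Obs:EST}), verify $\operatorname{dem}=2$ directly via $EM(x)\cup EM(y)$, and count: with $e(G)=2n-4$ and $G-E$ a minimal $2$-DEM spanning subgraph having $n+2$ edges, one gets $|E|=2n-6$ exactly.

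The main obstacle I anticipate is the upper bound, specifically squeezing the bound from the easy $n-2$ down to $2n-6$ — wait, that is the wrong direction: $2n-6 > n-2$ for $n>4$, so in fact the easy forest argument already gives the \emph{stronger} bound $|E|\le n-2$, which would contradict sharpness of $2n-6$ unless I have mis-identified $e(G)$. Resolving this tension is the crux: the point must be that a $2$-element DEM set $M=\{x,y\}$ need not have $E(G)=EM(x)\cup EM(y)$ be a disjoint union, and more importantly $G-E$ having $\operatorname{dem}=2$ does \emph{not} force a full cycle of length $n$ — it only forces $G-E$ to be non-tree, so $e(G-E)\ge n$ can fail to be tight in a way that helps, OR $e(G)$ can exceed $2n-2$ because the two covering forests of a $\operatorname{dem}$-\emph{basis} can overlap while a non-basis covering by two forests... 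Here I would need to recheck whether $E(G)=EM(x)\cup EM(y)$ can have up to $2(n-1)$ edges with large overlap, making $e(G)$ as small as $n-1+(\text{overlap budget})$; the real content is a careful double-counting of $|EM(x)\cap EM(y)|$ versus $|EM(x)\cup EM(y)|$, and that balancing act is where the $2n-6$ constant comes from.
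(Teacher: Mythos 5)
Your setup is the right one --- take a DEM basis $M=\{u,v\}$, use Theorem \ref{Th-forest} to get $e(G)\le 2(n-1)$, and use Theorem \ref{th-dem-1} to get a lower bound on $e(G-E)$ --- and this is exactly how the paper begins. But you then derive $e(G-E)\ge n$ from ``$G-E$ contains a cycle,'' which silently assumes $G-E$ is connected. The theorem does not require this, and the extremal example shows it must fail: the paper takes $G=(n-2)K_1\vee K_2$ (essentially the graph you name, though its size is $e(G)=2n-3$, not $2n-4$) and deletes all $2n-6$ edges joining $\{v_1,v_2\}$ to $v_4,\dots,v_n$, leaving $K_3$ plus $n-3$ isolated vertices, which still has $\operatorname{dem}=2$ by Observation \ref{Obs:disjoint}. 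So the usable lower bound is only $e(G-E)\ge 3$ (a cycle has at least three edges), and the ``tension'' you flag at the end --- your provisional bound $n-2$ being smaller than the claimed sharp value $2n-6$ --- is resolved not by any overlap budget between $EM(x)$ and $EM(y)$ but by dropping the connectivity assumption on $G-E$. You correctly sense that something is wrong but never land on this, so neither the upper bound nor a correct sharpness witness is actually produced.

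The remaining content, which your sketch also does not reach, is the case analysis that closes the gap between $2n-5$ and $2n-6$: if $uv\in E(G)$ or $EM(u)\cap EM(v)\ne\emptyset$, then $e(G)\le 2n-3$ and $|E|\le (2n-3)-3=2n-6$ immediately; if $EM(u)\cap EM(v)=\emptyset$ one only gets $e(G)\le 2n-2$, hence $|E|\le 2n-5$, and the paper rules out equality by noting that then both $EM(u)$ and $EM(v)$ induce spanning trees of $G$ while $G-E$ is a triangle plus isolated vertices, whence some edge incident to $u$ would have to lie in both $EM(u)$ and $EM(v)$ (via Theorem \ref{Th-Ncover}), a contradiction. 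Your proposed ``double-counting of $|EM(x)\cap EM(y)|$'' gestures in this direction, but as written the proposal leaves the crux unresolved.
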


For $H\sqsubseteq G$, the \emph{DEM set of $H$ in $G$}
is a set $M\subseteq V(H)$ such that $E(H) \subseteq \bigcup\limits_{x\in M}EM(x)$.
\begin{definition}
For $H\sqsubseteq G$,
the \emph{restrict-DEM number} $\operatorname{dem}(G|_H)$ of a graph $G$ is defined as the smallest size of a DEM set of $H$ in $G$, that is,
$$
\operatorname{dem}(G|_H)=\min\left\{|M|\Big| E(H) \subseteq \cup_{x\in M} EM(x),
M\subseteq V(H)\right\}.
$$
\end{definition}

\begin{figure}[!h]
\vspace*{-5mm}
\centering
\includegraphics[width=7cm]{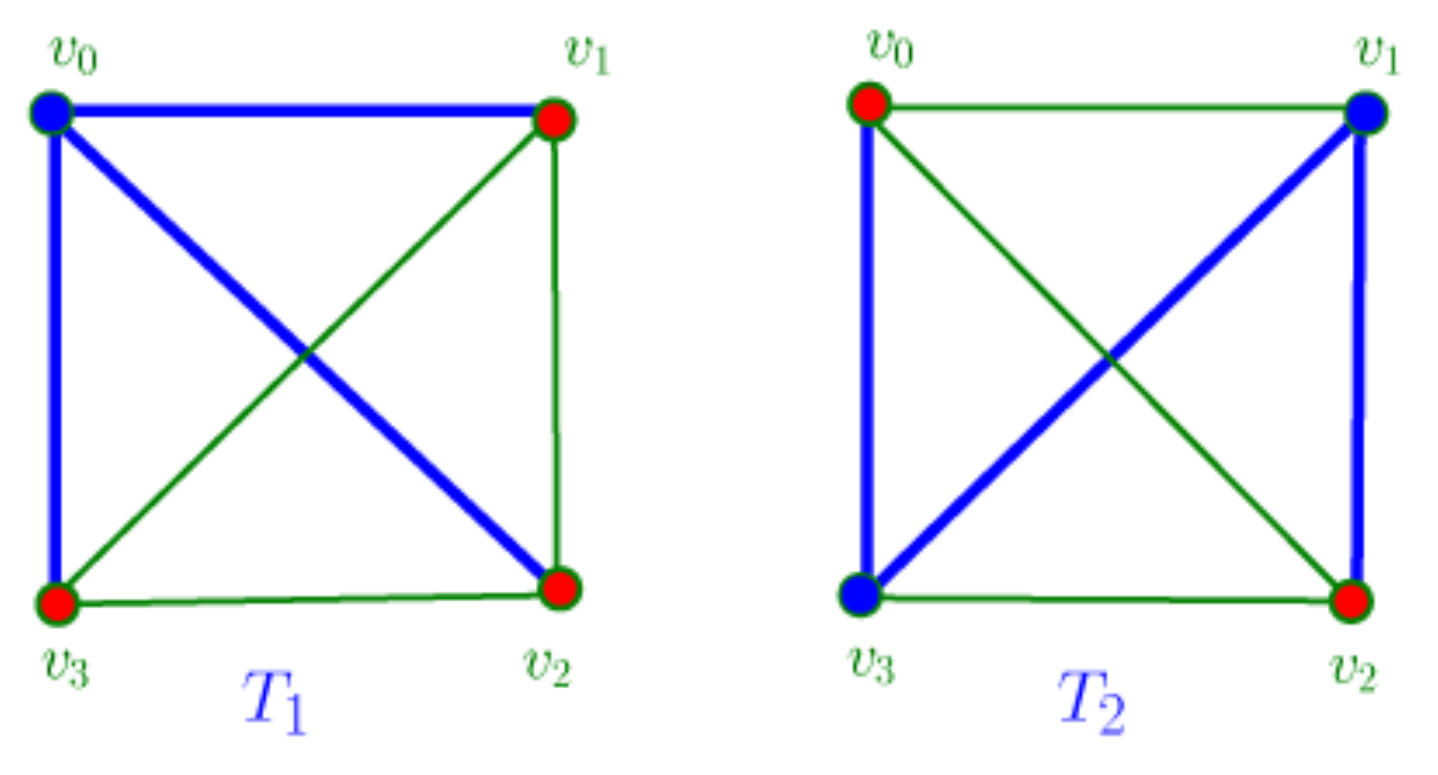}\vspace*{-5mm}
\caption{The blue edges are those of trees $T_1$ and $T_2$ in $K_4$.}
\label{Fig:Tree}\vspace*{-3mm}
\end{figure}

\begin{example}
Let $G=K_4$ with $V(G)=\{v_0, v_1,v_2,v_3\}$ and $E(G)=\{v_iv_j\,|\,0\leq i<j\leq 3\}$.
Let $T_1$ and $T_2$ be the subgraphs of $G$ with $E(T_1)=\{v_0v_1,v_0v_2,v_0v_3\}$
and $E(T_2)=\{v_0v_3,v_3v_1,v_1v_2\}$.
Then, $\operatorname{dem}(K_4|_{T_1})=1$ and $\operatorname{dem}(K_4|_{T_2})=2$.
The DEM set of subgraph $T_i$ ($i=1,2$) in $K_4$ is shown in
Figure~\ref{Fig:Tree},
where the blue vertices form the set $M$.
The reason as follows.\\
Let $M_1=\{v_0\}$.
Since $v_0v_1,v_0v_2,v_0v_3 \in EM(v_0)$,
it follows that
$\operatorname{dem}(K_4|_{T_1})\leq 1$.
Obviously, $\operatorname{dem}(K_4|_{T_1})\geq 1$, and hence $\operatorname{dem}(K_4|_{T_1})=1$.
Then, we prove that $\operatorname{dem}(K_4|_{T_2})=2$.
Since $d_G(v_0,v_1)=d_{G-v_1v_2}(v_0,v_1)=1$ and
$d_G(v_0,v_2)=d_{G-v_1v_2}(v_0,v_2)=1$,
it follows that $v_1v_2\notin EM(v_0)$.
Similarly, $v_1v_3\notin  EM(v_0)$.
Therefore, $v_1v_2,v_1v_3 \notin  EM(v_0)$.
By a similar argument, we have
$v_0v_3\notin  EM(v_1)$,
$v_1v_3,v_0v_3\notin  EM(v_2)$ and
$v_1v_2\notin  EM(v_3)$, and hence
$\operatorname{dem}(K_4|_{T_2})\geq 2$.
Let $M=\{v_1,v_3\}$. Then,
$v_1v_2,v_1v_3\in  EM(v_1)$,
$v_1v_3,v_0v_3\in  EM(v_3)$, and hence $\operatorname{dem}(K_4|_{T_1})\leq 2$.
Therefore, we have $\operatorname{dem}(K_4|_{T_2})=2$, and so $\operatorname{dem}(K_4|_{T_i})=i$ ($i=1,2$).
\end{example}

\begin{theorem}\label{The:sTN}
Let $T$  be a spanning
tree of $K_n$. Then
$1 \leq \operatorname{dem}(K_n|_T) \leq
\lfloor n/2\rfloor.$
Furthermore, the bound is sharp.
\end{theorem}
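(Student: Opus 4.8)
The plan is to identify $\operatorname{dem}(K_n|_T)$ with the vertex covering number $\beta(T)$ of the tree $T$, and then to bound $\beta(T)$ using the bipartiteness of $T$. First I would analyze which vertices of $K_n$ can monitor a given edge. Fix $e=uv\in E(K_n)$ and recall that $EM(\cdot)$ is computed in $G=K_n$. In $K_n$ every pair of distinct vertices is at distance $1$; deleting $e$ changes only the distance between $u$ and $v$, which becomes $2$ via any third vertex (here $n\geq 3$), while all other pairwise distances remain $1$. Hence $e\in EM(x)$ if and only if $x\in\{u,v\}$. (For $n=2$ the asserted range $1\le\operatorname{dem}(K_2|_T)\le 1$ is immediate since $K_2$ is a single edge.)

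Consequently, a set $M\subseteq V(K_n)=V(T)$ is a DEM set of $T$ in $K_n$ precisely when every edge $uv\in E(T)$ has an endpoint in $M$; that is, $M$ is a DEM set of $T$ in $K_n$ if and only if $M$ is a vertex cover of $T$. Therefore $\operatorname{dem}(K_n|_T)=\beta(T)$, and it remains to show $1\le\beta(T)\le\lfloor n/2\rfloor$ for a tree $T$ on $n\geq 2$ vertices. The lower bound is clear because $T$ has at least one edge, forcing $|M|\ge 1$. For the upper bound I would use that $T$, being a tree, is bipartite: writing its bipartition as $A\cup B$ with $|A|\le|B|$, we have $|A|\le\lfloor n/2\rfloor$, and since every edge of $T$ meets $A$, the set $A$ is a vertex cover, so $\beta(T)\le|A|\le\lfloor n/2\rfloor$.

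For sharpness of the lower bound, take $T$ to be the star $K_{1,n-1}$ as a spanning tree of $K_n$: its centre alone covers all edges of $T$, so $\operatorname{dem}(K_n|_{K_{1,n-1}})=\beta(K_{1,n-1})=1$. For sharpness of the upper bound, take $T$ to be a spanning path $P_n$ with vertices $v_1,\dots,v_n$ in order; the edges $v_1v_2, v_3v_4,\dots$ form a matching of size $\lfloor n/2\rfloor$, and any vertex cover must select an endpoint from each of these disjoint edges, so $\beta(P_n)\ge\lfloor n/2\rfloor$, while the bipartition argument gives $\beta(P_n)\le\lfloor n/2\rfloor$. Hence $\operatorname{dem}(K_n|_{P_n})=\lfloor n/2\rfloor$, which shows both bounds are attained.

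The only substantive step is the first one — observing that in a complete graph an edge can be monitored solely by its own two endpoints — after which the statement collapses to elementary facts about vertex covers of trees; I anticipate no real obstacle beyond handling the degenerate case $n=2$ and recording the standard bipartiteness argument.
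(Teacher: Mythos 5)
Your proposal is correct and follows essentially the same route as the paper: both identify $\operatorname{dem}(K_n|_T)$ with $\beta(T)$ (the paper via Proposition \ref{Obs:CUV} and Theorem \ref{Theorem:Upperbond}, you by the direct observation that in $K_n$ only the endpoints of an edge can monitor it), then bound $\beta(T)$ by the bipartition of the tree and exhibit the star and the Hamiltonian path for sharpness. No gaps; your explicit handling of $n=2$ is a minor extra care the paper omits.
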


In Section $5$, we focus on the following problem and give an algorithm
to judge whether the DEM set is still valid in the resulting graph when any edge (or vertex) of a graph $G$ is deleted.
\begin{problem}\label{Q4}
For any graph $G$, if some edges or vertices in $G$ is deleted, we want to
know whether the original DEM set can monitor all edges.
\end{problem}

\section{Results for Problem 2}

A \emph{kite} $K(r, n)$ is a graph obtained from the
complete graph $K_{r+1}$ and a path $P_{n-r}$ by attaching a vertex of $K_{r+1}$ and one end-vertex of $P_{n-r}$;
see an example of $K{(7, 12)}$ in Figure \ref{K59}.

\begin{figure}[!htbp]
 \centering
  \includegraphics[width=8.42cm]{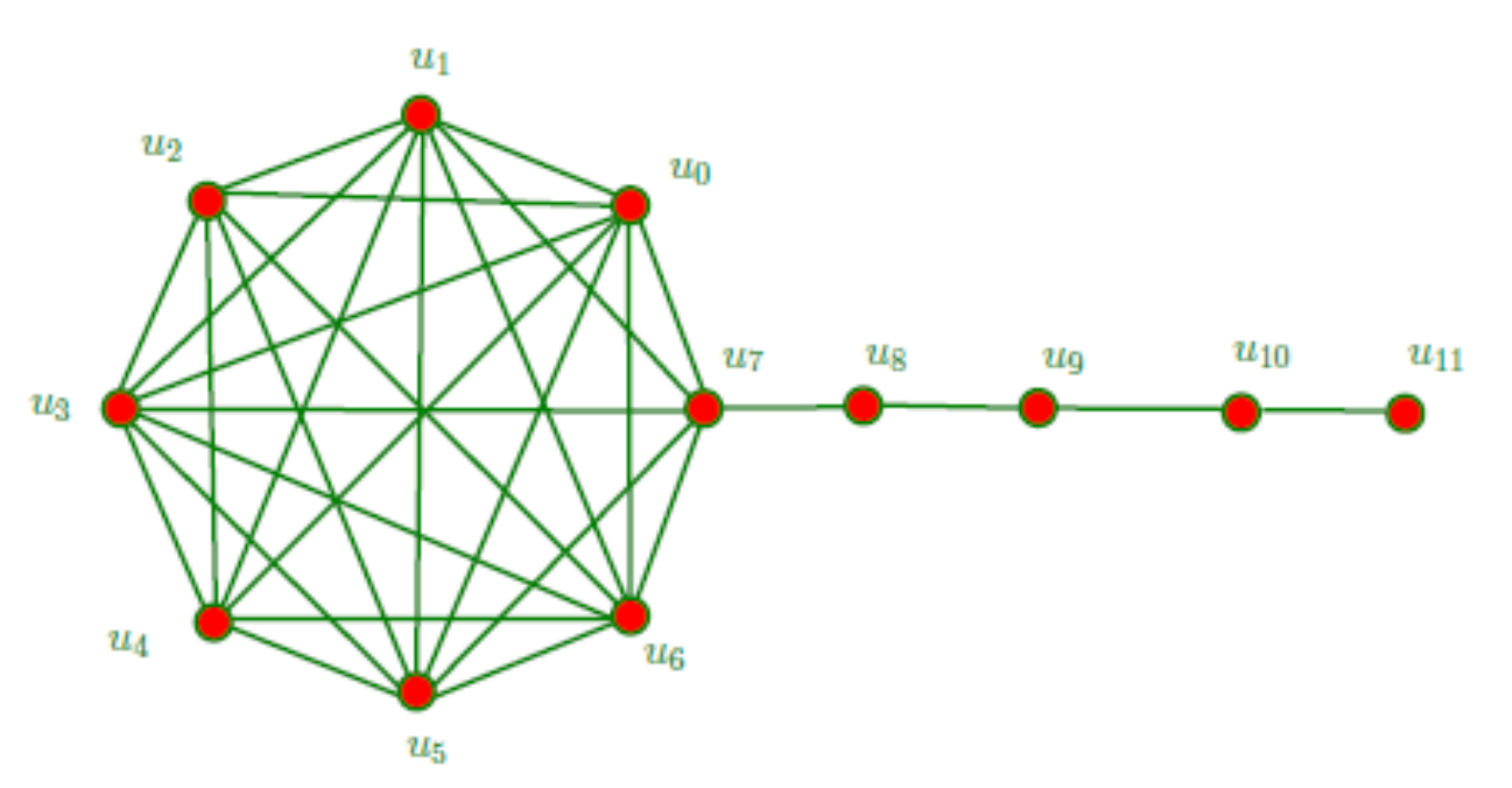}\vspace*{-4mm}
  \caption{The graph $K{(7, 12)}$}
  \label{K59}
\end{figure}

We first give the proof of Proposition \ref{Obs:EST}. \\[0.1cm]

\noindent {\bf Proof of Proposition \ref{Obs:EST}:}
Let $G=K{(r, n)}$ with $V(G)=\{u_i\,|\,0\leq i\leq n-1\}$ and
$E(G)=\{u_iu_{j}\,|\,0\leq i<j\leq r\}$
$\cup \{u_{r+s}u_{r+s+1}\,|\,0 \leq  s\leq n-r-2\}$. From Observation  \ref{Obs:G_b} and
Theorem \ref{th-dem-n},
we have $\operatorname{dem}(G)=
\operatorname{dem}(G_b)=
\operatorname{dem}(K_{r+1})=r.$
In fact, for the above $G$, the path $P_{n-r-1}$
can be replaced by $T_{n-r-1}$, where $T_{n-r-1}$ is any
tree of order $n-r-1$.\QED \medskip

Proposition \ref{Obs:EST} shows that Corollary \ref{cor:ESTC}
is true.
For three integers $s, t, n$ with
$1 \leq s \leq t \leq n-1$,
let $G=K{(t, n)}$ and $H=K{(s, n)}\sqsubseteq G$.
From Proposition \ref{Obs:EST},
$\operatorname{dem}(G)=t$ and $\operatorname{dem}(H)=s$.
Therefore, there exists a connected graph
$H\sqsubseteq G$ such that $\operatorname{dem}(H)=s$ and $\operatorname{dem}(G)=t$.

This gives an answer about Problem
\ref{Qst}, see  Corollary \ref{cor:ESTC}.
One might guess that if $H$ is a subgraph of $G$, then $\operatorname{dem}(H)\leq \operatorname{dem}(G)$, however we will show in the next section that there is no monotonicity for
the DEM number.

\section{The effect of deleted edge}

The following observation is immediate.
\begin{observation}\label{Obs:disjoint}
Let $G_1,G_2,...,G_m$ be the connected components of $G$. Then
$$\operatorname{dem}(G)=
\operatorname{dem}\left(G_1\right)+\cdots
+\operatorname{dem}\left(G_m\right).$$
\end{observation}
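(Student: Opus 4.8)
The statement to prove is Observation~\ref{Obs:disjoint}: if $G_1,\dots,G_m$ are the connected components of $G$, then $\operatorname{dem}(G)=\sum_{i=1}^m\operatorname{dem}(G_i)$.

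The plan is to prove the two inequalities separately, using the key fact that shortest paths in $G$ stay within a single connected component. First I would record the basic observation that for vertices $x,y$ in different components, $d_G(x,y)=\infty$ and no finite edge lies on a shortest $x$-$y$ path, so $P(\{x\},e)$ can only be nonempty when $x$ and the witness $y$ lie in the same component as $e$; equivalently, for $x\in V(G_i)$ we have $EM_G(x)=EM_{G_i}(x)\subseteq E(G_i)$, since distances from $x$ to vertices of $G_i$ computed in $G$ agree with those computed in $G_i$, and removing an edge $e\in E(G_j)$ with $j\ne i$ changes no distance involving $x$.

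For the upper bound $\operatorname{dem}(G)\le\sum_i\operatorname{dem}(G_i)$: take a DEM basis $M_i$ of each $G_i$ and set $M=\bigcup_i M_i$. Then $\bigcup_{x\in M}EM_G(x)=\bigcup_i\bigcup_{x\in M_i}EM_{G_i}(x)=\bigcup_i E(G_i)=E(G)$, so $M$ is a DEM set of $G$ and $\operatorname{dem}(G)\le|M|=\sum_i|M_i|=\sum_i\operatorname{dem}(G_i)$. For the lower bound $\operatorname{dem}(G)\ge\sum_i\operatorname{dem}(G_i)$: let $M$ be a DEM basis of $G$ and put $M_i=M\cap V(G_i)$. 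Since every edge $e\in E(G_i)$ is monitored by some $x\in M$, and by the observation above only vertices of $M_i$ can monitor edges of $G_i$ (as $EM_G(x)\cap E(G_i)=\emptyset$ for $x\notin V(G_i)$), we get $E(G_i)\subseteq\bigcup_{x\in M_i}EM_{G_i}(x)$, so $M_i$ is a DEM set of $G_i$ and $|M_i|\ge\operatorname{dem}(G_i)$. Summing over $i$ and using that the $M_i$ partition $M$ gives $\operatorname{dem}(G)=|M|=\sum_i|M_i|\ge\sum_i\operatorname{dem}(G_i)$.

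There is essentially no obstacle here; the only thing that needs a clean statement is the component-locality of $EM$, namely that $EM_G(x)=EM_{G_i}(x)$ for $x\in V(G_i)$ — this is where one uses that shortest paths, and hence the quantities $d_G(x,y)$ and $d_{G-e}(x,y)$, are unaffected by edges in other components. Once that is in place, both inequalities are a one-line counting argument, so the proof is short.
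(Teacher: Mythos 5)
Your proof is correct; the paper itself offers no proof of this observation (it is labelled ``immediate''), and your argument---establishing the component-locality $EM_G(x)=EM_{G_i}(x)$ for $x\in V(G_i)$ and then deriving both inequalities by taking unions and restrictions of DEM bases---is exactly the standard argument the authors leave implicit. The only point worth noting is the convention, stated in the paper right after the observation, that $\operatorname{dem}(K_1)=0$, which your argument accommodates since $E(K_1)=\emptyset$ is monitored by the empty set.
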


Furthermore, we suppose that the DEM number of $K_1$ is $0$.

\begin{proposition}\label{Obs:CUV}
For any $uv\in E(G)$,
$uv \notin EM(w)$ for $w \in
\left(N_G(u)\cup N_G(v)\right)\setminus \{u,v\}$
if and only if
$uv$ is only monitored by $u$ and $v$.
\end{proposition}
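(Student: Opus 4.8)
The plan is to unwind the definitions on both sides. Recall that an edge $uv$ is \emph{monitored by $u$} means $uv \in EM(u)$, i.e.\ there is some vertex $z$ with $d_G(u,z) \neq d_{G-uv}(u,z)$; by Theorem~\ref{Th-Ncover} this is automatic since $v \in N(u)$, so in fact $uv \in EM(u) \cap EM(v)$ for every edge. Thus the phrase ``$uv$ is only monitored by $u$ and $v$'' should be read as: for every vertex $w \in V(G) \setminus \{u,v\}$, we have $uv \notin EM(w)$. So the content of the proposition is really that the range $w \in V(G)\setminus\{u,v\}$ in this last statement can be shrunk to $w \in (N_G(u) \cup N_G(v)) \setminus \{u,v\}$, i.e.\ a vertex outside the closed neighborhoods of $u$ and $v$ can never monitor $uv$ anyway.

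The forward direction is trivial set inclusion: if $uv \notin EM(w)$ for all $w$ in the larger set $V(G)\setminus\{u,v\}$, it certainly holds for all $w$ in the subset $(N_G(u)\cup N_G(v))\setminus\{u,v\}$. For the reverse direction, I would argue the contrapositive-style statement: suppose $w \notin N_G(u) \cup N_G(v)$ and $w \neq u,v$; I claim $uv \notin EM(w)$. The key observation is that if $uv$ lies on \emph{some} shortest $w$-$y$ path for some $y$, then one endpoint of $uv$, say $u$, satisfies $d_G(w,u) = d_G(w,v) - 1$ (the path crosses the edge from the $u$-side to the $v$-side), and moreover $d_G(w,u) \geq 1$. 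Now for $uv$ to actually be monitored by $w$ we need $uv$ to lie on \emph{all} shortest $w$-$y$ paths for some $y$; in particular $uv$ lies on all shortest $w$-$v$ paths (taking $y=v$ works because if $uv$ is on every shortest $w$-$y$ path it is on every shortest $w$-$v$ subpath). But a neighbor $u'$ of $w$ on a shortest $w$-$v$ path satisfies $d_G(u',v) = d_G(w,v) - 1$; since $w \notin N_G(v)$ we have $d_G(w,v) \geq 2$, so $u' \neq v$, and since $w \notin N_G(u)$, chasing the path one can exhibit a shortest $w$-$v$ path that reaches $v$ via a neighbor of $v$ different from $u$ — hence $uv$ is not on all shortest $w$-$v$ paths, contradiction. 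More cleanly: I would show directly that $d_G(w, z) = d_{G-uv}(w,z)$ for every $z$, by taking any shortest $w$-$z$ path in $G$ using $uv$ and rerouting it; since $w$ is adjacent to neither $u$ nor $v$, the first edge of such a path is some $ww'$ with $w' \neq u,v$, and an inductive/minimal-counterexample argument on $d_G(w,z)$ transfers the problem to the pair $(w', z)$ which lies in a strictly smaller ball, eventually reaching the base case handled by the neighborhood hypothesis applied at $u$ or $v$.

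I expect the main obstacle to be making the ``rerouting'' step fully rigorous without circularity: one must be careful that removing $uv$ does not increase \emph{any} distance from $w$, and the cleanest route is probably to prove the auxiliary fact that if $e=uv$ and $w \notin N_G[u] \cup N_G[v]$ then $EM$-membership of $e$ by $w$ forces $e$ to lie on all shortest paths from $w$ to $u$ or from $w$ to $v$ (a standard reduction: if $e$ is on all shortest $w$-$y$ paths, then $e$ is on all shortest $w$-$y'$ paths for $y'$ the farther endpoint of $e$ from $w$ among $\{u,v\}$ as well), and then derive a contradiction from $d_G(w,u), d_G(w,v) \geq 2$ exactly as in the geodesic-subpath argument above. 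Everything else is definition-chasing, and the whole proof is short.
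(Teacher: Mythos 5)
Your reduction of the statement to ``for every $w\in V(G)\setminus\{u,v\}$ we have $uv\notin EM(w)$'' and your observation that one of the two directions is a trivial set inclusion are both correct, as is the auxiliary fact that $uv\in EM(w)$ forces $uv$ to lie on all shortest paths from $w$ to one of its endpoints. The gap is in the hard direction: you assert, and spend most of the proposal trying to prove, the \emph{unconditional} claim that a vertex $w\notin N_G[u]\cup N_G[v]$ can never monitor $uv$. That claim is false. Take $G=P_5$ with vertices $v_1,v_2,v_3,v_4,v_5$, the edge $uv=v_3v_4$ and $w=v_1$: then $d_G(v_1,v_4)=3$ while $d_{G-v_3v_4}(v_1,v_4)=\infty$, so $v_3v_4\in EM(v_1)$ even though $v_1$ is adjacent to neither endpoint. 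Consequently the step ``chasing the path one can exhibit a shortest $w$-$v$ path that reaches $v$ via a neighbor of $v$ different from $u$'' cannot be justified: in this example the unique shortest $v_1$-$v_4$ path enters $v_4$ through $u=v_3$ and no rerouting exists. (This does not contradict the proposition, because its hypothesis already fails at $v_2$; the point is that your main argument never invokes that hypothesis, so it is proving a false statement.)

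The correct argument must thread the hypothesis through to the far vertices, and only your closing remark about ``the base case handled by the neighborhood hypothesis'' points in that direction; this is essentially what the paper does. For $x\notin N_G[u]\cup N_G[v]$, a shortest $x$-$u$ path first meets $N_G(u)\cup N_G(v)$ at some vertex $w_1\notin\{u,v\}$; the segment from $x$ to $w_1$ avoids the edge $uv$, so $d_G(x,w_1)=d_{G-uv}(x,w_1)$, and the hypothesis $uv\notin EM(w_1)$ gives $d_G(w_1,u)=d_{G-uv}(w_1,u)$, whence $d_G(x,u)=d_G(x,w_1)+d_G(w_1,u)\geq d_{G-uv}(x,u)\geq d_G(x,u)$ and all distances from $x$ to $u$ (and similarly to $v$) are preserved; your own reduction then yields $uv\notin EM(x)$. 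As written, the proposal establishes the trivial direction and offers two mutually inconsistent sketches for the other, of which only the undeveloped second one can be salvaged.
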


\begin{proof}
Since $w \in \left(N_G(u)\cup N_G(v)\right)\setminus \{u,v\}$ and $uv \notin EM(w)$, it follows that
$d_G(w,u)=d_{G - uv}(w,u)$ and $d_G(w,v)=d_{G- uv}(w,v)$.
For any $x \in V(G)- N_G[u]\cup N_G[v]$,
the path from $x$ to $u$ must through
$w_1$, where $w_1 \in \left(N_G(u)\cup N_G(v)\right)\setminus \{u,v\}$.
Then
$d_G(x,u)=d_G(x,w_1)+d_G(w_1,u)=
d_G(x,w_1)+d_{G-uv}(w_1,u)=
d_{G-uv}(x,w_1)+d_{G-uv}(w_1,u)=
d_{G-uv}(x,u)$.
Similarly, $d_G(x,v)=d_{G-uv}(x,v)$.
For any $x\in V(G)-\{u,v\}$,
we have
$uv\notin EM(x)$.
From Theorem
\ref{Th-Ncover},
$uv\in EM(u)$ and $uv\in EM(v)$,
and hence $uv$ is only monitored by the vertex in
$\{u,v\}$.

\smallskip
Conversely, if
$uv$ is only monitored by $u$ and $v$,
then $uv\notin EM(w)$ for any $w \in V(G)\setminus\{u,v\}$,
Especially, since $\left(N_G(u)\cup N_G(v)\right)\setminus \{u,v\} \subseteq V(G)\setminus\{u,v\}$,
it follows that $uv \notin EM(w)$ for $w \in
\left(N_G(u)\cup N_G(v)\right)\setminus \{u,v\}$, as desired.
\end{proof}

Then, we give the proof of Theorem \ref{th-Difference}. \\

\noindent {\bf Proof of Theorem \ref{th-Difference}:}
If $G$ is a disconnected graph, then the edge $e$ must be in some connected component $G_1$ of $G$ for any $e\in E(G)$, and hence
$e$ can only be monitored by the vertex in $V(G_1)$.
Therefore, we just need consider the graph $G$ which is connected.
Let $M$ be a DEM set of $G$ with
$|M|=\operatorname{dem}(G)$
and $e=uv\in E(G)$.
If $M$ is also a
DEM set of $G-e$,
then $\operatorname{dem}(G-e)
\leq \operatorname{dem}(G)$.
Otherwise, let $M^{\prime}=M \cup\{u, v\}$.
It suffices to show that $M'$
is a DEM set of $G-e$.

\medskip
If $G-e$ has two components, say $G_1$ and $G_2$, then $e$ is a cut edge of $G$ and from Observation \ref{Obs:disjoint}, we have $\operatorname{dem}(G-e)=\operatorname{dem}\left(G_1\right)+\operatorname{dem}\left(G_2\right)$. Without loss of generality, assume that
$u\in V\left(G_1\right)$ and $v\in V\left(G_2\right)$.

\begin{fact}\label{fact1}
$\operatorname{dem}\left(G_1\right)
\leq\left|\left(M \cap V\left(G_1\right)\right)\cup\{u\}\right|$ and $\operatorname{dem}\left(G_2\right)
\leq\left|\left(M \cap V\left(G_2\right)\right)
\cup\{v\}\right|$.
\end{fact}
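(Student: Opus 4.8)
The plan is to show each inequality separately by exhibiting an explicit DEM set of $G_1$ (respectively $G_2$) of the claimed size. For the first inequality, set $M_1 := (M \cap V(G_1)) \cup \{u\}$; this is a subset of $V(G_1)$, so it makes sense as a candidate DEM set of $G_1$, and its cardinality is exactly $|(M \cap V(G_1)) \cup \{u\}|$. It suffices to verify that every edge of $G_1$ is monitored by some vertex of $M_1$ inside $G_1$. The key observation is that because $e = uv$ is a cut edge of $G$ separating $G_1$ from $G_2$, for any two vertices $x, y \in V(G_1)$ every $x$--$y$ geodesic in $G$ lies entirely in $G_1$, so $d_G(x,y) = d_{G_1}(x,y)$, and likewise $d_{G-f}(x,y) = d_{G_1 - f}(x,y)$ for any edge $f \in E(G_1)$. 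Hence for $x \in V(G_1)$ the "monitored edge'' set computed in $G_1$ agrees with the restriction to $E(G_1)$ of the set $EM(x)$ computed in $G$: that is, $EM_{G_1}(x) = EM_G(x) \cap E(G_1)$ for every $x\in V(G_1)$.

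Now let $f = ab \in E(G_1)$ be arbitrary. Since $M$ is a DEM set of $G$, there is some $x \in M$ with $f \in EM_G(x)$. If $x \in V(G_1)$, then $x \in M \cap V(G_1) \subseteq M_1$ and, by the displayed identity, $f \in EM_{G_1}(x)$, so $f$ is monitored in $G_1$ by a vertex of $M_1$. If instead $x \in V(G_2)$, I claim this case forces $f$ to be monitored by $u$ in $G_1$: indeed, $f \in EM_G(x)$ means $f$ lies on all shortest paths from $x$ to some $y \in V(G)$. Since $x\in V(G_2)$ and $f\in E(G_1)$, such a $y$ must lie in $V(G_1)$, and every $x$--$y$ path passes through the cut edge $uv$, hence through $u$; consequently $f$ lies on all shortest paths from $u$ to $y$ as well (a shortest $u$--$y$ path concatenated with $uv$ plus a shortest $x$--$v$ path realizes the shortest $x$--$y$ distance, and if some $u$--$y$ path avoiding $f$ were shortest it would yield a shorter $x$--$y$ path avoiding $f$). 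Therefore $f \in EM_{G_1}(u)$, and $u \in M_1$. In all cases $f$ is monitored in $G_1$ by a vertex of $M_1$, so $M_1$ is a DEM set of $G_1$ and $\operatorname{dem}(G_1) \le |M_1|$. The second inequality follows by the symmetric argument with the roles of $G_1,G_2$ and $u,v$ interchanged.

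The main obstacle is the case $x \in V(G_2)$: one must argue carefully that an edge of $G_1$ monitored from the far side of the cut edge is necessarily monitored from the cut vertex $u$. This rests on the elementary but essential fact that every shortest path from a $G_2$-vertex to a $G_1$-vertex decomposes canonically through the bridge $uv$, so distances and "all shortest paths'' statements transfer from $x$ to $u$. Once this transfer lemma is in place the rest is bookkeeping. (Strictly speaking $u$ might equal an endpoint of $f$, but by Theorem \ref{Th-Ncover} edges incident to $u$ are monitored by $u$ anyway, so no special treatment is needed.)
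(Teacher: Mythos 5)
Your proof is correct and follows essentially the same route as the paper's: for an edge of $G_1$ whose only monitor in $M$ lies in $V(G_2)$, both arguments use the fact that $e=uv$ is a bridge to transfer the monitoring to the cut vertex $u$, so that $(M\cap V(G_1))\cup\{u\}$ is a DEM set of $G_1$. The only difference is presentational — you phrase the transfer via the ``$f$ lies on all shortest paths'' characterization while the paper writes out the corresponding chain of distance (in)equalities — and the one-line ``hence'' for $EM_{G_1}(x)=EM_G(x)\cap E(G_1)$ silently uses the same bridge decomposition when the witness $y$ lies in $V(G_2)$, which is harmless since you spell out that very argument in the next paragraph.
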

\begin{proof}
For any edge $e_1=x_1y_1 \in E\left(G_1\right)$,
if there exists a vertex $w \in V\left(G_1\right)\cap M$ such that
$e_1\in EM(w)$, then we are done.
Otherwise,
there exists a vertex
$w \in V\left(G_2\right)\cap M$
such that $d_{G-e_1}\left(x_1, w\right)
\neq d_G\left(x_1, w\right)$ or
$d_{G-e_1}\left(y_1, w\right)
\neq d_G\left(y_1, w\right)$.
Without loss  of generality,
we suppose that $d_{G-e_1}\left(y_1, w\right)
\neq d_G\left(y_1, w\right)$ and
$d_G\left(w, e_1\right)=d_G\left(w,x_1\right)$.
Since
$d_G\left(y_1, w\right)
=d_G\left(y_1, x_1\right)+
d_G\left(x_1, u\right)
+d_G(u, w)$,
$d_{G- \{ e, e_1\}}\left(x_1, u\right)= d_{G-e_1}\left(x_1, u\right)$
and $d_{G- \{ e, e_1\}}\left(y_1, x_1\right)>
d_{G-e}\left(y_1, x_1\right)$,
it follows that
$$
\begin{aligned}
d_{G- \{ e, e_1\}}\left(u, y_1\right)
=&d_{G- \{ e, e_1\}}\left(u, x_1\right)+d_{G- \{ e, e_1\}}\left(x_1, y_1\right)\\
=&d_{G- \{ e, e_1\}}\left(u, x_1\right)+d_{G-e}\left(x_1, y_1\right)\\
>&d_{G- e}\left(u, x_1\right)+d_{G- e}\left(x_1, y_1\right)\\
=&d_{G- e}\left(u, y_1\right)
\end{aligned}
$$
and hence $d_{G- \{ e, e_1\}}\left(y_1, u\right)
\neq d_{G- e_1}\left(y_1, u\right)$.
Therefore, $e_1$ is monitored by $\left(M \cap V\left(G_1\right)\right)\cup\{u\}$
in graph $G-e$.
This implies that
$\operatorname{dem}\left(G_1\right)
\leq\left|\left(M \cap V\left(G_1\right)\right)\cup\{u\}\right|$.
Similarly, we can obtain that
$\operatorname{dem}\left(G_2\right)
\leq\left|\left(M \cap V\left(G_2\right)\right)
\cup\{v\}\right|$.
\end{proof}
From Fact \ref{fact1}, we have $\operatorname{dem}(G- e)\leq\left|M^{\prime}\right|=
\left|M \cup\{u, v\}\right|
\leq\left|M\right|+2=
\operatorname{dem}(G)+2$.
\eject

Suppose that $G-e$ is connected.
If $M$ is also a
DEM set of $G- e$,
then $\operatorname{dem}(G-e)
\leq |M|=\operatorname{dem}(G)$ and
we are done.
Otherwise,
there exists $e_1=x y \in E(G- e)$ such
that the edge $e_1$ is not monitored by $M$ in $G- e$.
Since $M$ is a distance- edge-monitoring
set of $G$, it follows that
there exists a vertex $z \in M$ such that
$d_{G- e_1}(x, z) \neq d_G(x, z )$
or $d_{G- e_1}(y, z) \neq d_G(y, z)$.
In addition, since $e_1$ is not monitored
by $M$ in $G- e$, it follows that
the distance from $z$ to $x$ or
$y$ is not changed
after removing the edge $e_1$ in $G- e$, which means that
$d_{G- \{ e, e_1\}}\left(y, z\right)
=d_{G- e}\left(y, z\right)$ and
$d_{G- \{ e, e_1\}}\left(x, z\right)
=d_{G- e}\left(x, z\right)$.
If
$d_G\left(e_1, z\right)
=d_G(x, z)$, then the edge $e$ lies on
every $z-y$ geodesic in $G$ for
$z\in M$ and $xy\in EM(z)$ in $G$,
otherwise there exists $z^*\in M$
and  $xy\in EM(z^*)$
such that $e$ does not appear in $z^*-y$ geodesic
in $G$,
that is $d_{G- e}\left(x, z^*\right)
=d_G\left(x, z^*\right)$
and
$d_{G- \{ e, e_1\}}\left(x, z^*\right)
\neq d_G\left(x, z^*\right)$,
which contradicts to
the fact that $M$ is not the
DEM set of
graph $G-e$.
\begin{claim}\label{claim2}
If a geodesic in $G$ from $z$ to $y$ traverses
the edge $e$ in the order $u, v$,
then each geodesic in $G$
from $z$ to $y$ traverses $e$ in the order $u,v$.
\end{claim}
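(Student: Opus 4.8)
The plan is to argue by contradiction, exploiting the structure already set up in the proof of Theorem \ref{th-Difference}: we are in the case where $G-e$ is connected, $M$ is a DEM set of $G$ but not of $G-e$, and $e_1 = xy$ is an edge of $G-e$ not monitored by $M$ in $G-e$, while $z \in M$ monitors $e_1$ in $G$ via the pair $(z,y)$ with $d_G(e_1,z) = d_G(x,z)$ (so in particular $d_G(z,y) = d_G(z,x) + 1$ and every $z$--$y$ geodesic in $G$ passes through $x$ and uses $e_1$). We have also just established that $e$ lies on every $z$--$y$ geodesic in $G$. So suppose, for contradiction, that some $z$--$y$ geodesic $P$ traverses $e$ in the order $u,v$ (i.e.\ reaching $u$ before $v$ when walked from $z$), while another $z$--$y$ geodesic $Q$ traverses $e$ in the order $v,u$.

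First I would record what the two orderings force about distances. If a $z$--$y$ geodesic traverses $e$ as $u,v$, then $d_G(z,u) + 1 + d_{G}(v,y) = d_G(z,y)$, and moreover $d_G(z,u) = d_{G-e}(z,u)$ because the initial segment of that geodesic from $z$ to $u$ avoids $e$; symmetrically $d_G(v,y) = d_{G-e}(v,y)$. If instead a $z$--$y$ geodesic traverses $e$ as $v,u$, then $d_G(z,v) + 1 + d_G(u,y) = d_G(z,y)$ with $d_G(z,v) = d_{G-e}(z,v)$ and $d_G(u,y) = d_{G-e}(u,y)$. Combining the two gives $d_G(z,u) = d_G(z,v) + 1$ from $Q$ and $d_G(z,v) = d_G(z,u) + 1$ from $P$ — wait, more carefully: from $P$, $d_G(z,v) \le d_G(z,u)+1$, and since $P$ is a geodesic and its prefix to $v$ has length $d_G(z,u)+1$, actually $d_G(z,v) = d_G(z,u)+1$; symmetrically from $Q$, $d_G(z,u) = d_G(z,v)+1$. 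These two equalities are contradictory, which already shows the two orderings cannot coexist. The key step is thus to make this distance bookkeeping airtight, in particular justifying that the relevant subpaths of geodesics through $e$ do not themselves revisit $e$, so that their lengths transfer to $G-e$.

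The anticipated main obstacle is handling the degenerate possibility that $x \in \{u,v\}$ or $y \in \{u,v\}$, i.e.\ that the non-monitored edge $e_1$ shares an endpoint with the deleted edge $e$, since then the "order in which a geodesic traverses $e$" and "passes through $x$" interact and the clean length arithmetic above needs to be re-checked; I would dispose of these by noting $e_1 \ne e$ and treating the at-most-two such coincidences separately, using that $d_G(z,y) = d_G(z,x)+1$ pins down where $x$ sits relative to $u,v$. Once the orientation of $e$ along all $z$--$y$ geodesics is shown to be well-defined, this claim feeds the remainder of the proof (not shown in the excerpt), where presumably one uses the consistent order $u,v$ to argue that adjoining $u$ and $v$ to $M$ restores monitoring of $e_1$ in $G-e$, yielding $\operatorname{dem}(G-e) \le \operatorname{dem}(G) + 2$.
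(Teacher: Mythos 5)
Your proposal is correct and takes essentially the same approach as the paper: both derive a contradiction from the fact that subpaths of geodesics are geodesics, obtaining two incompatible distance equalities (you measure $d_G(z,u)$ versus $d_G(z,v)$ from the prefix side, while the paper measures $d_G(u,y)$ versus $d_G(v,y)$ from the suffix side — a trivial symmetry). The extra concerns you raise about transferring lengths to $G-e$ and about $x$ or $y$ coinciding with $u$ or $v$ are unnecessary for this claim, which is a statement purely about geodesics in $G$.
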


\begin{proof}
Assume, to the contrary, that there exists
two $z-y$ geodesics $P^g_1$ and $P^g_2$,
where $P^g_1=z \ldots  u  v \ldots y$ and
$P^g_2=z \ldots vu\ldots y$.
The $z-y$ geodesic $P^g_1$ implies that
$d(u, v)+d(v, y)=d(u, y)$,
 and the $z-y$ geodesic $P^g_2$ implies that
$d(v, u)+d(u, y)=d(v, y)$,
and hence $d(u, v)=0$, a contradiction.
\end{proof}

From Claim \ref{claim2}, without loss of generality, we may assume that
every geodesic in $G$ from $z$ to $y$ traverses
the edge $e$ in the order $u, v$.
Thus, we have $d_G(z, y)=d_G(z, v)+d_G(v, y)$.
We now show that $xy$ can be
monitored by $v$ in $G- e$.
Note that
$d_{G- e_1}(z, y) \neq d_{G}(z, y)$,
$d_{G- e}(v, y)=d_{G}(v, y)$
and $d_{G- e}(x, y)=d_{G}(x, y)$.
Then
$d_{G- \{ e, e_1\}}\left(v, y\right)=$
$d_{G- \{ e, e_1\}}\left(v, x\right)+$
$d_{G- \{ e, e_1\}}\left(x, y\right)$
$=d_{G- e_1}\left(v, x\right)+$
$d_{G- e_1}\left(x, y\right)$
$>d_{G}\left(v, x\right)+ $
$d_{G}\left(x, y\right)$
$=d_{G- e}\left(v, x\right)+$
$d_{G- e}\left(x, y\right)
\geq d_{G- e}(v, y)$.
Since $d_{G- e}(v, y)
> d_{G- \{ e, e_1\}}(v, y)$,
it follows that $e_1$ can be monitored by $v$.
Since $e_1\in EM(u)$ or $e_1\in EM(v)$, it follows that
$M^{\prime}=M \cup\{u, v\}$ is
a distance edge-monitoring-set of $G- e$,
and thus $\operatorname{dem}(G- e)
\leq \operatorname{dem}(G)+2$,
as desired. \QED \smallskip

Li et al.~\cite{weli22} got the following result about
DEM numbers of
$ C_k\square P_{\ell}$.
\begin{theorem}{\upshape \cite{weli22}}
\label{ThmCnPn}
Let $\ell$ and $k$ be two integers with $\ell \geq 3$ and $k \geq 2$. Then
$$
\operatorname{dem}\left(C_k \square P_{\ell}\right)=
\begin{cases}k & \text { if } k \geq 2 \ell+1, \\ 2\ell & \text { if } k<2 \ell+1.\end{cases}
$$
\end{theorem}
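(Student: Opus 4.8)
The plan is to prove the equivalent closed form $\operatorname{dem}(C_k\square P_\ell)=\max\{k,2\ell\}$, which is exactly the stated case split, by first determining precisely which vertices can monitor which edges and then matching a counting lower bound with two explicit constructions. Write $V(C_k\square P_\ell)=\mathbb{Z}_k\times\{1,\dots,\ell\}$, call $F_j=\{(i,j):i\in\mathbb{Z}_k\}$ the $j$-th \emph{fiber} (it carries the horizontal edges $(i,j)(i+1,j)$ and is an isometric copy of $C_k$), and call $L_i=\{(i,j):1\le j\le\ell\}$ the $i$-th \emph{column} (it carries the vertical edges $(i,j)(i,j+1)$ and is an isometric copy of $P_\ell$). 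I will use the standard fact that every geodesic of a Cartesian product projects to a geodesic in each factor, hence is obtained by interleaving a $C_k$-geodesic with the monotone segment of $P_\ell$ joining the two path-coordinates; in particular a geodesic between two vertices of a common fiber stays inside that fiber. From this I would establish two monitoring lemmas. \emph{Lemma A (vertical edges).} If $e$ is a vertical edge of column $L_i$ and $x=(a,b)$, then $e\in EM(x)$ iff $a=i$: if $a\ne i$, then for every $y$ the $x$--$y$ geodesic that performs all its vertical steps first stays in column $L_a$ while changing rows and so never uses $e$; if $a=i$, taking $y$ to be the endpoint of $L_i$ on the side of $e$ opposite to $x$, the unique $x$--$y$ geodesic runs straight along $L_i$ through $e$. \emph{Lemma B (horizontal edges).} If $e$ is a horizontal edge of fiber $F_j$ and $x=(a,b)$, then a vertex $y=(c,d)$ can witness $e\in EM(x)$ only when $b=d=j$: if $b\ne j$ the $x$--$y$ geodesic that performs all horizontal steps first makes them in row $b\ne j$ and avoids $e$, while if $d\ne j$ the geodesic that performs all vertical steps first makes its horizontal steps in row $d\ne j$ and avoids $e$; and for $x,y\in F_j$, since $F_j$ is isometric to $C_k$, the pair $(x,y)$ witnesses $e$ in $G$ exactly when it does so inside the cycle $F_j$. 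Consequently the probes of a DEM set lying in $F_j$ must form a DEM set of $C_k$.

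The lower bound follows at once: by Lemma A a DEM set meets every column, so $\operatorname{dem}\ge k$; by Lemma B it meets every fiber in a DEM set of $C_k$, and since $\operatorname{fes}(C_k)=1$, Theorem~\ref{Th-fes} gives $\operatorname{dem}(C_k)=2$, so each fiber contains at least two probes and $\operatorname{dem}\ge 2\ell$; hence $\operatorname{dem}(C_k\square P_\ell)\ge\max\{k,2\ell\}$.

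For the upper bound I would split according to whether $k\ge 2\ell$. If $k\ge 2\ell$, take the ``staircase'' $M=\{(c,\,1+(c\bmod\ell)):c\in\mathbb{Z}_k\}$ of size $k$: it meets every column exactly once, so all vertical edges are monitored by Lemma A, and the columns of the probes in $F_j$ are $S_j=\{c:c\equiv j-1\pmod{\ell}\}$, which has $\ge\lfloor k/\ell\rfloor\ge 2$ elements, any two consecutive of which are at $C_k$-distance $\min\{\ell,k-\ell\}\ge 2$, so $S_j$ contains a DEM pair of $C_k$ (two distinct vertices when $k$ is odd, two non-adjacent vertices when $k$ is even) and hence is itself a DEM set of $C_k$, so all horizontal edges are monitored by Lemma B. If $k<2\ell$, assign to each fiber $F_j$ two vertices $a_j,b_j\in\mathbb{Z}_k$ forming a DEM set of $C_k$, chosen so that the $2\ell$ vertices $(a_j,j),(b_j,j)$ together meet every column; this is possible because $k\le 2\ell$ and, for even $k$, $\mathbb{Z}_k$ can be covered by $\ell$ pairs of mutually non-adjacent vertices (a short case check on $k\bmod 4$). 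Again all vertical edges are monitored by column coverage and all horizontal edges by Lemma B, so $\operatorname{dem}\le 2\ell$. Combining the bounds gives $\operatorname{dem}(C_k\square P_\ell)=\max\{k,2\ell\}$, i.e.\ the asserted formula.

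The crux is the structural analysis behind Lemmas A and B: the geodesic-shuffle argument is what isolates the monitoring of a vertical edge to its own column and of a horizontal edge to its own fiber, the latter reducing the entire horizontal bookkeeping to the single known value $\operatorname{dem}(C_k)=2$. Once these are in hand the lower bound is pure counting and the two constructions are explicit; the only remaining technical nuisance is the elementary covering statement for even $k$ used in the $k<2\ell$ construction.
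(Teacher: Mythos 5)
This statement is imported verbatim from~\cite{weli22}; the paper you were given contains no proof of it, only the citation, so there is no ``paper's own proof'' to measure your argument against. Judged on its own, your proof is correct and its architecture is the natural one: the geodesic-interleaving property of Cartesian products does isolate the monitoring of a vertical edge to its own column (Lemma A) and of a horizontal edge to its own fiber, with witnessing inside a fiber equivalent to witnessing inside the corresponding copy of $C_k$ (Lemma B); combined with $\operatorname{dem}(C_k)=2$ (from Theorem~\ref{Th-fes} with $\operatorname{fes}(C_k)=1$) this yields the lower bound $\max\{k,2\ell\}$, and your staircase and fiber-pair constructions match it. Two details you leave as assertions are fine but worth pinning down: the characterization of two-element DEM sets of $C_k$ (any two distinct vertices when $k$ is odd; any two non-adjacent vertices when $k$ is even) follows from computing, as in Lemma~\ref{Lem:EMCN}, that $EM(v)$ misses exactly the antipodal edge for odd $k$ and exactly the two edges incident to the antipodal vertex for even $k$; and for the covering step in the case $k<2\ell$ with $k$ even, the antipodal pairs $\{i,i+k/2\}$ already give $k/2\le\ell$ mutually non-adjacent pairs covering $\mathbb{Z}_k$, so no case analysis on $k\bmod 4$ is needed. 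Finally, note that the hypothesis ``$\ell\geq 3$ and $k\geq 2$'' as printed is evidently a transposition error for ``$k\geq 3$ and $\ell\geq 2$'' (the paper itself applies the theorem to $C_8\square P_2$), and your proof covers exactly that corrected range.
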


To show the sharpness of Theorem \ref{th-Difference}, we consider the following proposition.
\begin{proposition}\label{Lem:eq2}
There exist two connected graphs $G_1,G_2$ of order $n$ such that $\operatorname{dem}(G_1- e)-\operatorname{dem}(G_1)
=2$ and
$\operatorname{dem}(G_2)-\operatorname{dem}(G_2- e)=2$.
\end{proposition}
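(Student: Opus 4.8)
The plan is to exhibit explicit graph families realizing each of the two extremal behaviors, using the structural results already available. For the direction $\operatorname{dem}(G_1-e)-\operatorname{dem}(G_1)=2$, I would look for a graph $G_1$ with $\operatorname{fes}(G_1)=2$ (so that by Theorem~\ref{Th-fes} we have $\operatorname{dem}(G_1)\leq 3$, and in fact we want $\operatorname{dem}(G_1)$ small) together with an edge $e$ whose removal decreases the feedback edge set only a little but whose removed edge happens to lie on many geodesics. A cleaner route: take $G_1$ with $\operatorname{dem}(G_1)=1$, i.e.\ a tree by Theorem~\ref{th-dem-1}; but then $G_1-e$ is disconnected and $\operatorname{dem}$ can only drop, so that fails. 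Instead I would take $G_1$ to be a ``theta-like'' graph or a small graph with exactly one independent cycle after contraction, so $\operatorname{fes}(G_1)=1$ forces $\operatorname{dem}(G_1)=2$ by the equality clause of Theorem~\ref{Th-fes}, and then choose $e$ so that $G_1-e$ is a tree plus one extra cycle giving... — no, that increases fes. The right idea is the reverse: we need $\operatorname{dem}(G_1)$ to be small while $\operatorname{dem}(G_1-e)$ is large, so $e$ must be a ``shortcut'' edge whose presence in $G_1$ drastically reduces the monitoring requirement. A natural candidate is to take a long even cycle $C_{2\ell}$ (which has $\operatorname{dem}=2\ell$ by Theorem~\ref{ThmCnPn} with... actually $\operatorname{dem}(C_n)=2$ for a pure cycle) and modify; I would instead use $C_k\square P_\ell$ from Theorem~\ref{ThmCnPn}: pick $k$ just below the threshold $2\ell+1$ so $\operatorname{dem}=2\ell$, and add/remove an edge to cross the threshold so that the value jumps to $k$. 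Concretely, set $G_1=C_{2\ell+1}\square P_\ell$ so $\operatorname{dem}(G_1)=2\ell$ (since $2\ell+1 = 2\ell+1 \geq 2\ell+1$, giving $k=2\ell+1$)—I would tune the parameters so that deleting one edge changes the regime.

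The honest approach, and the one I would actually pursue, is to reverse-engineer from the known value table. For $\operatorname{dem}(G_2)-\operatorname{dem}(G_2-e)=2$: take $G_2$ to be a graph whose monitoring number is large because of a short cycle structure, and delete an edge $e$ that breaks exactly that obstruction, turning $G_2$ into something with a tree-like base graph. For instance, one may take $K_4$ (where $\operatorname{dem}(K_4)=3$ by Theorem~\ref{th-dem-n}) — deleting one edge gives $K_4-e$, whose base graph is a $4$-cycle with a chord, i.e.\ essentially $C_4$ or a theta graph, with $\operatorname{dem}=2$; that only gives a difference of $1$. To get $2$, I would glue two such gadgets or use $K_5$ minus a suitable edge, checking the values via Observation~\ref{Obs:G_b} and the $\operatorname{fes}$ bound. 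For the other direction $\operatorname{dem}(G_1-e)-\operatorname{dem}(G_1)=2$: take $G_1$ to be a graph where a single edge $e$ is a ``chord'' whose presence makes many other edges monitorable for free, but whose absence forces two extra probes; a cycle $C_n$ with one long chord, or a wheel-like graph, are the first things I would compute.

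The concrete construction I expect to work: for part (1), let $G_1$ be obtained from $C_k\square P_\ell$ for suitable $k,\ell$ by adding one edge that drops the effective ``circumference parameter'' below the threshold, invoking Theorem~\ref{ThmCnPn} on both $G_1$ and $G_1-e$; one checks $\operatorname{dem}(G_1)=k$ and $\operatorname{dem}(G_1-e)=k+2$ by choosing $k=2\ell-1$ so that $G_1-e$ falls into the case $k<2\ell+1$ with value $2\ell$ while $G_1$ (the product itself, $k=2\ell-1<2\ell+1$) has value $2\ell$ too — so I would instead need to add rather than delete. Symmetrically, for part (2), take $G_2=C_k\square P_\ell$ with $k\geq 2\ell+1$, so $\operatorname{dem}(G_2)=k$, and delete an edge of the cycle layer in a way that effectively identifies the structure with a path-product having monitoring number $k-2$; verifying this requires re-examining the proof technique behind Theorem~\ref{ThmCnPn} rather than just its statement.

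The main obstacle is precisely this last point: Theorem~\ref{ThmCnPn} is stated only for the graphs $C_k\square P_\ell$ themselves, not for edge-deletions of them, so I cannot apply it as a black box to $G_i-e$ unless $G_i-e$ is again of that exact form — which it is not. I therefore expect the real work to be a direct ad hoc computation of $\operatorname{dem}(G_i-e)$ for the chosen gadget: producing an explicit DEM set of the claimed size (upper bound) and an argument, via Proposition~\ref{Obs:CUV} and Theorem~\ref{Th-forest}, that no smaller set monitors all edges (lower bound). A safer gadget that sidesteps the product-graph issue entirely would be to build $G_1$ and $G_2$ from complete graphs and trees glued along a vertex (kites, as in Section~2), using Observation~\ref{Obs:G_b} to reduce to the complete-graph core and Theorem~\ref{th-dem-n} to read off the value; deleting an edge inside the $K_{r+1}$ core of a kite changes $\operatorname{dem}$ from $r$ to $\operatorname{dem}(K_{r+1}-e)$, and one shows $\operatorname{dem}(K_{r+1}-e)=r-2$ for the right $r$ (giving part (2)), and for part (1) one deletes an edge from a graph built so that the core only becomes complete after the deletion collapses some pendant structure. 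The crux in all cases is the lower-bound argument, which I would handle by the same case analysis on which vertices can monitor a given ``hard'' edge that underlies Proposition~\ref{Obs:CUV}.
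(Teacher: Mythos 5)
There is a genuine gap: your proposal never actually produces, for either direction, a concrete graph together with a verified computation of the two DEM numbers, and the one construction you do commit to provably fails. For part (2) you propose deleting an edge inside the complete core $K_{r+1}$ of a kite and assert that $\operatorname{dem}(K_{r+1}-e)=r-2$ ``for the right $r$''; but no such $r$ exists, since $\operatorname{dem}(K_n)=n-1$ and $\operatorname{dem}(K_n-e)=n-2$ for every $n\geq 3$ (this is proved in Section~5 of the paper), so a single edge deletion from a complete core always drops the DEM number by exactly $1$, never $2$. Your other candidates for this direction ($K_4-e$, $K_5-e$, gluing gadgets) run into the same obstruction, which you partially notice but never resolve. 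The paper instead uses the triangular prism $K_3\square K_2$ (its graph $G'_6$), showing $\operatorname{dem}(G'_6)=4$ by an explicit DEM set plus a Proposition~\ref{Obs:CUV}-based lower bound with a finite case check, and $\operatorname{dem}(G'_6-v_3v_4)=2$; nothing in your proposal points toward a gadget of this kind.

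For part (1) you do circle the correct idea --- take $C_k\square P_\ell$ plus one chord, so that deleting the chord lands exactly on the product graph where Theorem~\ref{ThmCnPn} applies as a black box --- but you then misdiagnose the obstacle: you claim Theorem~\ref{ThmCnPn} cannot be applied to $G_1-e$ because it ``is not of that exact form,'' whereas in this construction $G_1-e$ \emph{is} exactly $C_k\square P_\ell$ and the hand computation is needed for $G_1$ itself (the product plus the chord). The paper takes $G_1$ to be $C_8\square P_2$ plus the chord $u_1u_5$, proves $\operatorname{dem}(G_1)=6$ directly (explicit $6$-element DEM set for the upper bound; Proposition~\ref{Obs:CUV} applied to the six rung edges $u_iv_i$, $i\in\{2,3,4,6,7,8\}$, for the lower bound), and then reads off $\operatorname{dem}(G_1-u_1u_5)=\operatorname{dem}(C_8\square P_2)=8$ from Theorem~\ref{ThmCnPn}. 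Your proposal correctly anticipates that the crux is exactly this ad hoc upper/lower bound computation via Proposition~\ref{Obs:CUV}, but since you never fix parameters, never exhibit a DEM set, and never carry out the lower-bound case analysis, the argument is a search plan rather than a proof.
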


\begin{proof}
Firstly, we consider the graph $G_{1} \ (|V(G_1)|=n\geq 8)$ with vertex set $V(G_{1})=\{v_i|1\leq i\leq n-8\}
\cup \{u_i|1\leq i\leq 8\}$
and edge set $E(G_{1})=\{u_iv_i\,|\,1\leq i\leq 8\}
\cup \{u_iu_{i+1}\,|\,1\leq i\leq 7\}
\cup \{v_iv_{i+1}\,|\,1\leq i\leq 7\}
\cup \{u_1u_{8}\} \cup \{u_1u_{5}\}
\cup \{v_1v_{8}\}
\cup \{v_1v_{9}\}
\cup \{v_iv_{i+1}\,|\,9\leq i\leq n-9\}$.
Let $G^*_{8}=G_b(G_1)$.
 Obviously, $G^*_{8}$ is the base graph of
$G_1$, which is obtained  by removing
 the all edge in the edge set $\{v_1v_{9}\}
\cup \{v_iv_{i+1}\,|\,9\leq i\leq n-9\}$.
The graphs $G^*_{8}$ and $G^*_{8}-u_1u_5$ are shown in Figures \ref{Fig:G_8} and \ref{Fig:G_81},
respectively.

\begin{figure}[!htbp]
\vspace*{-4mm}
\centering
\begin{minipage}{0.45\linewidth}
\vspace{3pt}
\centerline{\includegraphics[width=5.5cm]{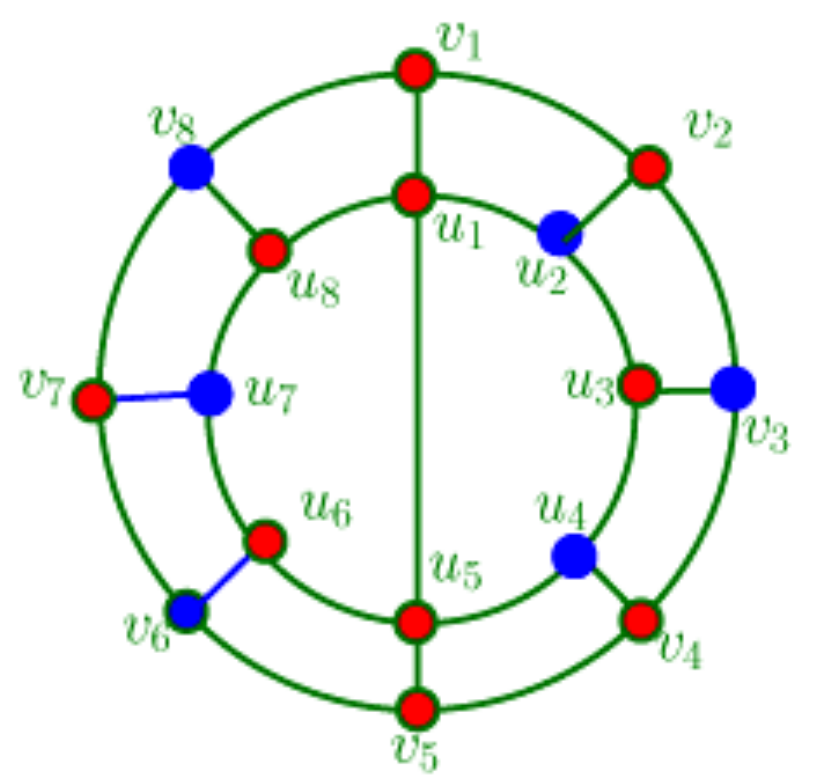}}\vspace*{-2mm}
\caption{$\operatorname{dem}(G_8^*)=6$}
\label{Fig:G_8}
\end{minipage}
\begin{minipage}{0.45\linewidth}
\vspace{3pt}
\centerline{\includegraphics[width=5.9cm]{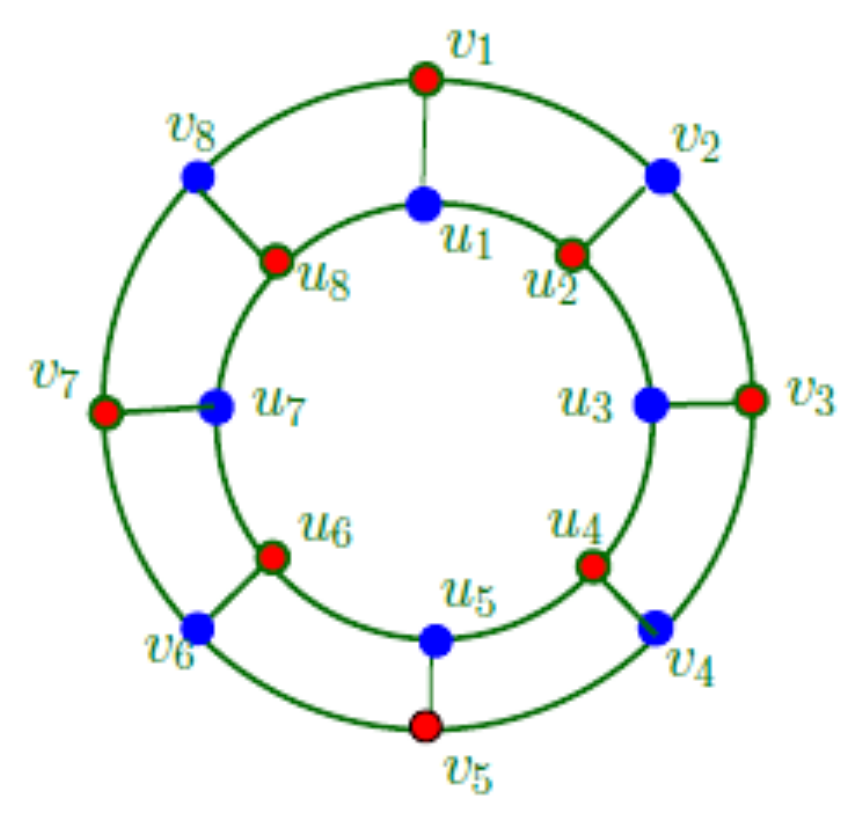}}\vspace*{-5mm}
\caption{$\operatorname{dem}(G_8^*-u_1u_5)=8$}
\label{Fig:G_81}
\end{minipage}
\end{figure}

\medskip
Let $M_1=\{u_2,u_4, v_3,v_6,u_7,v_8\}$.
Note that $\{u_1u_5,u_5v_5,u_2v_2$,
$u_2u_1,u_2u_3 \}\subseteq EM(u_2)$,
$\{v_1u_1,$ $u_4u_3,u_4u_5,u_4v_4\}\subseteq EM(u_4)$,
$\{v_3u_3,v_2v_3,v_4v_3,v_5v_4,v_2v_1\}$
$\subseteq EM(v_3)$,
$\{v_8v_1,u_8v_8,$
$v_8v_7\}$
$\subseteq  EM(v_8)$,
$\{u_7u_8,u_8u_1,u_{6}u_7,u_{6}u_5,u_7v_7\}$
$\subseteq  EM(u_7)$ and
$\{v_5v_6,v_6v_7,u_{6}v_6\}\in EM(v_6)$.
Therefore, $E(G_8^*)=\cup_{x\in M_1}EM(x)$,
and hence
$\operatorname{dem}(G_8^*)\leq |M_1|=6$.

\medskip
Let $M$ be a DEM set of $G^*_8$ with  the  minimum  cardinality.
For the edge $u_iv_i$, where $2\leq i\leq 8$ and $i\neq 5$,
and any $w\in (N(u_i)\cup N(v_i))\setminus\{u_i,v_i\}$,
we have
$d_{G- u_iv_i}(w,u_i)=d_{G}(w,u_i)$
and $d_{G- u_iv_i}(w,v_i)=d_{G}(w,v_i)$,
and hence $u_iv_i \notin EM(w)$.
From Proposition \ref{Obs:CUV}, the edge
$u_iv_i$ ($2\leq i\leq 8$ and $i\neq 5$)
is only monitored by $\{u_i, v_i\}$,
and hence $M\cap\{u_i, v_i\}\neq \emptyset$
for $2\leq i\leq 8$ and $i\neq 5$, and so $\operatorname{dem}(G^*_8)\geq 6$.
Therefore, $\operatorname{dem}(G^*_8)=6$.

\medskip
Since $G^*_8- u_1u_5\cong C_8\square P_2$,
it follows from Theorem \ref{ThmCnPn} that $\operatorname{dem}(G^*_8 - u_1u_5)=
\operatorname{dem}(C_8\square P_2)=8$.
From Observation \ref{Obs:G_b},
$\operatorname{dem}(G_1 - u_1u_5)
-\operatorname{dem}(G_1)$
$=\operatorname{dem}(G^*_8- u_1u_5)
-\operatorname{dem}(G^*_8)=$
$8-6=2$, as desired.

\begin{figure}[!b]
\vspace*{-4mm}
\centering
\begin{minipage}{0.45\linewidth}
\vspace{3pt}
\centerline{\includegraphics[width=4.9cm]{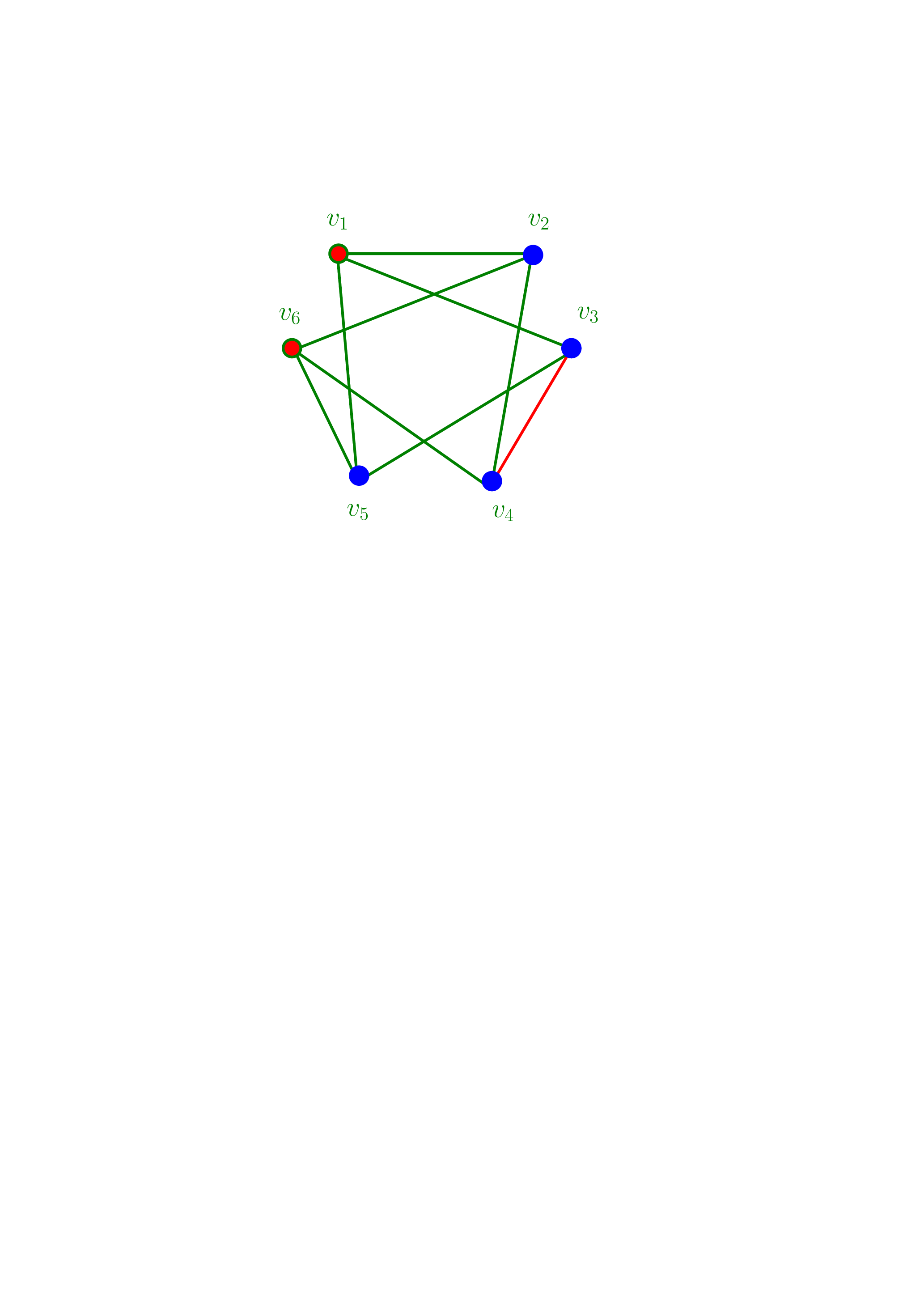}}\vspace*{-3mm}
\caption{$\operatorname{dem}(G^{\prime}_6)=4$}
\label{Fig:DemG_4}
\end{minipage}
\begin{minipage}{0.45\linewidth}
\vspace{2pt}
\centerline{\includegraphics[width=5cm]{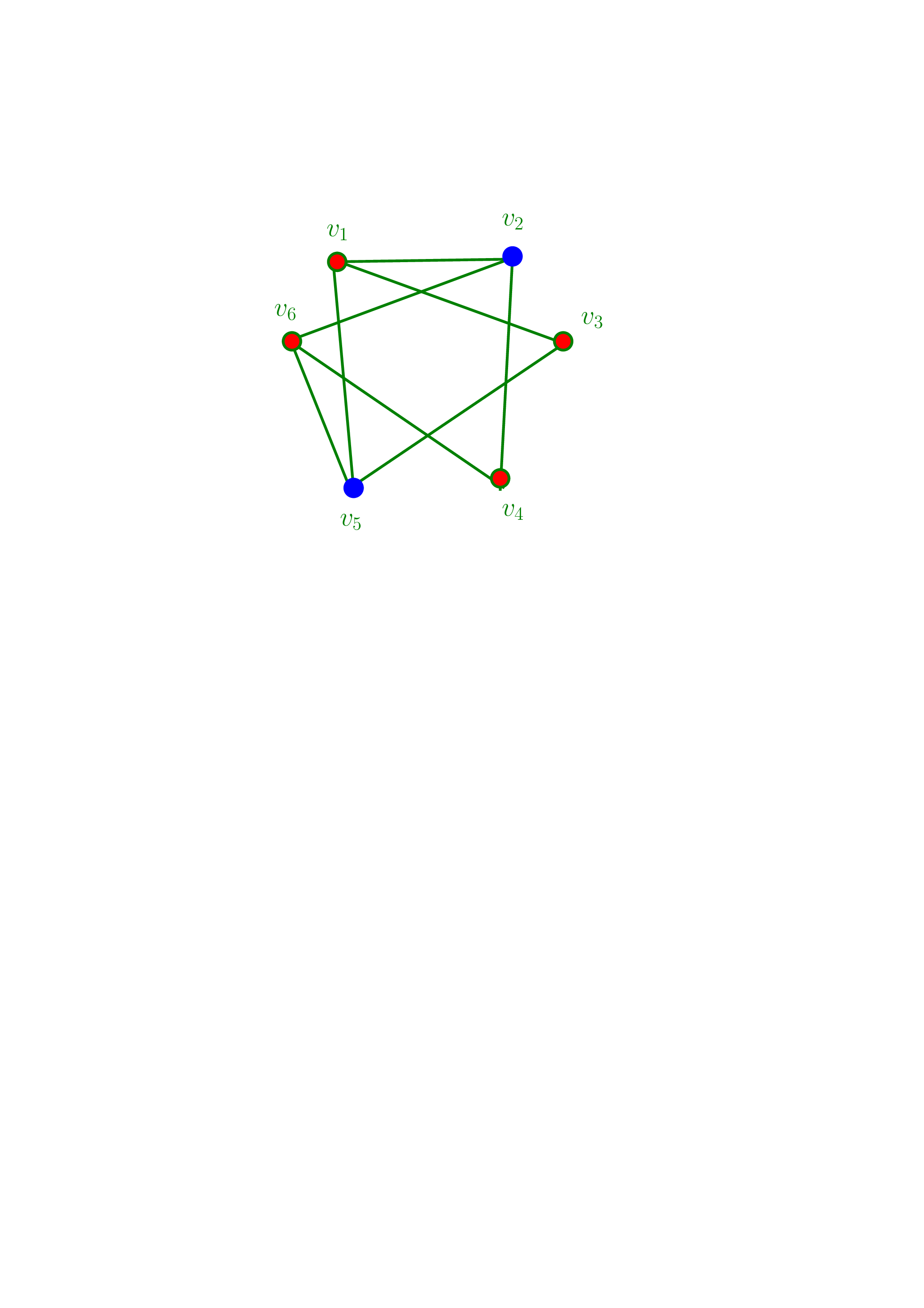}}\vspace*{-3mm}
\caption{$\operatorname{dem}(G^{\prime}_6-v_3v_4)=2$}
\label{Fig:DemG_2}
\end{minipage}
\end{figure}

Next, we consider the graph $G_2 \ (|V(G_2)|=n\geq 6)$ with vertex set $V(G_{2})=\{v_i\ |\ 1\leq i \leq n\}$
and edge set $E(G_{2})=\{v_{1}v_{2}, v_{3}v_{4}$,
$v_{5}v_{6},v_{1}v_{3},v_{1}v_{5}$,
$v_{2}v_{4},v_{2}v_{6}$,
$v_{3}v_{5},v_{4}v_{6}\}
\cup\{v_iv_{i+1}\ |\ 6\leq i\leq n-1\}$.
Let $G^{\prime}_{6}$ be the base graph
of $G_2$, that is, $G_b(G_2)=G^{\prime}_{6}$.
The graphs $G^{\prime}_{6}$ and $G^{\prime}_{6}-v_1v_3$,
are shown in Figure \ref{Fig:DemG_4} and
Figure \ref{Fig:DemG_2}, respectively.
From Observation \ref{Obs:G_b}, $\operatorname{dem}(G_2)
=\operatorname{dem}(G^{\prime}_{6})$.

\medskip
Take $M'_1=\{v_2,v_3, v_4,v_5\}$.
Note that $\{v_1v_2,v_6v_2,v_4v_2\}\subseteq  EM(v_2)$,
$\{v_1v_3,v_5v_3,v_4v_3\}\subseteq  EM(v_3)$,
$\{v_6v_4\}\subseteq  EM(v_4)$,
$\{v_5v_1,v_6v_5\}\subseteq  EM(v_5)$, and hence $E(G^{\prime}_6)=\cup_{x\in M'_1 }EM(x)$,
it follows that
 $M'_1$ is a DEM set of $G^{\prime}_6$, and hence
$\operatorname{dem}(G^{\prime}_6) \leq |M'_1|=4$.
Let $M'$ be a DEM set of $G^{\prime}_6$ with  the  minimum  cardinality.
For the edge $v_{2i-1}v_{2i}\, (1\leq i\leq 3)$ and $w\in  \left(N(v_{2i-1})\cup  N(v_{2i})\right)\setminus\{v_{2i-1}v_{2i}\}$,
we have $d_{G- v_{2i-1}v_{2i}}(w,v_{2i-1})
=d_{G}(w,v_{2i-1})$
and $d_{G - v_{2i-1}v_{2i}}(w,v_{2i})
=d_{G}(w,v_{2i})$, and
so $v_{2i-1}v_{2i} \notin EM(w)$.
From Proposition \ref{Obs:CUV},
the edge
$v_{2i-1}v_{2i}\ (1\leq i\leq 3)$ is monitored by the vertex in $\{v_{2i-1}, v_{2i}\}$, and hence
$M' \cap \{v_{2i-1}, v_{2i}\}
\neq \emptyset$ ($1\leq i \leq 3$).
All sets $M' \in V(G^{\prime}_6)$ with $|M'|=3$ are shown in Table $1$.
Therefore, all sets $M'$ with $|M'|=3$
are not DEM sets of $G^{\prime}_6$, and hence $\operatorname{dem}(G^{\prime}_6)\geq 4$.
Therefore, we have $\operatorname{dem}(G^{\prime}_6)=4$.

\begin{table}[h]
\caption{The edges are not monitored by $M'$($|M'|=3$).}
\begin{center}
\tabcolsep 5pt
\begin{tabular}{|c|c|}
\hline $M'$ & $E(G'_6)- \cup_{x\in M'}EM(x)$ \\
\cline{1-2} $v_1,v_3,v_6$ & $v_2v_4$ \\
\cline{1-2} $v_1,v_4,v_5$ & $v_2v_6$ \\
\cline{1-2} $v_1,v_4,v_6$ & $v_3v_5$ \\
\cline{1-2} $v_2,v_3,v_5$ & $v_4v_6$ \\
\cline{1-2} $v_2,v_3,v_6$ & $v_1v_5$ \\
\cline{1-2} $v_2,v_4,v_5$ & $v_1v_3$ \\
\cline{1-2} $v_1,v_3,v_5$ &
$v_2v_6, v_2v_4, v_4v_6$ \\
\cline{1-2} $v_2,v_4,v_6$ & $v_1v_3, v_1v_5, v_3v_5$ \\
\cline{1-2}
\end{tabular}
\end{center}\vspace*{-3mm}
\end{table}

\medskip
For the graph $G^{\prime}_6  - v_3v_4$,
let $M_3=\{v_2,v_5\}$.
Note that $\{v_1v_2,v_6v_2,v_4v_2,v_1v_3\}\subseteq EM(v_2)$ and
$\{v_5v_1,v_6v_5,v_3v_5,v_6v_4\}
\subseteq  EM(v_5)$.
Since $E(G^{\prime}_6-v_3v_4)=\cup_{x\in M_3 }EM(x)$,
it follows that
 $M_3$ is a DEM set of $G^{\prime}_6$,
and hence $\operatorname{dem}(G^{\prime}_6 - v_3v_4)\leq 2$.
Since $G^{\prime}_6 - v_3v_4$ is not a tree,
it follows from Theorem \ref{th-dem-1} that $\operatorname{dem}(G^{\prime}_6 - v_3v_4)\geq  2$, and so $\operatorname{dem}(G^{\prime}_6 - v_3v_4)= 2$.
From Observation \ref{Obs:G_b},
$\operatorname{dem}(G_2)-
\operatorname{dem}(G_2 - v_3v_4)$
$=\operatorname{dem}(G^{\prime}_6 )-
\operatorname{dem}(G^{\prime}_6 - v_3v_4)=$
$4-2=2$, as desired.
\end{proof}

The \emph{friendship graph}, $Fr{(n)}$, can be constructed by joining $n$ copies of the complete graph $K_3$ with a common vertex, which is called the \emph{universal vertex} of $Fr(n)$.
Next, we give the proof of Theorem \ref{th-Ei}. \\

\noindent {\bf Proof of Theorem \ref{th-Ei}:}
Let $k,i$ be integers with $1\leq i \leq k$.
The graph $G^i$ is obtained by iteratively
adding an edge $u_iv_i$ to the graph
$G^{i-1}$.
Without loss of generality, let $G^0$ be the graph with
$V(G^{0})=\{c\}\cup \{u_j\,|\,1\leq j\leq k\}\cup \{v_j\,|\,1\leq j\leq k\}$ and $E(G^0)=\{cu_j,cv_j\,|\,1\leq j \leq k\}$,
and $G^i$ be the graph
with $V(G^{i})= V(G^{i-1})$ and $E(G^i)=E(G^{i-1})\cup \{u_iv_i\}$, where $1\leq i\leq k$.
Since $G^0$ is a tree, it follows from Theorem \ref{th-dem-1} that ${\rm dem}(G^0)=1$.
Note that the base graph of $G^1$ is
a complete graph $K_3$.
From Observation \ref{Obs:G_b}
and Theorem \ref{th-dem-n},
we have $\operatorname{dem}(G_1)=
\operatorname{dem}(K_3)=2$.

\medskip
Let $G=G^{i}$, where $2\leq i\leq k$.
Then $G_b=Fr(i)$.
Let
$M=\{u_t\,|\,1\leq t\leq i\}$.
From Theorem \ref{Th-Ncover}, we have $\{u_tv_t, cu_t\,|\,1\leq t\leq i\}\subseteq \cup_{x\in M}EM(x)$.
Since $2=d_{G}(u_1,v_t)\neq
d_{G-cv_t}(u_1,v_t)=3$ for $2\leq t\leq i$,
it follows that
$cv_t\in EM(u_1)$ for $2\leq t\leq i$.
Suppose that $t=1$.
Since $2=d_{G}(u_2,v_1)\neq
d_{G-cv_1}(u_2,v_1)=3$,
it follows that
$cv_1\in EM(u_2)$, and hence $E(G)\subseteq \cup_{x\in M}EM(x)$, and so $\operatorname{dem}(G)\leq i$.
Let $M$ be a DEM set of $G$ with  the  minimum  cardinality.
Note that $\left(N(u_j)\cup N(v_j)\right)\setminus\{u_j,v_j\}=\{c\}$.
Since $d_G(c,u_j)=d_{G-u_jv_j}(c,u_j)$
and $d_G(c,v_j)=d_{G-u_jv_j}(c,v_j)$
it follows that $u_jv_j \notin EM(c)$, where $1\leq j\leq k$.
From Proposition \ref{Obs:CUV},
the edge $u_jv_j$
is only monitored by $u_j$ or $v_j$, and hence $M\cap\{u_j,v_j\}\neq \emptyset$
for $1\leq j\leq k$,
Therefore, $\operatorname{dem}(G)\geq i$, and so $\operatorname{dem}(G)= i$.
Thus, there exists
a graph sequence  $\{G^i\,|\,0\leq i\leq k \}$,
with $e(G^i)-e(G^0)=i$ and $V(G^i)=V(G^j)$ for $0\leq i,j \leq k$,
such that
$\operatorname{dem}(G^{i+1})
-\operatorname{dem}(G^0)=i$, where $1\leq i\leq k-1$.\QED

\medskip
Foucaud et al.~\cite{FKKMR21} obtained the following result.
\begin{theorem}{\upshape \cite{FKKMR21}}
\label{ThmPnPM}
Let $\ell_1$ and $\ell_2$ be two integers with $\ell \geq 2$ and $\ell_2 \geq 2$. Then
$$
\operatorname{dem}\left(P_{\ell_1} \square P_{\ell_2}\right)=
\max \{\ell_1,\ell_2\}
$$
\end{theorem}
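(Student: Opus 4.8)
\noindent\textbf{Proof plan for Theorem~\ref{ThmPnPM}.}
Write $G = P_{\ell_1}\square P_{\ell_2}$ and identify its vertices with cells $(a,b)$, $1\le a\le \ell_1$, $1\le b\le \ell_2$, so that $d_G((a,b),(c,d)) = |a-c|+|b-d|$; call an edge $(i,j)(i,j{+}1)$ a \emph{horizontal edge of row $i$} and an edge $(i,j)(i{+}1,j)$ a \emph{vertical edge of column $j$}. Assume, without loss of generality, that $\ell_1\le \ell_2$, so that $\max\{\ell_1,\ell_2\} = \ell_2$. The whole argument rests on one structural lemma, after which the theorem reduces to a trivial covering count.

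\emph{Key lemma.} For every vertex $x=(a,b)$,
$$EM(x) \;=\; \{\text{horizontal edges of row }a\}\ \cup\ \{\text{vertical edges of column }b\}.$$
I would prove the inclusion ``$\subseteq$'' by showing that $x=(a,b)$ monitors no horizontal edge $e=(i,j)(i,j{+}1)$ with $i\ne a$: for an arbitrary vertex $y=(c,d)$, the ``bent'' walk that first moves horizontally from $(a,b)$ to $(a,d)$ and then vertically from $(a,d)$ to $(c,d)$ has length $|b-d|+|a-c| = d_G(x,y)$, hence is a geodesic, and it uses horizontal edges only in row $a\ne i$, so it avoids $e$; thus $d_{G-e}(x,y)=d_G(x,y)$ for \emph{every} $y$, i.e.\ $e\notin EM(x)$. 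The dual statement (no vertical edge outside column $b$) follows by using the vertical-then-horizontal bent geodesic. For ``$\supseteq$'', given a horizontal edge $e$ of row $a$, choose $y$ to be $(a,\ell_2)$ or $(a,1)$, whichever lies on the far side of $e$ along row $a$; the unique $x$--$y$ geodesic runs straight along row $a$ and contains $e$, whereas in $G-e$ any $x$--$y$ walk must cross between the two sides of $e$ in some other row and return, costing two extra steps (this is where $\ell_1\ge 2$ is used), so $d_{G-e}(x,y) = d_G(x,y)+2 > d_G(x,y)$ and $e\in EM(x)$; the vertical case is symmetric. As a consistency check, row $a$ together with column $b$ forms a ``plus-shaped'' tree, in agreement with Theorem~\ref{Th-forest}.

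Granting the lemma, observe that for $M\subseteq V(G)$ one has $\bigcup_{x\in M}EM(x) = E(G)$ if and only if $M$ occupies every row (so that the horizontal edges of each row are covered — there is at least one since $\ell_2\ge 2$) and occupies every column (similarly, using $\ell_1\ge 2$). Projecting such an $M$ onto the rows it meets forces $|M|\ge \ell_1$, and projecting onto columns forces $|M|\ge \ell_2$; hence $\operatorname{dem}(G)\ge \max\{\ell_1,\ell_2\} = \ell_2$. For the matching upper bound, take
$$M \;=\; \{(1,1),(2,2),\dots,(\ell_1,\ell_1)\}\ \cup\ \{(\ell_1,\ell_1{+}1),(\ell_1,\ell_1{+}2),\dots,(\ell_1,\ell_2)\},$$
which meets all $\ell_1$ rows and all $\ell_2$ columns and has cardinality $\ell_2$; by the lemma it is a DEM set, so $\operatorname{dem}(G)\le \ell_2$, and equality follows.

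The step I expect to be the most delicate is the ``$\subseteq$'' half of the key lemma — certifying that a vertex really fails to monitor every edge outside its own row and column, uniformly over all choices of the second endpoint $y$. The bent-geodesic construction is designed precisely to handle all positions of $y$ at once; a couple of degenerate configurations (for instance $b=d$, where there are no horizontal moves at all, or $x$ being an endpoint of $e$ in the ``$\supseteq$'' part) only need a one-line check. Everything downstream of the lemma is bookkeeping.
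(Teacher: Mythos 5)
The paper does not prove Theorem~\ref{ThmPnPM} at all: it is quoted from Foucaud et al.~\cite{FKKMR21} as a known result and used as a black box (e.g.\ nothing in Section~3 or~4 depends on its internal argument). So there is no in-paper proof to compare against; your proposal has to stand on its own, and it does. Your key lemma is correct: in the grid the geodesics are exactly the monotone staircase walks, so the row-first bent geodesic witnesses $d_{G-e}(x,y)=d_G(x,y)$ for every horizontal edge $e$ outside row $a$ and every target $y$, the column-first bent geodesic handles vertical edges outside column $b$, and for an edge in $x$'s own row or column the straight path to the appropriate end of that row or column is the \emph{unique} geodesic (any path of length $|b-d|$ between two vertices of the same row must move in one coordinate only), so deleting the edge strictly increases that distance. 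That last observation already gives $e\in EM(x)$ without needing the precise ``$+2$'' detour cost, which is the only place your write-up over-claims slightly. The reduction to ``$M$ must meet every row and every column,'' the resulting lower bound $\max\{\ell_1,\ell_2\}$, and the diagonal-plus-tail construction attaining it are all sound, and the plus-shaped $EM(x)$ is consistent with Theorem~\ref{Th-forest}. This is essentially the argument one would expect in the source paper; I see no gap.
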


In the end of this section, we give the proof of Corollary \ref{cor-e}.

\noindent {\bf Proof of Corollary \ref{cor-e}:}
For any tree $T_n$,
$T_n+e_1$ is an unicyclic graph and $T_n+\{e_1,e_2\}$
is a tricyclic graph. From Theorems
\ref{th-dem-1} and \ref{Th-fes}, we have $\operatorname{dem}(T_n+e_1)
=\operatorname{dem}(T_n)+1=2$ and
$\operatorname{dem}(T_n+\{e_1,e_2\})=2$ or $3$.\QED

\section{The effect of deleted vertex}

A \emph{kipas} $\widehat{K}_n$ with $n\geq 3$ is the graph on $n+1$ vertices obtained from the join of $K_1$ and $P_n$, where $V(\widehat{K}_n)=\{v_0,v_1,\ldots,v_n\}$
and
$E(\widehat{K}_n)=\{v_0v_i\,|\,1\leq i\leq n\}
\cup \{v_{i}v_{i+1}\,|\,1\leq i\leq n-1\}$.

\begin{proposition}\label{Lemma:FAN GRAPH}
For $n\geq 7$, we have $\operatorname{dem}(\widehat{K}_n)
=\lfloor n/2 \rfloor$.
\end{proposition}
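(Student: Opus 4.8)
\textbf{Proof proposal for Proposition~\ref{Lemma:FAN GRAPH}.}
The plan is to establish the two matching bounds $\operatorname{dem}(\widehat{K}_n)\le \lfloor n/2\rfloor$ and $\operatorname{dem}(\widehat{K}_n)\ge \lfloor n/2\rfloor$ for $n\ge 7$, working with the kipas $\widehat{K}_n = K_1\vee P_n$ whose hub is $v_0$ and whose path is $v_1v_2\cdots v_n$. Since the diameter of $\widehat{K}_n$ is $2$, distances are easy to track: $d(v_i,v_j)=1$ if $v_iv_j\in E$ and $d(v_i,v_j)=2$ otherwise; removing an edge $e$ changes $d(x,y)$ exactly when $e$ lies on every shortest $x$--$y$ path, so I will constantly be checking whether two vertices at distance $2$ lose their unique length-$2$ connection or whether an edge of the path is the only length-$1$ link. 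The key structural fact to exploit is Theorem~\ref{Th-Ncover} (every edge incident to a chosen vertex is monitored) together with Proposition~\ref{Obs:CUV}: a path-edge $v_iv_{i+1}$ whose endpoints have no common neighbour other than $v_0$, and for which $v_0$ fails to monitor it, must be covered by choosing $v_i$ or $v_{i+1}$ into $M$.

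For the upper bound, I would exhibit an explicit DEM set of size $\lfloor n/2\rfloor$. The natural candidate is to take every other path-vertex, e.g.\ $M=\{v_2,v_4,v_6,\dots\}$ (the even-indexed vertices among $v_1,\dots,v_n$), giving $\lfloor n/2\rfloor$ vertices. One then checks: (i) each spoke $v_0v_i$ is monitored --- the spokes $v_0v_{2j}$ are monitored by $v_{2j}$ directly via Theorem~\ref{Th-Ncover}, while an odd spoke $v_0v_{2j+1}$ is monitored by some even vertex $v_{2k}$ because, after deleting $v_0v_{2j+1}$, the distance from $v_{2k}$ to $v_{2j+1}$ jumps from $2$ to $3$ whenever $v_{2k}$ is not adjacent to $v_{2j+1}$ on the path, which holds for a suitable choice of $k$; (ii) each path-edge $v_iv_{i+1}$ is monitored --- if one of its endpoints is in $M$ we are done by Theorem~\ref{Th-Ncover}, and otherwise (so $i,i+1$ are both odd, impossible) the parity pattern guarantees at least one endpoint is even, so every path-edge is automatically covered. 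I would verify the boundary edges $v_0v_1$, $v_0v_n$ and $v_1v_2$, $v_{n-1}v_n$ separately since small-index effects are where such constructions usually break, and this is exactly why the hypothesis $n\ge 7$ appears.

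For the lower bound, let $M$ be a DEM basis. The crucial observation is that for most path-edges $v_iv_{i+1}$ with $2\le i\le n-2$, the only common neighbour of $v_i$ and $v_{i+1}$ is $v_0$, and $v_0$ does \emph{not} monitor $v_iv_{i+1}$ because deleting this edge leaves $d(v_0,v_i)=d(v_0,v_{i+1})=1$ unchanged; hence by Proposition~\ref{Obs:CUV} such an edge forces $M\cap\{v_i,v_{i+1}\}\ne\emptyset$. These forcing constraints on a long sub-path of $P_n$ form an interval covering / independent-set type condition that already forces roughly $n/2$ vertices: partitioning the relevant path-edges into $\lfloor n/2\rfloor$ pairwise "independent" edges (edges sharing no endpoint) and noting each needs its own vertex of $M$. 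The main obstacle --- and the part I expect to be most delicate --- is handling the few edges near the ends of the path ($v_0v_1, v_1v_2, v_{n-1}v_n, v_0v_n$) and the spokes, where $v_0$ or a degree-$2$ path endpoint can act as an extra monitor, so the naive forcing argument is slightly lossy; one must argue that even accounting for these exceptions, no DEM set of size $\lfloor n/2\rfloor - 1$ can simultaneously cover all forced path-edges, which I would do by a short case analysis on how the "saved" vertex near an endpoint interacts with the interior forcing constraints, again using $n\ge 7$ to guarantee enough interior edges that the interior lower bound $\lfloor n/2\rfloor$ cannot be beaten. Combining the two bounds gives $\operatorname{dem}(\widehat{K}_n)=\lfloor n/2\rfloor$.
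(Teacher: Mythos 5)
Your proposal follows the paper's proof almost exactly: the same DEM set consisting of the even-indexed path vertices for the upper bound, and the same lower-bound mechanism of showing via Proposition~\ref{Obs:CUV} that each path-edge $v_iv_{i+1}$ can only be monitored by its own endpoints, so that any DEM set must be a vertex cover of $P_n$ and hence has size at least $\beta(P_n)=\lfloor n/2\rfloor$. Two refinements are needed. First, your criterion for when an even vertex $v_{2k}$ monitors an odd spoke $v_0v_{2j+1}$ is too weak: non-adjacency on the path is not enough, since if $v_{2k}$ and $v_{2j+1}$ are at path-distance $2$ they share a path-neighbour that provides an alternative geodesic of length $2$ surviving the deletion of $v_0v_{2j+1}$; you need path-distance at least $3$, and $n\ge 7$ is exactly what guarantees such a $v_{2k}$ exists for every odd spoke (the paper takes the witness $v_{i+3}$ for $1\le i\le n-4$ and $v_2$ for $n-3\le i\le n$). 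Second, your worry that the forcing argument is ``lossy'' near the ends of the path is unfounded: the edges $v_1v_2$ and $v_{n-1}v_n$ are forced as well (the only outside neighbours of their endpoints are $v_0$ and $v_3$, respectively $v_0$ and $v_{n-2}$, and none of these monitors the edge since a length-$2$ route through $v_0$ survives), so the full vertex-cover bound applies directly and no extra case analysis is required.
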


\begin{proof}
Let $P_n$ be the subgraph
of $\widehat{K}_n$
with vertex  set $\{v_{i} \,|\,1\leq i\leq n\}$ and edge set $\{v_{i}v_{i+1} \,|\,
1\leq i\leq n-1\}$.
First, we prove that $\operatorname{dem}(\widehat{K}_n)\geq
\lfloor n/2 \rfloor$.
Let $M$ be a DEM set of $\widehat{K}_n$ with  the  minimum  cardinality.
For any vertices $v_i,v_j\in V(\widehat{K}_n)$,
we have
\begin{equation*}\label{equ:dij}
d_{\widehat{K}_n}(v_i,v_j)=
\begin{cases}
1, & \mbox{if } i=0 \mbox{~or~} j=0\mbox{~or~} |i-j| =1; \\
2, & \mbox{if } 1\leq i, j \leq n \mbox{~and~} |i-j| \geq 2.
\end{cases}
\end{equation*}
For any edge $v_iv_{i+1}$ ($2\leq i\leq n-2$), we have
$\left(N_G(v_i)\cup N_G(v_{i+1})\right)\setminus
\{v_i,v_{i+1}\}=
\{v_{i-1},v_0,v_{i+2}\}$.
Since
$d_G(v_i,v_0)=d_{G-v_iv_{i+1}}(v_{i},v_0)=1$,
$d_G(v_{i+1},v_0)=d_{G-v_iv_{i+1}}(v_{i+1},v_0)=1$,
$d_{G-v_iv_{i+1}}(v_{i}, \linebreak v_{i-1})$
$=d_G(v_{i},v_{i-1})=1$,
$d_{G-v_iv_{i+1}}(v_{i+1},v_{i-1})$
$=d_G(v_{i+1},v_{i-1})=2$,
$d_{G-v_iv_{i+1}}(v_{i+1},v_{i+2})$
$=d_G(v_{i+1},v_{i+2})=1$,
and
$d_{G-v_iv_{i+1}}(v_{i},v_{i+2})$
$=d_G(v_{i},v_{i+2})=2$,
it follows that $v_iv_{i+1} \notin $
$ EM(v_{i+2})\cup EM(v_{i-1})\cup EM(v_{0})$.
From Proposition \ref{Obs:CUV},
the edge $\,v_iv_{i+1}\,$ can only be monitored
\eject

\noindent by the vertex in
$\{v_i,v_{i+1}\}$.
Similarly, the edge $v_iv_{i+1}$
is only monitored by the vertex in $\{v_i,v_{i+1}\}$, where $i=1, n-1$.
Therefore, $M\cap\{v_i,v_{i+1}\} \neq \emptyset$
for $1\leq i\leq n-1$, that is,
$M$ is a vertex cover set of
$P_n$.
Note that the vertex covering number
of $G$ is $\beta(G)$.
Since $\beta(P_n)=\lfloor n/2 \rfloor$,
it follows that
$\operatorname{dem}(\widehat{K}_n) \geq
\lfloor n/2 \rfloor$.

\medskip
Next, we prove that $\operatorname{dem}(\widehat{K}_n)
\leq \lfloor n/2 \rfloor$.
Let $M=\{v_i\,|\,i\equiv 0\pmod{2},1\leq i\leq n\}$.
For any edge $e \in E(P_n) \cup
\{v_0v_i\,|\, i\equiv 0\pmod{2},1\leq i\leq n\}$,
it follows from  Theorem \ref{Th-Ncover} that
$e$ is monitored by the vertex in $M$.
In addition,
for any edge $v_0v_i \in  \{v_0v_i\,|\,i\equiv 1\pmod{2},1\leq i\leq n\}$,
since $n\geq 7$, it follows that
there exists $j$ such that
$d_{G}(v_i,v_j)=2$
and $d_{G- v_0v_i}(v_{i},v_j)=3$,
where $j=i+3$ for $1\leq i\leq n-4$
and $j=2$ for $n-3\leq i\leq n$, and hence $v_0v_i \in EM(v_j)$.
Since any edge $v_0v_i\in E(\widehat{K}_n)$ can be monitored by the vertex in
$M$, it follows that $\operatorname{dem}(\widehat{K}_n) \leq \lfloor n/2 \rfloor$, and hence $\operatorname{dem}(\widehat{K}_n) =\lfloor n/2 \rfloor.$
\end{proof}

\noindent {\bf Proof of Theorem \ref{Obs:dv1}}
Note that
$\widehat{K}_{2k+2}=K_1
\vee P_{2k+2}$, where $V(K_1)=\{v_0\}$.
From Theorem
\ref{th-dem-1},
we have $\operatorname{dem}(P_{2k+2})=1$.
From Lemma \ref{Lemma:FAN GRAPH}, we have
$\operatorname{dem}(\widehat{K}_{2k+2})=k+1$, and hence
$\operatorname{dem}(\widehat{K}_{2k+2})
-\operatorname{dem}(\widehat{K}_{2k+2}-v_0)
=\operatorname{dem}(\widehat{K}_{2k+2})- \operatorname{dem}(P_{2k+2})=k.$
Let $G_1=\widehat{K}_{2k+2}$
and $H_1=P_{2k+2}$.
Then
$\operatorname{dem}(G_1)
-\operatorname{dem}(H_1)
=\operatorname{dem}(\widehat{K}_{2k+2})- \operatorname{dem}(P_{2k+2})=k$,
where $H_1$ is not a spanning subgraph
of $G_1$.

\medskip
Let $G_{2k+3}$ be a graph with vertex set
$V(G_{2k+3})=\{u_i\,|\,1\leq i\leq k+1\}
\cup \{v_i\,|\,0\leq i\leq k+1\}$
and edge set $E(G_{2k+3})=\{v_0u_i\,|\,1\leq i\leq k+1\}
\cup \{u_iv_i\,|\,1\leq i\leq k+1\}$.
Obviously, we have
$G_{2k+3}\setminus v_0 \cong (k+1)K_2$.
From  Observation
\ref{Obs:disjoint} and Theorem
\ref{th-dem-n}, we have $\operatorname{dem}(G_{2k+3}-v_0)
= \operatorname{dem}((k+1)K_2)= (k+1)\operatorname{dem}(K_2)=k+1$.
Since $G_{2k+3}$ is a tree,
it follows from Theorem \ref{th-dem-1} that
$\operatorname{dem}(G_{2k+3})=1$, and hence
$\operatorname{dem}(G_{2k+1}\setminus v_0)-
\operatorname{dem}(G_{2k+1})=k$.
Let $G_2=G_{2k+1}$
and $H_2=(k+1)K_2$.
Then
$\operatorname{dem}(H_2)
-\operatorname{dem}(G_2)
=\operatorname{dem}((k+1)K_2)- \operatorname{dem}(G_{2k+1})=k$,
where $H_2$ is not a spanning subgraph
of $G_2$, as desired. \smallskip \QED

\medskip
Note that $G_{2k+3}\setminus v_0 \cong (k+1)K_2$ is disconnected graph. For the connected graphs, we can also show that
there is a connected subgraph $H$
such that
$\operatorname{dem}(H)-\operatorname{dem}(G)$ can be arbitrarily large; see
Theorem \ref{th-G(k)}.

\medskip
The \emph{conical graph} $C(\ell,k)$ is a graph obtained by taking adjacency from
a center vertex $c$ to the first layer of
Cartesian product of $P_{\ell}$ and $C_k$, where $\ell \geq 1$ and $k\geq 3$.

\begin{figure}[!ht]
\centering
\includegraphics[width=5.7cm]{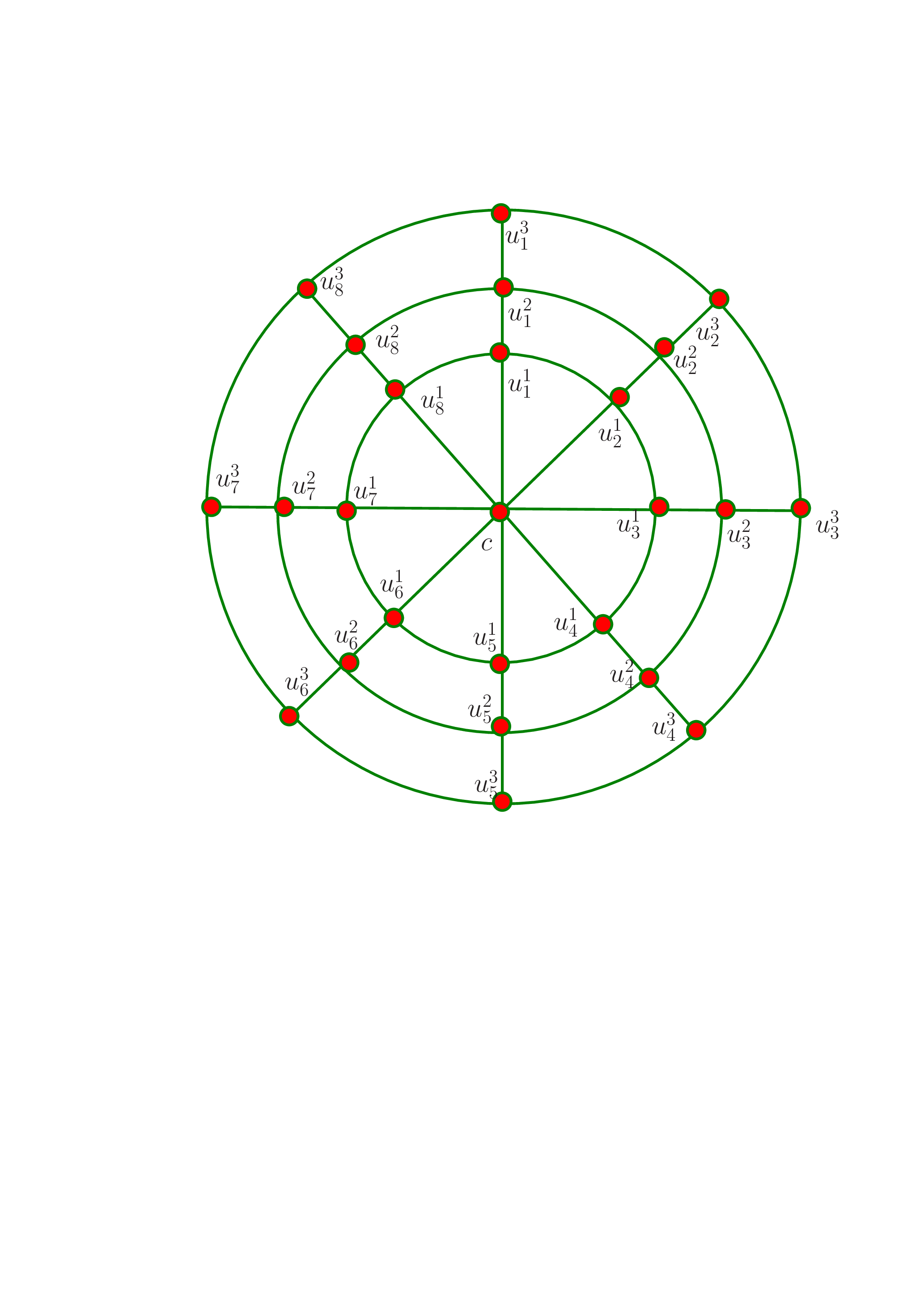}\vspace*{-.5cm}
\caption{The conical graph $C(3,8)$}
\label{Conical graph}\vspace*{-1mm}
\end{figure}

\medskip
Let the vertex set $V(C(\ell, k))=\{c\}\cup \{u_i^j\,|\,
1\leq i \leq k,1\leq j \leq \ell \}$
and the edge set $E(C(\ell, k))=(\cup_{i=1}^{\ell} E(C_i))\cup
(\cup_{i=1}^{k}{E(P_i)})$,
where $E(C_i)=\{ u_{k}^i u_1^i\}\cup
\{ u_j^i u_{j+1}^i\,|\,1\leq j\leq k-1\} $
($1\leq i\leq \ell$),
$E(P_i)=\{cu^1_i\}\cup
\{u^j_i u^{j+1}_i\,|\,1\leq j\leq \ell-1\}$
($1\leq i\leq k$).
The conical graph $C(3,8)$ is shown
in Figure \ref{Conical graph}.

\medskip
For $\ell=1$, the graph $C(1, k)$ is the wheel graph $W_k$, which is  formed
by connecting a single vertex $c$ to all the vertices of cycle $C_k$.
It is clear that $|V(C(\ell,k))|=k \ell+1$ and $e(C(\ell,k))=2 k \ell$.

\begin{lemma}
\label{Lem:EMCN}
Let $n\geq 3$ be an integer. For $v\in V(C_n)$,
we have
$$|EM(v)\cap E(C_n)|=
\begin{cases}
  n-1 & \mbox{if } n \text{ is odd}, \\
  n-2 &  \mbox{if } n \text{ is even}.
\end{cases}
$$
\end{lemma}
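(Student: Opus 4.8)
The plan is to work directly with the structure of shortest paths on the cycle $C_n$ with vertex set, say, $\{w_0,w_1,\dots,w_{n-1}\}$ in cyclic order, and to fix $v=w_0$ by symmetry (the cycle is vertex-transitive, so the count $|EM(v)\cap E(C_n)|$ does not depend on the choice of $v$). An edge $e=w_iw_{i+1}$ lies in $EM(w_0)$ precisely when there is some vertex $w_j$ such that $e$ belongs to \emph{all} $w_0$--$w_j$ geodesics in $C_n$; equivalently, removing $e$ strictly increases $d_{C_n}(w_0,w_j)$. First I would recall the elementary fact that for a vertex $w_j$ with $0<j<n$, the two arcs joining $w_0$ to $w_j$ have lengths $j$ and $n-j$; hence there is a \emph{unique} shortest path (namely the shorter arc) exactly when $j\neq n-j$, i.e. when $n$ is odd, or when $n$ is even and $j\neq n/2$. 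When $n$ is even and $j=n/2$, there are two geodesics of equal length $n/2$, so no single edge lies on all of them, and $w_{n/2}$ contributes nothing.

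Next I would translate "$e$ lies on the unique $w_0$--$w_j$ geodesic" into a statement about which edges get monitored. If $n$ is odd, write $n=2m+1$. For each $j$ with $1\le j\le m$ the unique geodesic is $w_0w_1\cdots w_j$, which uses exactly the edges $w_0w_1,\dots,w_{j-1}w_j$; and for each $j$ with $m+1\le j\le 2m$ the unique geodesic is $w_0w_{n-1}\cdots w_j$, using the edges $w_0w_{n-1},w_{n-1}w_{n-2},\dots,w_{j+1}w_j$. Taking $j=m$ in the first family and $j=m+1$ in the second, I would observe that the union of all these geodesic edge-sets is exactly the $2m$ edges of $C_n$ other than the single "antipodal" edge $w_mw_{m+1}$, and that this last edge is monitored by no vertex $w_j$ (it appears on a $w_0$--$w_j$ geodesic only for $j$ with $m+1\le$ something, but for every such $j$ there is also a geodesic avoiding it — one checks $w_mw_{m+1}$ lies on \emph{a} geodesic from $w_0$ only to $w_m$ or $w_{m+1}$ going "the long way", which is never shortest). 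Hence $|EM(w_0)\cap E(C_n)|=2m=n-1$. If $n$ is even, write $n=2m$; the same analysis over $j=1,\dots,m-1$ (short side) and $j=m+1,\dots,2m-1$ (long side) shows the monitored edges are all of $E(C_n)$ except the \emph{two} edges $w_{m-1}w_m$ and $w_mw_{m+1}$ incident to the antipodal vertex $w_m$, since any geodesic reaching $w_m$ can be routed on either side and these two edges are each avoidable. That gives $|EM(w_0)\cap E(C_n)|=2m-2=n-2$.

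The main obstacle — and the only step needing care rather than bookkeeping — is the "avoidability" argument showing that the one (resp. two) claimed edge(s) are genuinely \emph{not} in $EM(w_0)$: one must verify that for \emph{every} target vertex $w_j$, the edge $w_mw_{m+1}$ (and, in the even case, also $w_{m-1}w_m$) fails to lie on \emph{all} $w_0$--$w_j$ geodesics. For targets on the "short" side this is immediate since the unique geodesic stays on that side; the delicate case is $j$ near $m$, where I would simply exhibit a concrete shortest path avoiding the edge in question, using that $d_{C_n}(w_0,w_m)=m$ is realized by \emph{both} arcs when $n=2m$, and that for $n=2m+1$ the edge $w_mw_{m+1}$ would only be forced by a target whose unique geodesic uses it, which one checks never happens. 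Once these exclusions are pinned down, the equality in both parities follows by the counting above, completing the proof. $\QED$
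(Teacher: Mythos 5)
Your proposal is correct and follows essentially the same route as the paper's proof: fix one vertex by symmetry, observe that every non‑antipodal target has a unique geodesic (the shorter arc) so each edge on such an arc is monitored, and check directly that the one antipodal edge (odd $n$), respectively the two edges incident to the antipodal vertex (even $n$), are avoidable by every target and hence unmonitored. The only difference is that you carry out the even case explicitly, whereas the paper treats the odd case in detail and dismisses the even case with ``similarly.''
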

\begin{proof}
Let $G=C_n$ be cycle  with $V(G)=\{v_i\,|\, 1\leq i\leq n\}$
and  $E(G)=\{v_iv_{i+1}\,|\, 1\leq i\leq n-1\}\cup\{v_nv_1\}$.
Without loss of generality, let $v=v_1$.
Suppose that $n$ is odd and
$e_1=v_{\lfloor n/2\rfloor+1}v_{\lfloor n/2\rfloor+2}$.
Since $d_G(v_1,v_{\lfloor n/2\rfloor+1})=
d_{G-e_1}(v_1,v_{\lfloor n/2\rfloor+1})$
and $d_G(v_1,v_{\lfloor n/2\rfloor+2})=
d_{G-e_1}(v_1,v_{\lfloor n/2\rfloor+2})$,
it follows that $e_1\notin EM(v_1)$.
For any $e\in \{v_iv_{i+1}\,|\, 1\leq i\leq \lfloor n/2\rfloor\}$,
since $d_G(v_1,v_{i+1})\neq
d_{G-e}(v_1,v_{i+1})$,
it follows that  $e\in EM(v_1)$.
For any $e\in \{v_iv_{i+1}\,|\, \lfloor n/2\rfloor+2\leq i\leq n-1\}$,
since $d_G(v_1,v_{i})\neq
d_{G-e}(v_1,v_{i})$,
it follows that $e\in EM(v_1)$.
From Theorem
\ref{Th-Ncover}, we have $v_nv_1\in EM(v_1)$.
Therefore,
$EM(v_1)=\{v_1v_2, v_2v_3, \ldots$, $v_{\lfloor n/2\rfloor }
v_{\lfloor n/2\rfloor+1}$,
$ v_1v_n,v_nv_{n-1}, \ldots$, $v_{\lfloor n/2\rfloor+3}
v_{\lfloor n/2\rfloor+2}\}$,
and hence $|EM(v)\cap E(C_n)|=n-1$.
Similarly, if $n$ is even, then $|EM(v)\cap E(C_n)|=n-2$.
\end{proof}

\begin{theorem}\label{Thm:C*(n,k)}
For $k \geq 9$ and
$\ell \geq 2$,
we have
$$
\operatorname{dem}(C(\ell,k))=
\begin{cases}
 \sum_{i=1}^{\ell}{
\lceil k/(4i-2)\rceil}, & \mbox{if }
\ell \leq  a_k  ;\\[0.2cm]
\sum_{i=1}^{a_k}{
\lceil k/(4i-2)\rceil}+2(\ell-a_k), & \mbox{if }
\ell \geq a_k+1,
\end{cases}
$$
where $a_k=\lfloor k/4+(1+(-1)^{k+1})/8  \rfloor$.
\end{theorem}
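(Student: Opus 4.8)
The plan is to analyze $C(\ell,k)$ layer by layer, treating the concentric cycles $C_1,\dots,C_\ell$ together with the central vertex $c$ and the radial path-edges, and to reduce the problem of monitoring all edges to a covering problem on each cycle layer. First I would study the effect of a single vertex $v=u_i^j$ on the edges of its own cycle $C_j$: by a distance computation inside $C(\ell,k)$, a probe at $u_i^j$ sees ``far'' along $C_j$ only up to the point where going through the inner layers (ultimately through $c$) becomes a competing shortest path. Since the radial distance from $u_i^j$ to $c$ is $j$, the shortcut through $c$ has length $2j$, so a probe at $u_i^j$ monitors an arc of $C_j$ of length roughly $2j-1$ on each side before the shortcut kicks in; this is exactly why the denominator $4i-2$ appears (an arc of total length about $2(2i-1)=4i-2$ is ``covered'' per probe on layer $i$). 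This localized analysis, analogous to Lemma~\ref{Lem:EMCN} but with the cycle embedded in the cone, is the technical heart of the argument, and I expect it to be the main obstacle: one must carefully verify, using $k\geq 9$, that shortest paths between two vertices on the same layer either stay on that layer or descend through strictly smaller layers to $c$, and never wander ``outward,'' and one must handle the parity of $k$ (hence the correction term in $a_k$) and the boundary layers $j=1$ and $j=\ell$ separately.

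Next I would establish the lower bound. The key observation is that the ``rung'' edges $u_i^j u_i^{j+1}$ and $cu_i^1$ behave like the pendant-type edges in Proposition~\ref{Obs:CUV}: each edge of $C_j$ can essentially only be monitored by a probe lying on layer $j$ or on a layer inside it, and moreover a probe on a deep layer $j'<j$ still only reaches an arc of bounded length on $C_j$. By counting how many edges of $C_j$ a single probe (wherever it sits) can monitor — at most about $4j-2$ of them when $j\leq a_k$, and at most $2k$ trivially but with the binding constraint coming from the outermost layers when $j\geq a_k+1$ — I would derive that any DEM set must contain at least $\lceil k/(4i-2)\rceil$ probes ``responsible'' for layer $i$ when $i\leq a_k$, and at least $2$ probes per layer for the outer layers $i\geq a_k+1$ (here the bound $2$ per outer layer comes from the fact that once $4i-2\geq k$ the covering number of the cycle by such arcs is governed instead by a different, constant, mechanism — two diametrically placed probes on that layer, or adjacent layers, suffice and are needed). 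Summing over layers gives the stated lower bound; the threshold $a_k=\lfloor k/4 + (1+(-1)^{k+1})/8\rfloor$ is precisely the largest $i$ with $\lceil k/(4i-2)\rceil \geq 3$ (equivalently $4i-2 \leq k/2$ up to the parity adjustment), so that $\lceil k/(4i-2)\rceil = 2$ for $i>a_k$, which is why the two expressions agree at the transition.

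For the upper bound I would exhibit an explicit DEM set: on each layer $i\leq a_k$, place $\lceil k/(4i-2)\rceil$ probes spaced as evenly as possible around $C_i$, so that every edge of $C_i$ lies in the monitored arc of some probe on layer $i$; on each outer layer $i\geq a_k+1$, place exactly $2$ probes at (nearly) antipodal positions. Then I would check that (a) every cycle edge on every layer is monitored — for inner layers directly by the spacing, for outer layers because two antipodal probes on $C_i$ monitor all of $C_i$ once the available arc-length $\approx 4i-2$ exceeds $k/2$ — and (b) every radial edge $u_i^j u_i^{j+1}$ and every spoke $cu_i^1$ is monitored, which follows from Theorem~\ref{Th-Ncover} applied to whichever chosen probe is incident to it, together with a short distance argument for the spokes that are not incident to any chosen probe (using a probe two or three layers out whose shortest path to $c$ is forced through that spoke when $k\geq 9$). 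Assembling (a) and (b) shows the chosen set monitors $E(C(\ell,k))$, matching the lower bound.

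Finally I would reconcile the two cases of the formula: when $\ell\leq a_k$ the set above uses $\sum_{i=1}^{\ell}\lceil k/(4i-2)\rceil$ probes; when $\ell\geq a_k+1$ it uses $\sum_{i=1}^{a_k}\lceil k/(4i-2)\rceil + 2(\ell-a_k)$ probes, since each of the $\ell-a_k$ outer layers contributes $2$. The hypotheses $k\geq 9$ and $\ell\geq 2$ are used to guarantee $a_k\geq 2$ (so the first sum is nonempty and the layer-$1$ count $\lceil k/2\rceil$ is correct) and to make all the ``shortcut through $c$'' distance comparisons strict; I would isolate these numeric inequalities into a preliminary lemma so the main proof stays clean. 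The expected main difficulty remains the per-layer reach computation in the cone, since unlike the plain cycle of Lemma~\ref{Lem:EMCN} the ambient graph offers many competing geodesics, and getting the exact arc length $4i-2$ (rather than an off-by-one bound) is what makes the ceiling expression exact rather than merely an estimate.
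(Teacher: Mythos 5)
Your overall architecture matches the paper's: compute the exact ``reach'' $4i-2$ of a probe on layer $i$ along its own cycle $C_i$ (with the shortcut of length $2i$ through $c$ as the competing geodesic), deduce a covering lower bound of $\lceil k/(4i-2)\rceil$ per inner layer and $2$ per outer layer, and match it with evenly spaced probes plus a check on the radial edges. However, your lower bound as sketched has a genuine gap. Summing per-layer probe requirements only yields a lower bound on $|M|$ if the sets of probes counted for different layers are disjoint. The paper secures this by proving the much stronger structural fact that a vertex of $C_i$ monitors \emph{no} edge of $C_j$ for $j\neq i$, and that $c$ monitors no cycle edge at all (its Fact inside the proof of Theorem~\ref{Thm:C*(n,k)}); hence the edges of $C_i$ must be monitored entirely by probes sitting on $C_i$, and the per-layer counts add. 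You instead assert that ``each edge of $C_j$ can essentially only be monitored by a probe lying on layer $j$ or on a layer inside it'' and that ``a probe on a deep layer $j'<j$ still only reaches an arc of bounded length on $C_j$.'' If that picture were correct, a single probe could be ``responsible'' for several layers and your summation would double-count; since it is in fact incorrect (the cross-layer reach is zero, by the same distance computation you already set up), you need to replace this step with the disjointness statement and prove it.

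A secondary error: your characterization of the threshold, ``$a_k$ is the largest $i$ with $\lceil k/(4i-2)\rceil\geq 3$, equivalently $4i-2\leq k/2$,'' is wrong. For $k=9$ one has $a_9=\lfloor 9/4+1/4\rfloor=2$ while $\lceil 9/(4\cdot 2-2)\rceil=2$, not $\geq 3$. The correct mechanism is that $a_k$ is the largest $i$ for which the shortcut through $c$ (length $2i$) still beats the cycle distance for some pairs on $C_i$, i.e.\ roughly $2i\leq\lfloor k/2\rfloor$, so $a_k\approx k/4$; beyond that the probe behaves as in the plain cycle of Lemma~\ref{Lem:EMCN} and monitors $k-1$ or $k-2$ edges, whence $\lceil k/(k-1)\rceil=\lceil k/(k-2)\rceil=2$ probes per outer layer. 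This does not change the final formula, but your stated reason for the case split would lead you to place the transition at the wrong layer. The upper-bound half of your plan (spacing $\lceil k/(4i-2)\rceil$ probes on inner layers, two near-antipodal probes on outer layers, then verifying the spokes and rungs via probes a few positions away) is essentially the paper's construction and is fine once the reach computation is done exactly.
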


\begin{proof}
Let $G=C(\ell, k)$ with
$V(G)=\{c\}\cup \{u_i^j\,|\,
1\leq i \leq k,1\leq j \leq \ell \}$
and $E(G)=(\cup_{i=1}^{\ell} E(C_i))\cup
(\cup_{i=1}^{k}{E(P_i)})$,
where $E(C_i)=\{ u_{k}^i u_1^i\}\cup
\{ u_j^i u_{j+1}^i\,|\,1\leq j\leq k-1\}$, $E(P_i)=\{cu^1_i\}\cup
\{u^j_i u^{j+1}_i\,|\,1\leq j\leq \ell-1\}$.
Let $M$ be a DEM set of $G$ with $M=\operatorname{dem}(C(\ell,k))$.
\begin{fact}\label{Lem:C_{n,k}}
For any vertex $v\in V(C_i)$,
we have
$$|EM(v)\cap E(C_i)|=
\begin{cases}
4i-2, & \mbox{if } 1 \leq i \leq
a_k;\\
k-2,  &
\mbox{if $k$ is even and } i \geq
 a_k+1; \\
k-1,   &
\mbox{if $k$ is odd and } i
\geq a_k+1.
\end{cases}
$$
and  $|EM(v)\cap E(C_j)|=0$, where $1\leq j\neq i\leq \ell$.
Furthermore, we have
$|EM(c)\cap E(C_i)|=0$.
\end{fact}
\begin{proof}
For any vertices $u^i_s, u^i_t\in V(C_i)$, where $1 \leq s, t\leq k$ and $1 \leq i  \leq \ell $, since there exists a path
$u^i_s\ldots,u^{1}_scu^1_t\ldots u^{i-1}_tu^{i}_t$ from $u^i_s$ to $u^i_t$, it follows that $d_{G}(u^i_s,u^i_t)\leq 2i$.
It is easy to see that there are two possible types of shortest paths $P_{u^i_su^i_t}$ from $u^i_s$ to $u^i_t$:

\medskip
\textbf{Type 1:} If $d_{C_i}(u^i_s,u^i_t)\geq 2i$,
then $P_{u^i_su^i_t}=u^i_s\ldots,u^{1}_scu^1_t\ldots u^{i-1}_tu^{i}_t$;

\medskip
\textbf{Type 2:} If $d_{C_i}(u^i_s,u^i_t)< 2i$,
then  the shortest path $P_{u^i_su^i_t}\sqsubseteq C_i$, where $C_i$ is subgraph of $G$.

\medskip\noindent
Therefore, $d_{G}(u^i_s,u^i_t)=\min
\{2i, d_{C_i}(u^i_s,u^i_t)\}$.
Suppose that
$1\leq i\leq a_k$.
For $s-2i+1\leq  t \leq s-1$, since $d_{C_i}(u^i_s,u^i_t)< 2i$, it follows that $d_{G}(u^i_s,u^i_t)=\min
\{2i, d_{C_i}(u^i_s,u^i_t)\}=d_{C_i}(u^i_s,u^i_t)$, and hence $d_{G}(u^i_s,u^i_tu^i_{t+1})= d_{C_i}(u^i_s,u^i_{t+1})$.
Thus, $d_{G}(u^i_s,u^i_tu^i_{t+1})=d_{G-u^i_{t}u^i_{t+1}}(u^i_s,u^i_tu^i_{t+1})$ and $d_{G-u^i_{t}u^i_{t+1}}(u^i_s,u^i_t)>d_{C_i}(u^i_s,u^i_{t+1})+1=d_{G}(u^i_s,u^i_t)$, and so
$u^i_{t}u^i_{t+1}\in EM(u^i_s)$.

\smallskip
Similarly, for any edge $u^i_{t}u^i_{t+1}\in E(C_i)$, where $s\leq  t \leq s+2i-2$,
since $d_G(u^i_s,u^i_{t+1}))\neq d_{G-u^i_{t}u^i_{t+1}}(u^i_s,u^i_{t+1})$,
it follows that $u^i_{t}u^i_{t+1}\in EM(u^i_s)$.
Therefore, $\{u^i_{t}u^i_{t+1}\,|\,s-2i+1\leq  t \leq s+2i-2\} \subseteq EM(u^i_{s})$,
where the subscripts are
taken modulo $k$, that is, $u^i_{k+1}=u^i_1$.\smallskip

\medskip\noindent
For any $u^i_ju^i_{j+1}\! \in\! E(C_i)-\left(\{u^i_{t}u^i_{t+1}\,|\,s-2i+1\leq t\leq s+2i-2\}\! \cup\! \{u^i_{s-2i}u^i_{s-2i-1}, u^i_{s+2i+1}u^i_{s+2i}\}\right)$,
since $d(u^i_s,u^i_{j})=d(u^i_s,u^i_{j+1})=2i$
and $d_{G - u^i_ju^i_{j+1}}(u^i_s,u^i_j)
=d_{G - u^i_ju^i_{j+1}}(u^i_s,u^i_{j+1})=2i$,
it follows that $u^i_ju^i_{j+1} \notin EM(u_i)$.
In addition,
let $e_1=u^i_{s-2i}u^i_{s-2i-1}$ and
$e_2=u^i_{s+2i-1}u^i_{s+2i}$.
Then,
$d_G(u^i_s,e_1)=d_G(u^i_s,u^i_{s-2i-1})=2i$
and $d_G(u^i_s,e_2)=d_G(u^i_s,u^i_{s+2i-1})=2i$.
Since
$d_{G- e_1}(u^i_s,u^i_{s-2i})$
$=d_{G}(u^i_s,u^i_{s-2i})=2i$
and
$d_{G- e_2}(u^i_s,u^i_{s+2i})$
$=d_{G}(u^i_s,u^i_{s+2i})=2i$,
it follows that $e_1,e_2 \notin EM(u_i)$, and hence $|EM(u^i_s)\cap E(C_i)|=|\{u^i_{t}u^i_{t+1}\,|\,s-2i+1\leq t\leq s+2i-2\}|=4i-2$
for any vertex $u^i_s\in V(C_i)$, where the subscripts are
taken modulo $k$.

\medskip
Suppose that
$i\geq a_k+1$.
%
For
$i\geq a_k+1$
and $1\leq s,t\leq k$,
if $u^i_s,u^i_t\in V(C_i)$,
then
$d_G(u^i_s,u^i_t)=d_{C_i}(u^i_s,u^i_t)$.
For any cycle $C_i$ with length $k$,
if $k$ is even,
then it follows from Lemma \ref{Lem:EMCN} that $|EM(v)\cap E(C_i)|=k-2$.
If $k$ is odd, then it follows from Lemma \ref{Lem:EMCN} that $|EM(v)\cap E(C_i)|=k-1$,
as desired.

For any vertex $u^i_s\in V(C_i)$ and any edge
$e=u^j_{m}u^j_{m+1} \in E(C_j)$, where $1\leq j \neq i\leq \ell$
and $1 \leq m\leq  k$,
since $d_{G- e}(u^i_s,u^j_{m})$
$=d_{G}(u^i_s,u^j_{m})$ and
$d_{G- e}(u^i_s,u^j_{m+1})$
$=d_{G}(u^i_s,u^j_{m+1})$,
it follows that $e \notin EM(u^i_s)$,
and hence $|EM(u^i_s)\cap E(C_j)|=0$.

\medskip
For any edge $e_3=u^i_su^i_{s+1} \in E(C_i)$,
we have $d_G(c,u^i_s)=d_{G}(c,u^i_{s+1})=i$,
and hence $d_{G- e_3}(c,u^i_s)
=d_{G- e_3}(c,u^i_{s+1})=i$, and so $e_3 \notin EM(c)$.
Therefore,
$|EM(c)\cap E(C_i)|=0$.
\end{proof}

Suppose that $\ell \geq a_k+1$. Since $e(C_i)=k$, it follows from Fact \ref{Lem:C_{n,k}} that $|M\cap E(C_i)|\geq 2$ for $a_k+1\leq i\leq \ell$ and
$|M\cap E(C_i)|\geq \lceil k/(4i-2)\rceil$ for $1\leq i\leq a_k$, and so
$\operatorname{dem}(G) \geq \sum_{i=1}^{a_k}{\lceil k/(4i-2)\rceil}+2(\ell-a_k)$.

Let $M=\cup_{i=1}^{i=k}{M_i}$,
where
$$M_i=
\begin{cases}
\{u^i_j\,|\,1\leq j \leq k,
j\equiv 1\bmod{(4i-2)}\}, & \mbox{if }  i \leq a_k;\\
\{u^i_1,u^i_{\lceil k/2\rceil}\}, &
\mbox{if } a_k +1 \leq i \leq \ell.  \\
\end{cases}
$$
Therefore, $e\in \cup_{x\in M_i}EM(x) \subseteq \cup_{x\in M}EM(x)$
for any edge $e\in E(C_i)$, where $1\leq i \leq \ell$. It suffices to prove that $e\in \cup_{x\in M}EM(x)$
for each edge in $E(P_i)$, where $1\leq i\leq k$.
For some vertex $u^1_i\in M_1$ and any $u^1_j\in V(G)$, where $1\leq i\neq j\leq k$,
if $j \in \{1,2, \ldots,i-3, i+3,\ldots,k\}$,
where the subscripts are
taken modulo $n$,
then
$d_G(u^1_i,u^1_j)\neq
d_{G-cu^1_j}(u^1_i,u^1_j)$, and hence $cu^1_j\in EM(u^1_i)$.
Similarly, for $1\leq t \leq \ell-1$,
since $d_G(u^1_i,u^{t}_j)\neq
d_{G-u^t_ju^{t-1}_j}(u^1_i,u^t_j)$ for $j\in \{1,2, \ldots,i-3, i+3,\ldots,k\}$,
it follows that $u^t_ju^{t-1}_j  \in EM(u^1_i)$, and hence $EM(u^1_i)
=\{u^1_iu^1_{i+1}, u^1_iu^1_{i-1}, cu_i^{1}\}\cup\{cu_{j}^1\,|\,j \in \{1,2, \ldots,i-3, i+3,\ldots,k\}\} \cup\{u^{p-1}_ju^{p}_{j}\,|\,2\leq p \leq \ell-1, j \in \{1,2, \ldots,i-3, i+3,\ldots,k\} \}$.
Without loss of generality,
let $u^1_1 \in M$. Then $ \cup_{i \in \{1,4,\ldots,k-2\}}E(P_i)\subseteq EM(u^1_1)$.
Since $EM(u^1_i)
=\{u^1_iu^1_{i+1}, u^1_iu^1_{i-1}, cu_i^{1}\}\cup\{cu_{j}^1\,|\,j \in \{1,2, \ldots,i-3, i+3,\ldots,k\}\}\cup \{u^{p-1}_ju^{p}_{j}\,|\,2\leq p \leq \ell-1, j \in \{1,2, \ldots,i-3, i+3,\ldots,k\} \}$ and $k\geq 9$, it follows that $E(P_2)\subseteq EM(u^1_{k-3})$, $E(P_3)\subseteq EM(u^1_{k-3})$, $E(P_2)\subseteq EM(u^1_{k-2})$, $E(P_3)\subseteq EM(u^1_{k-2})$, $E(P_k)\subseteq EM(u^1_{3})$ and $E(P_{k-1})\subseteq EM(u^1_{3})$

\medskip
For any edge $e \in E(P_i)$, where $i \in \{2,3,k,k-1\}$,
if $k$ is even,
then $e\in EM(u^1_3)\cup EM(u^1_{k-3})$;
if $k$ is odd, then $e\in EM(u^1_3)\cup EM(u^1_{k-2})$.
Therefore, $e\in \cup_{x\in M}EM(x)$
for any $e \in E(P_i)$, where $1\leq i \leq k$, and so
$\operatorname{dem}(G)\leq \sum_{i=1}^{a_k}{
\lceil k/(4i-2)\rceil}+2(\ell-a_k)$.
Thus, $\operatorname{dem}(G)=\sum_{i=1}^{a_k}{
\lceil k/(4i-2)\rceil}+2(\ell-a_k)$.
For $\ell \leq a_k$, it is
similar to the case that $\ell \geq a_k+1$,
as desired.
\end{proof}

\begin{theorem}\label{th-G(k)}
For any positive integer $k\geq 9$, there exists a connected graph $G$ such that
such that
$$
\operatorname{dem}(G\setminus v)-
\operatorname{dem}(G)=\lfloor k/2\rfloor-\lceil k/6\rceil,
$$
where $v\in V(G)$.
\end{theorem}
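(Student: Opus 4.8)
The plan is to exhibit a single conical graph together with a single vertex whose deletion produces a graph whose DEM number we can compute directly, and then subtract. Concretely, I would take $G = C(\ell,k)$ for a suitable value of $\ell$ and let $v = c$ be the apex vertex. Deleting $c$ from $C(\ell,k)$ removes the apex together with all the edges $cu^1_i$, and what remains is exactly the Cartesian product $C_k \square P_\ell$. So by construction $\operatorname{dem}(G\setminus v) = \operatorname{dem}(C_k \square P_\ell)$, which is handed to us by Theorem~\ref{ThmCnPn}; and $\operatorname{dem}(G) = \operatorname{dem}(C(\ell,k))$ is handed to us by Theorem~\ref{Thm:C*(n,k)}. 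The desired difference $\lfloor k/2\rfloor - \lceil k/6\rceil$ then has to pop out of the arithmetic, and the job is to choose $\ell$ (in terms of $k$) to make this happen.

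The natural guess is the smallest admissible value, $\ell = 2$. For $k \geq 9$ one has $a_k = \lfloor k/4 + (1+(-1)^{k+1})/8\rfloor \geq 2$, so the parameter regime $\ell \leq a_k$ of Theorem~\ref{Thm:C*(n,k)} applies, giving
$$\operatorname{dem}(C(2,k)) = \sum_{i=1}^{2}\lceil k/(4i-2)\rceil = \lceil k/2\rceil + \lceil k/6\rceil.$$
On the other side, since $k \geq 9 > 2\cdot 2 + 1 = 5$, Theorem~\ref{ThmCnPn} gives $\operatorname{dem}(C_k \square P_2) = 2\ell = 4$. That yields a difference of $4 - \lceil k/2\rceil - \lceil k/6\rceil$, which is negative, not what we want. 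So $\ell = 2$ with $G\setminus v = C_k\square P_\ell$ overshoots in the wrong direction; the statement wants $\operatorname{dem}(G\setminus v)$ to be the \emph{larger} quantity $\lfloor k/2\rfloor$ and $\operatorname{dem}(G)$ to be the \emph{smaller} quantity $\lceil k/6\rceil$, suggesting that $G$ itself should be a graph with DEM number $\lceil k/6\rceil$ and $G\setminus v$ a graph with DEM number $\lfloor k/2\rfloor$ — e.g. $G\setminus v$ should be a kipas or a wheel-like graph on roughly $k$ vertices, by Proposition~\ref{Lemma:FAN GRAPH}, while $G$ is obtained by adding a clever vertex that collapses the monitoring requirement down to $\lceil k/6\rceil$.

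So the approach I would actually pursue: let $H$ be a graph on which $\operatorname{dem}(H) = \lfloor k/2\rfloor$ is known — the cleanest candidate is the kipas $\widehat K_k$ (Proposition~\ref{Lemma:FAN GRAPH}, valid for $k\geq 7$, hence for $k\geq 9$), or possibly the wheel $W_k = C(1,k)$ — and then build $G$ by attaching one new vertex $v$ adjacent to a carefully chosen set of vertices of $H$, arranged so that in $G$ the single vertex $v$ (or a set of $\lceil k/6\rceil$ vertices) monitors everything, i.e. $\operatorname{dem}(G) = \lceil k/6\rceil$, while $G\setminus v = H$. The first step is to pin down $H$ and the attachment pattern; the second step is the lower bound $\operatorname{dem}(H)\geq \lfloor k/2\rfloor$, which is already proved if we reuse Proposition~\ref{Lemma:FAN GRAPH} verbatim; the third step is to verify $\operatorname{dem}(G)\leq \lceil k/6\rceil$ by displaying an explicit monitoring set (this is where the $4i-2 \to 4\cdot 1 - 2 = 2$-style counting from Fact~\ref{Lem:C_{n,k}} with the extra apex gives windows of length roughly $6$, hence $\lceil k/6\rceil$ probes); and the fourth step is the matching lower bound $\operatorname{dem}(G)\geq \lceil k/6\rceil$, arguing that each probe on the rim of $G$ monitors at most $O(1)$ consecutive rim edges because the apex-plus-rim structure caps distances at a small constant. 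The main obstacle will be the lower bound $\operatorname{dem}(G)\geq\lceil k/6\rceil$: one has to show no probe placed at the apex or at an interior position can monitor a long arc of rim edges, which requires the same kind of careful distance bookkeeping as in the proof of Theorem~\ref{Thm:C*(n,k)} — in particular establishing that for any vertex $x$ of $G$, $|EM(x) \cap E(\text{rim})|$ is bounded by a constant close to $6$, and then invoking a covering argument on the $k$ rim edges. Once those two bounds meet, subtracting $\operatorname{dem}(G) = \lceil k/6\rceil$ from $\operatorname{dem}(G\setminus v) = \operatorname{dem}(H) = \lfloor k/2\rfloor$ finishes the proof.
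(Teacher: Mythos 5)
Your first instinct --- take $G=C(2,k)$ with $v=c$ the apex, so that $G\setminus v=C_k\square P_2$, and subtract the two known formulas --- is exactly the paper's proof. The reason you abandoned it is an arithmetic slip: you applied the wrong case of Theorem~\ref{ThmCnPn}. That theorem says $\operatorname{dem}(C_k\square P_\ell)=k$ when $k\geq 2\ell+1$ and $2\ell$ when $k<2\ell+1$; with $\ell=2$ and $k\geq 9\geq 5=2\ell+1$ you are in the \emph{first} case, so $\operatorname{dem}(C_k\square P_2)=k$, not $4$. With that corrected, the difference is
$$
\operatorname{dem}(G\setminus v)-\operatorname{dem}(G)
= k-\bigl(\lceil k/2\rceil+\lceil k/6\rceil\bigr)
= \lfloor k/2\rfloor-\lceil k/6\rceil,
$$
using the identity $k-\lceil k/2\rceil=\lfloor k/2\rfloor$, which is precisely the claimed value. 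Your check that $\ell=2\leq a_k$ for $k\geq 9$ (so that the first branch of Theorem~\ref{Thm:C*(n,k)} applies) is correct and is the only other thing needed.

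The replacement plan you pursue instead is not a proof. You never specify the graph $G$ (you leave open whether $H$ is a kipas or a wheel and what the ``attachment pattern'' of the new vertex is), and the three substantive steps --- that $G\setminus v=H$, that $\operatorname{dem}(G)\leq\lceil k/6\rceil$ via an explicit monitoring set, and the lower bound $\operatorname{dem}(G)\geq\lceil k/6\rceil$ --- are all deferred with phrases like ``the main obstacle will be.'' Moreover, the premise that drove you there (that the theorem ``wants'' $\operatorname{dem}(G\setminus v)=\lfloor k/2\rfloor$ and $\operatorname{dem}(G)=\lceil k/6\rceil$ individually) is not required: only the difference must equal $\lfloor k/2\rfloor-\lceil k/6\rceil$, and the conical-graph pair achieves it with the values $k$ and $\lceil k/2\rceil+\lceil k/6\rceil$. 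Revert to your first approach, fix the case of Theorem~\ref{ThmCnPn}, and the proof is complete.
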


\begin{proof}
Let $G=C(2,k)$, where $\ell=2$ and
$k\geq 5$.
Note that $G\setminus v_0=C_k \square K_2$.
From Theorem \ref{ThmCnPn},
we have $\operatorname{dem}(G\setminus v_0)
=\operatorname{dem}(C_k \square K_2)=k$.
From
Theorem \ref{Thm:C*(n,k)},
we have
$\operatorname{dem}(C(2,k))=$
$\sum_{i=1}^{2}{\lceil k/(4i-2)\rceil}$
$=\lceil k/2\rceil+\lceil k/6\rceil $, and hence $\operatorname{dem}(G-v)-
\operatorname{dem}(G)
=k-\lceil k/2\rceil-\lceil k/6\rceil
=\lfloor k/2\rfloor-\lceil k/6\rceil$,
as desired.
\end{proof}

Let $G =C(\ell,k)$ and $H =C_{k}\Box P_{\ell}$.
From Theorems \ref{Thm:C*(n,k)}
and \ref{ThmCnPn},
if $\ell \gg k$, then
$\operatorname{dem}(G)/\operatorname{dem}(H)$ \mbox{$\approx 1$}.
From Theorems \ref{Thm:C*(n,k)}
and \ref{ThmCnPn}, if $k=402$ and $\ell=100$, then
$\operatorname{dem}(G)/
\operatorname{dem}(H)\approx 0.561453$.

\begin{corollary}\label{COR:RAT}
There exist two connected
graphs $H$ and $G$ such that
$$
\frac{\operatorname{dem}(G)} {\operatorname{dem}(H)}\approx 0.561453,
$$
where $H$ is an induced subgraph of $G$.
\end{corollary}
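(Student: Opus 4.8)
The plan is to read the two monitoring numbers off Theorems~\ref{Thm:C*(n,k)} and~\ref{ThmCnPn} for a single well-chosen pair. Fix $k=402$ and $\ell=100$, and let $G=C_{402}\square P_{100}$ and $H=C(100,402)$. Because the conical graph $C(\ell,k)$ is obtained from $C_k\square P_\ell$ by adjoining an apex $c$ joined to the vertices of the first layer, one has $G=H\setminus c$; hence $G$ is exactly the induced subgraph of the connected graph $H$ obtained by deleting the single vertex $c$, and both $G$ and $H$ are connected. This realizes the situation of the statement: two connected graphs, one an induced subgraph of the other, whose monitoring numbers I will compare.

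Next I would evaluate $\operatorname{dem}(G)$ and $\operatorname{dem}(H)$. For $G=C_{402}\square P_{100}$ we have $k=402\ge 2\ell+1=201$, so Theorem~\ref{ThmCnPn} gives $\operatorname{dem}(G)=k=402$. For $H=C(100,402)$ I would first compute the threshold $a_k=\lfloor k/4+(1+(-1)^{k+1})/8\rfloor$; since $k=402$ is even this equals $\lfloor 100.5\rfloor=100=\ell$, so $\ell\le a_k$ and the first branch of Theorem~\ref{Thm:C*(n,k)} applies, yielding $\operatorname{dem}(H)=\sum_{i=1}^{100}\big\lceil 402/(4i-2)\big\rceil$.

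The only real work, and hence the main obstacle, is evaluating this $100$-term ceiling sum exactly. I would do so by blocking the index range into intervals on which $\lceil 402/(4i-2)\rceil$ is constant: the first fourteen terms are $201,67,41,29,23,19,16,14,12,11,10,9,9,8$, summing to $469$, while the remaining terms contribute $7\cdot 3$, $6\cdot 3$, $5\cdot 5$, $4\cdot 8$, $3\cdot 17$ and $2\cdot 50$ over the blocks $i\in[15,17],[18,20],[21,25],[26,33],[34,50],[51,100]$, adding $247$ more. Thus $\operatorname{dem}(H)=469+247=716$. Forming the quotient then gives $\operatorname{dem}(G)/\operatorname{dem}(H)=402/716\approx 0.561453$, with $G$ a connected induced subgraph of the connected graph $H$, which is the required comparison between the monitoring numbers of a graph and an induced subgraph of it; the sharp numerical value $0.561453$ is pinned down entirely by the exact sum above.
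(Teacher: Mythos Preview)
Your approach mirrors the paper's: take the conical graph $C(\ell,k)$ together with the grid $C_k\square P_\ell$ for $k=402$, $\ell=100$, and read off both monitoring numbers from Theorems~\ref{Thm:C*(n,k)} and~\ref{ThmCnPn}. Your arithmetic is careful and correct: $\operatorname{dem}(C_{402}\square P_{100})=402$ and $\operatorname{dem}(C(100,402))=\sum_{i=1}^{100}\lceil 402/(4i-2)\rceil=716$, whence $402/716\approx 0.561453$.

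The gap is in the direction of containment. The corollary requires $H$ to be an induced subgraph of $G$ with $\operatorname{dem}(G)/\operatorname{dem}(H)\approx 0.561$; that is, the \emph{supergraph} must have the smaller monitoring number. You instead set $G=C_{402}\square P_{100}$ and $H=C(100,402)$, so in your pair it is $G$ that sits inside $H$, the reverse of what is asked. Swapping the labels to meet the hypothesis (taking $G=C(100,402)$ and $H=C_{402}\square P_{100}$, exactly as the paper does in the paragraph preceding the corollary) does make $H$ induced in $G$, but then the ratio becomes $716/402\approx 1.781$. With these parameters the conical graph has the \emph{larger} monitoring number than its grid subgraph, so this particular pair cannot witness the stated ratio in the required direction under either labeling; your explicit summation in fact reveals that the value $0.561453$ quoted in the paper is the reciprocal of $\operatorname{dem}(G)/\operatorname{dem}(H)$ under the paper's own assignment $G=C(\ell,k)$, $H=C_k\square P_\ell$.
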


\noindent {\bf Proof of Theorem \ref{TH:deEV}:}
Let $G=\widehat{K}_{2k+2}$.
From Proposition \ref{Obs:dv1},
there exists a vertex $u \in V(G)$ such that
$\operatorname{dem}(G)
-\operatorname{dem}(G\setminus u)
=k$.
Note that $G\setminus u=P_{2k+2}$ is a connected graph.
In addition,
let $H=C(2,t)$, $k=\lfloor t/2\rfloor-\lceil t/6\rceil$ and $v\in V(H)$. From Theorem \ref{th-G(k)},
we have
$\operatorname{dem}(C(2,t)\setminus v)
-\operatorname{dem}(C(2,t))=k$, where $C(2,t)\setminus v=C_t\square K_2$
is a connected graph. \smallskip

In fact, $G\setminus v$ is a subgraph of $G$.
From Theorem \ref{TH:deEV},
for any positive integer $k\geq 3$,
there exists a graph $G$ such that
$\operatorname{dem}(G\setminus v)-
\operatorname{dem}(G)\geq k$.\smallskip \QED
\eject

Let $G=C(2,t)$ and $H_1=C(2,t)\setminus v$, where $t\geq 8$.
From Theorem \ref{TH:deEV},
$\operatorname{dem}(H_1)\geq \operatorname{dem}(G)$.
Note that $G$ is not tree,
it follows from Theorem \ref{th-dem-1} that $\operatorname{dem}(G)\geq 2$.
Let $H_2$ be a tree satisfying $H_2\sqsubseteq G$. From Theorem \ref{th-dem-1}, we have
$\operatorname{dem}(H_2)=1$, and hence $\operatorname{dem}(H_2) \leq \operatorname{dem}(G)$, and so
Corollary \ref{Cor:hsubg} holds.
\begin{corollary}\label{Cor:hsubg}
There exists a connected graph $G$ and two non-spanning subgraphs $H_1,H_2\sqsubseteq G$ such that
$\operatorname{dem}(H_1)\geq \operatorname{dem}(G)$ and $\operatorname{dem}(H_2) \leq
\operatorname{dem}(G)$.
\end{corollary}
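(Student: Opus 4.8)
The plan is to produce a single connected non-tree graph $G$ and then pick $H_1$ and $H_2$ as subgraphs on which the DEM number moves in opposite directions relative to $\operatorname{dem}(G)$. For the easy inequality $\operatorname{dem}(H_2)\le\operatorname{dem}(G)$, I would simply take $H_2$ to be a proper subtree of $G$; concretely, fix a spanning tree of $G$ and delete one of its leaves, so that $H_2$ is a tree whose vertex set is a proper subset of $V(G)$, hence a non-spanning subgraph. Since $G$ contains a cycle, Theorem~\ref{th-dem-1} gives $\operatorname{dem}(G)\ge 2$, while the same theorem gives $\operatorname{dem}(H_2)=1$; thus $\operatorname{dem}(H_2)\le\operatorname{dem}(G)$.

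For the other inequality $\operatorname{dem}(H_1)\ge\operatorname{dem}(G)$, the natural choice is a vertex-deleted subgraph, and the conical graph is already tailored for this. I would take $G=C(2,t)$ with $t\ge 9$ and let $H_1=G\setminus v$, where $v$ is the apex vertex, so that $H_1\cong C_t\square K_2$ is an induced, non-spanning subgraph of $G$. Theorem~\ref{th-G(k)} (equivalently Theorem~\ref{TH:deEV}(2)) then yields $\operatorname{dem}(H_1)-\operatorname{dem}(G)=\lfloor t/2\rfloor-\lceil t/6\rceil>0$, in particular $\operatorname{dem}(H_1)\ge\operatorname{dem}(G)$.

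Finally I would observe that $H_1$ and $H_2$ are both subgraphs of the same graph $G$ and that neither is spanning --- one omits the apex $v$, the other omits a leaf of the spanning tree --- so the two inequalities hold simultaneously and the corollary follows. There is essentially no obstacle here beyond bookkeeping: the only points to watch are the non-spanning requirement, i.e.\ choosing $H_2$ as a proper subtree rather than a spanning tree, and making sure $t$ is large enough for Theorem~\ref{th-G(k)} to apply so that $\operatorname{dem}(G)\le\operatorname{dem}(G\setminus v)$. No new computation is needed; the whole argument reduces to Theorems~\ref{th-dem-1} and~\ref{th-G(k)}.
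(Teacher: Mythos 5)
Your proposal is correct and follows essentially the same route as the paper: the paper also takes $G=C(2,t)$ with $H_1=G\setminus v$ (via Theorem~\ref{TH:deEV}/Theorem~\ref{th-G(k)}) and $H_2$ a tree subgraph of $G$ with $\operatorname{dem}(H_2)=1$ by Theorem~\ref{th-dem-1}. Your extra care in requiring $t\geq 9$ and in making $H_2$ explicitly non-spanning only tightens details the paper glosses over.
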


\noindent {\bf Proof of Proposition \ref{pro-upper}:}
For any graph $G$ with order $n$
and $G\setminus v$ with at least one edge,
we have $\operatorname{dem}(G)\leq n-1$
and $\operatorname{dem}(G\setminus v)\geq 1$, and hence
$\operatorname{dem}(G)-
\operatorname{dem}(G\setminus v) \leq n-2$.
Furthermore, let $G=K_3$,
then $\operatorname{dem}(G)-
\operatorname{dem}(G\setminus v) = n-2$, and hence the upper bound is sharp.
Conversely, since $\operatorname{dem}(G)-
\operatorname{dem}(G\setminus v)=n-2$,
it follows that $\operatorname{dem}(G)=n-1$
and $\operatorname{dem}(G\setminus v)=1$.
From Theorem \ref{th-dem-1}, $G\setminus v$ is a tree.
Suppose that $|V(G)|\geq 4$. Since $\operatorname{dem}(G)=n-1$,
it follows from Theorem \ref{th-dem-1} that
$G=K_n$, and hence $G\setminus v=K_{n-1}$ which contradicts to the fact that $G\setminus v$ is a tree.
Suppose that $|V(G)|\leq 3$. Since $\operatorname{dem}(G)=n-1$, it follows that $G=K_n$, where $n \leq 3$.
If $G=K_2$, then $G\setminus v=K_1$, which contradicts to the fact that $G\setminus v$ contains at least
one edge.
Therefore, $G=K_3$, as desired. \medskip \QED

Next, we consider
the subgraph  $H$ of $G$.
If $H$ is a proper subgraph of $G$ satisfying
$\operatorname{dem}(H)\leq \operatorname{dem}(G)$,
then what is the relation between
$H$ and $G$?
A natural question is what is the maximum number of edges we can delete from
$G$ without changing the number of distance-edge monitoring?
We give a partial answer as follows.

Recall that the base graph $G_b$ is a subgraph of $G$ with $\operatorname{dem}(G) = \operatorname{dem}(G_b)$.
Therefore, we can give a lower bound for the edge set $E$ such that $\operatorname{dem}(G)=\operatorname{dem}(G-E)$.
\begin{observation}\label{Obs:DelEedeEQ}
Let $G$ be a connected graph, and let
$E_1=E(G)-E(G_b)$. For $E\subseteq E(G)$,
if $\operatorname{dem}(G)=\operatorname{dem}(G-E)$ and $G-E$ is a connected graph with
order at least $2$, then $|E| \geq |E_1|$.
\end{observation}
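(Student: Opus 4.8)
The plan is to route everything through the base graph (equivalently, the $2$-core) and the cyclomatic number, and reduce the edge-count inequality to a comparison between the cores of $G$ and of $G-E$. First I would record the basic bookkeeping. In forming $G_b$ from $G$, each step deletes one vertex of degree $1$ together with its unique incident edge, so the number of deleted edges equals the number of deleted vertices, and these deleted edges are exactly the members of $E_1$. Hence
$$
|E_1| \;=\; |E(G)|-|E(G_b)| \;=\; |V(G)|-|V(G_b)|,
$$
and, since deleting a leaf changes neither $|E|-|V|$, the cyclomatic numbers coincide: $\operatorname{fes}(G)=\operatorname{fes}(G_b)$. I will use the identity $\operatorname{fes}(\Gamma)=|E(\Gamma)|-|V(\Gamma)|+1$ for connected $\Gamma$ repeatedly to trade edge counts for vertex counts.

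Next, set $H:=G-E$ (connected, on the same vertex set, so $|V(H)|=n:=|V(G)|$) and let $H_b$ be its base graph. The key structural input is monotonicity of the $2$-core under edge deletion: since $H$ is a spanning subgraph of $G$, its core $H_b$ is a subgraph of $G$ with minimum degree at least $2$, and $G_b$ is the maximal such subgraph, so $H_b\sqsubseteq G_b$; in particular $V(H_b)\subseteq V(G_b)$. I would then apply Observation~\ref{Obs:G_b} to both $G$ and $H$ and combine with the hypothesis to get $\operatorname{dem}(H_b)=\operatorname{dem}(H)=\operatorname{dem}(G)=\operatorname{dem}(G_b)$, so the two cores share the same DEM number while one is contained in the other.

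With this in hand the target becomes a statement about the cores alone. Subtracting the cyclomatic identities for the connected graphs $G$ and $H$, and then using $\operatorname{fes}(G)=\operatorname{fes}(G_b)$ and $\operatorname{fes}(H)=\operatorname{fes}(H_b)$, gives
$$
|E| \;=\; \bigl(|E(G_b)|-|E(H_b)|\bigr)+\bigl(|V(H_b)|-|V(G_b)|\bigr).
$$
Substituting $|E_1|=|V(G)|-|V(G_b)|$, the desired inequality $|E|\ge|E_1|$ is equivalent to
$$
|E(G_b)|-|E(H_b)| \;\ge\; |V(G)|-|V(H_b)|,
$$
i.e. the number of core-edges of $G_b$ that are destroyed in passing to $H_b$ must be at least the number of tree-edges of $H$ (the edges of $H$ outside its own $2$-core). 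The final step is to establish this matching inequality from the containment $H_b\sqsubseteq G_b$ together with $\operatorname{dem}(H_b)=\operatorname{dem}(G_b)$: I would try to show that each vertex of $V(G)\setminus V(H_b)$ is joined to $H_b$ by a tree-path in $H$, and assign these tree-edges injectively to edges of $G_b$ that left the core, using connectivity of $H$ and the fact that equality of DEM numbers prevents $H_b$ from collapsing too far inside $G_b$.

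I expect this last matching step to be the main obstacle, and it is genuinely delicate because the two hypotheses pull in opposite directions: connectivity alone only yields an upper bound $|E|\le\operatorname{fes}(G)$ (a spanning tree must survive), whereas the conclusion is a lower bound, so every unit of the bound must be squeezed out of DEM-preservation forcing the core to shrink. The critical case to rule out is the degenerate one in which $H$ keeps all of $G_b$ intact and merely rearranges its tree-appendages; indeed the trivial choice $E=\emptyset$ (or $G$ a tree, where $|E_1|>0$ but $\operatorname{dem}$ is unaffected by leaves) shows the inequality can only be asserted once the deletion is nontrivial and the reduction to the core has genuinely bottomed out. Handling this will require invoking Theorem~\ref{th-dem-1} and the non-monotonicity analysis of $\operatorname{dem}$ to argue that an edge set which leaves the core of $G$ essentially untouched cannot simultaneously be DEM-preserving and keep $G-E$ connected, which is precisely the point where I anticipate the proof to be most technical.
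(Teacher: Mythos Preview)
Your proposal has a fatal gap, and you have in fact already put your finger on it without recognising that it is fatal. The statement as literally written is false whenever $G$ has at least one pendant vertex: take $E=\emptyset$. Then $G-E=G$ is connected of order $\ge 2$, $\operatorname{dem}(G-E)=\operatorname{dem}(G)$ trivially, yet $|E|=0<|E_1|$. (Concretely: let $G$ be a cycle with one pendant edge; then $G_b$ is the cycle, $|E_1|=1$, and $E=\emptyset$ violates the asserted inequality.) You flag exactly this case in your final paragraph, but the ``handling'' you propose---arguing that an edge set leaving the core untouched cannot be simultaneously DEM-preserving and connectivity-preserving---is impossible, because $E=\emptyset$ does all three. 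No amount of $2$-core bookkeeping or matching arguments will close this; the universal statement simply does not hold.

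The paper does not give a proof of this observation. Its entire justification is the single sentence preceding it: ``Recall that the base graph $G_b$ is a subgraph of $G$ with $\operatorname{dem}(G)=\operatorname{dem}(G_b)$. Therefore, we can give a lower bound for the edge set $E$ such that $\operatorname{dem}(G)=\operatorname{dem}(G-E)$.'' Read in context (the paragraph is about the \emph{maximum} number of edges one can delete while preserving $\operatorname{dem}$), the intended content is existential, not universal: the maximum such $|E|$ is at least $|E_1|$, witnessed by $E=E_1$ together with Observation~\ref{Obs:G_b}. Under that reading the claim is a one-line consequence of Observation~\ref{Obs:G_b} and needs none of the cyclomatic or core-monotonicity apparatus you develop. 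So the right move is not to repair your argument but to reinterpret the target.
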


\noindent {\bf Proof of Theorem \ref{th-dem-2}:}
Let $E\subseteq E(G)$ satisfying $\operatorname{dem}(G)
=\operatorname{dem}(G-E)$ and $M=\{u,v\}$ be a DEM set of $G$ with $|M|=\operatorname{dem}(G)=2$. From Theorem \ref{Th-forest}, we have $|EM(u)|\leq n-1$ and $|EM(v)|\leq n-1$.
If $uv\in E(G)$, then $e(G) \leq 2(n-1)-1$.
Since $\operatorname{dem}(G)
=\operatorname{dem}(G-E)=2$, it follows from
Theorem  \ref{th-dem-1} that
$G-E$ must contain a cycle,
and hence $|E| \leq 2(n-1)-1-3=2n-6$.

Suppose that $uv\not\in E(G)$.
If $|EM(u) \cap EM(u)| \geq 1$,  then $|E| \leq 2n-6$, which is
similar to the case that $uv\in E(G)$.
If $EM(u) \cap EM(v)=\emptyset$, it follows from
Theorem \ref{Th-forest} that $e(G) \leq 2(n-1)$.
Since $\operatorname{dem}(G-E)=2$, then it follows from Theorem \ref{th-dem-1} that
$G-E$ must contain a cycle, and hence
$|E|\leq 2(n-1)-3=2n-5$.
Furthermore, we give the following claim.
\begin{claim}\label{claim1}
$|E|\leq 2n-6$.
\end{claim}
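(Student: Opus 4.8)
\noindent\textbf{Proof plan for Claim \ref{claim1}.}
The plan is to sharpen the edge count that produced the weaker bound $|E|\le 2n-5$. We stay in the sub-case at hand: $M=\{u,v\}$ is a DEM basis of $G$, $uv\notin E(G)$, $EM(u)\cap EM(v)=\emptyset$, $\operatorname{dem}(G)=\operatorname{dem}(G-E)=2$, and $G-E$ is connected (the natural working hypothesis here, exactly as in Observation \ref{Obs:DelEedeEQ}). The earlier estimate used only that $G-E$ contains a cycle, hence $e(G-E)\ge 3$; instead I would use the stronger fact that a \emph{connected} graph on $n$ vertices which is not a tree has at least $n$ edges.

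Concretely, first I would record the upper bound $e(G)\le 2(n-1)$: by Theorem \ref{Th-forest} each of $G[EM(u)]$ and $G[EM(v)]$ is a forest, so $|EM(u)|,|EM(v)|\le n-1$, and $E(G)=EM(u)\cup EM(v)$ because $M$ is a DEM set. Next, since $\operatorname{dem}(G-E)=2\ne 1$, Theorem \ref{th-dem-1} shows $G-E$ is not a tree; being connected on $n$ vertices it therefore has $e(G-E)\ge n$. Subtracting the two estimates,
$$
|E|=e(G)-e(G-E)\le 2(n-1)-n=n-2\le 2n-6,
$$
the last inequality because $n\ge 4$. Together with the cases $uv\in E(G)$ and $EM(u)\cap EM(v)\ne\emptyset$ already treated, this establishes $|E|\le 2n-6$ in every case and completes the proof of the bound in Theorem \ref{th-dem-2}.

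The one substantive step is $e(G-E)\ge n$: this is precisely where connectedness of $G-E$ is used, and it is what replaces the crude bound $e(G-E)\ge 3$ (which only yields $2n-5$) by the value forcing the extra $-1$. I expect the only thing one might worry about is the borderline configuration $e(G)=2(n-1)$, in which $G[EM(u)]$ and $G[EM(v)]$ are both spanning trees and $G$ is their edge-disjoint union --- a very restrictive situation --- but the displayed inequality disposes of the claim without any case analysis there, in fact with room to spare (it even gives $|E|\le n-2$), so this configuration need not be examined.
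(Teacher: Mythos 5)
Your argument hinges on the added hypothesis that $G-E$ is connected, but Theorem \ref{th-dem-2} does not assume this, and it cannot be assumed without loss of generality: the sharpness example given immediately after this claim is $G=(n-2)K_1\vee K_2$ with $E$ chosen so that the edges of $G-E$ form a triangle and all remaining $n-3$ vertices are isolated --- that is, the extremal configuration itself has $G-E$ disconnected (the paper evaluates $\operatorname{dem}(G-E)$ there via Observation \ref{Obs:disjoint} and the convention $\operatorname{dem}(K_1)=0$). Once $G-E$ may be disconnected, your key step $e(G-E)\ge n$ fails: a triangle plus isolated vertices has only $3$ edges, and you are back to $|E|\le 2(n-1)-3=2n-5$, which is exactly the bound the claim is meant to improve. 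So the case your working hypothesis excludes is precisely the hard one.

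The paper closes that case by a structural contradiction rather than by edge counting. Assuming $|E|=2n-5$ forces $e(G)=2(n-1)$ and $e(G-E)=3$, so the edges of $G-E$ form a triangle $\{v_1,v_2,v_3\}$ and, since $E(G)$ is the disjoint union of the forests $EM(u)$ and $EM(v)$, each of $G[EM(u)]$ and $G[EM(v)]$ is a spanning tree of $G$. Because $uv\notin E(G)$, at least one of $u,v$, say $u$, lies outside $\{v_1,v_2,v_3\}$; the spanning tree $G[EM(v)]$ contains some edge $uu_1$ incident with $u$, and by Theorem \ref{Th-Ncover} that edge also lies in $EM(u)$, contradicting $EM(u)\cap EM(v)=\emptyset$. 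Your connected sub-case is correct as far as it goes (and even yields $|E|\le n-2$ there), but the disconnected case requires an argument of this kind, so as written the proposal has a genuine gap.
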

\begin{proof}
Assume, to the contrary, that
$|E|=2n-5$. Since $\operatorname{dem}(G-E)=2$, it follows from Theorem \ref{th-dem-1} that $G-E=C_3$.
Without loss of generality,
let $V(G-E)=\{v_1,v_2,v_3\}$.
In addition, from Theorem \ref{Th-forest}, the subgraph induced
by the edge set $EM(u)$ and $EM(v)$ are the
spanning trees of $G$.
If $u,v \in \{v_1,v_2,v_3\}$, then
$uv\in E(G)$, a contradiction.
Thus, $u \notin \{v_1,v_2,v_3\}$ or $v \notin \{v_1,v_2,v_3\}$.
Without loss  generality,
suppose that $u \notin \{v_1,v_2,v_3\}$.

If $d_{G}(u)=1$, then $|N(u)|=1$. Let $N(u)=\{w\}$. Since the subgraph induced
by the edge set $EM(u)$ and $EM(v)$ are
the spanning trees of $G$, it follows that $uw\in EM(u)\cap EM(v)$, which contradicts to the fact that $EM(u)\cap EM(v)=\emptyset$.
Therefore, $d_{G}(u)\geq 2$.
Since the subgraph induced
by the edge set $EM(v)$ is
a spanning tree of $G$,
it follows that there exists a vertex $u_1\in N(u)$
such that $uu_1\in EM(v)$.
From Theorem \ref{Th-Ncover}, we have $uu_1\in EM(u)$,
and hence $uu_1\in EM(u)\cap EM(v)$,
which contradicts to the fact that
$EM(u) \cap EM(u)=\emptyset$.
\end{proof}
From Claim \ref{claim1}, we have $|E|\leq 2n-6$. Furthermore, let $G=(n-2)K_1\vee K_2$ with vertex set
$V(G)=\{v_i\,|\,1\leq i \leq n\}$ and
edge set $E(G)=\{v_1v_2\}
\cup\{v_1v_i,v_2v_i|3\leq i\leq n\}$. Then, $\operatorname{dem}(G)=2$.
Let $E=\{v_1v_i,v_2v_i\,|\,4\leq i\leq n\}
\subseteq E(G)$.
From Observation
\ref{Obs:disjoint},
$\operatorname{dem}(G-E)=
\operatorname{dem}(K_3)$
$+(n-1)\operatorname{dem}(K_1)=2$,
and hence there exists an edge set $E_1$
such that $\operatorname{dem}(G-E)=2$
and $|E|=2n-6$, as desired.  \QED

\medskip
In the end of this section, we give the proof of Theorem \ref{The:sTN} as follows. \\[0.2cm]
\noindent {\bf Proof of Theorem \ref{The:sTN}:}
Let $G=K_n$ with vertex set
$\{v_i\,|\,1\leq i\leq n\}$ and
edge set $\{v_iv_j\,|\,1\leq i<j\leq n\}$.
Let $T$ be a spanning tree in $K_n$. For any edge $uv \in E(T)$ and
vertex $w\in \left(\left(
N_G(u)\cup N_G(v)\right)\setminus\{u,v\} \right)\cap V(T)$,
we have
$d_G(w,u)=d_G(w,v)=1$ and
$d_{G  - uv}(w,u)=d_{G - uv}(w,v)=1$, and hence $uv \notin EM(w)$.
From Proposition \ref{Obs:CUV},
any edge $uv\in E(T)$ is only monitored
by $u$ or $v$, and hence
$\operatorname{dem}(K_n|_T)\geq \beta{(T)}$.
From Theorem \ref{Theorem:Upperbond},
$\operatorname{dem}(K_n|_T)\leq \beta{(T)}$, and hence $\operatorname{dem}(K_n|_T)$
$=\beta{(T)}$.
Since $T$ is tree with
order $n$, it follows that $T$ is a bipartite graph.
Without loss of generality,
let $V(T)=U\cup V \ (|U|\leq |V|)$,
which is a bipartite partition of $V(T)$.
From the pigeonhole principle, we have
$|U|\leq \lfloor \frac{n}{2}\rfloor$.
For any $uv\in E(T)$,
we have $\{u,v\}\cap U\neq \emptyset$,
and hence $\beta{(T)}\leq \lfloor \frac{n}{2}\rfloor$.
In addition,
$T$ contains at least one edge, and hence $\beta{(T)}\geq 1$,
and so $1 \leq \beta{(T)} \leq \lfloor \frac{n}{2}\rfloor$.

\medskip
Suppose that $T=S_n$ with vertex set $\{v_i\,|\,1\leq i\leq n\}$ and edge set $E(S_n)=\{v_1v_i\,|\,2\leq i \leq n\}$.
Then $\{v_1\}$ is the vertex cover set of $S_n$, and hence $\beta{(T)}=1$,
and so the lower bound is sharp.
Suppose that $T=P_n$ with vertex set $\{v_i\,|\,1\leq i\leq n\}$ and edge set
$E(P_n)=\{v_iv_{i+1}\,|\,1\leq i \leq n-1\}$.
Then, $\{v_i\,|\,i\eqcirc 0\pmod 2, 1\leq i\leq n\}$ is
a minimum vertex cover set of $P_n$,
and hence $\beta{(T)}=\lfloor \frac{n}{2}\rfloor$,
and so the upper bound is sharp. \medskip \QED

Similar to Theorem \ref{The:sTN}, we have the following corollary.
\begin{corollary}\label{Col:CSH}
Let $H$  be a subgraph of $G$ with $|V(H)|=p$. Then,
$$1 \leq \operatorname{dem}(G|_H) \leq
p-1.$$
Furthermore, the bounds are sharp.
\end{corollary}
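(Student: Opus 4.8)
The plan is to mimic the proof of Theorem \ref{The:sTN}, but replace the spanning tree of $K_n$ by an arbitrary subgraph $H$ of an arbitrary graph $G$, with $|V(H)|=p$. For the upper bound, I would first observe that if $M\subseteq V(H)$ is a DEM set of $H$ in $G$ then $M$ certainly may be taken to satisfy $|M|\le p$, and in fact $|M|\le p-1$: by Theorem \ref{Theorem:Upperbond}-style reasoning, I would argue that any vertex cover of $H$ (taken as a subset of $V(H)\subseteq V(G)$) monitors, in $G$, every edge of $H$. Indeed, for an edge $uv\in E(H)$ with, say, $u$ in the cover, one checks via Theorem \ref{Th-Ncover} that $uv\in EM_G(u)$ since $u$ is a vertex of $G$ and $v\in N_G(u)$; hence $E(H)\subseteq\bigcup_{x\in M}EM_G(x)$ and $\operatorname{dem}(G|_H)\le\beta(H)$. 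Since $H$ has $p$ vertices, $\beta(H)\le p-1$ (a single vertex left out of a vertex cover is always possible when $p\ge 2$; if $H$ has isolated vertices the bound is even easier), giving $\operatorname{dem}(G|_H)\le p-1$.

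For the lower bound, $\operatorname{dem}(G|_H)\ge 1$ is immediate whenever $E(H)\ne\emptyset$, since a nonempty set of edges needs at least one monitoring vertex; I would state the standing assumption that $H$ contains at least one edge (consistent with the analogous hypothesis in Theorem \ref{The:sTN} and Proposition \ref{pro-upper}), so that $\operatorname{dem}(G|_H)$ is well-defined and positive.

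For sharpness of the lower bound, I would take $G=K_n$ and $H=S_p$ a star on $p$ vertices embedded in $K_n$; by the argument in the proof of Theorem \ref{The:sTN}, $\operatorname{dem}(K_n|_{S_p})=\beta(S_p)=1$. For sharpness of the upper bound, I would take $G=K_p$ and $H=K_p$; then by Theorem \ref{th-dem-n} we have $\operatorname{dem}(K_p|_{K_p})=\operatorname{dem}(K_p)=p-1$. Alternatively one can use $H=P_p$ inside a suitable $G$ together with the Proposition \ref{Obs:CUV} argument to force every edge of $P_p$ to be monitored only by its own endpoints, yielding $\operatorname{dem}(G|_{P_p})=\beta(P_p)=\lfloor p/2\rfloor$ when $G$ is chosen so that no external vertex helps — but the cleanest witness for the extreme value $p-1$ is $G=H=K_p$.

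The main obstacle is the upper-bound direction: one must be careful that a vertex cover $M$ of $H$, viewed inside the larger graph $G$, really does monitor each edge of $H$ \emph{in $G$} (not just in $H$); this is exactly where Theorem \ref{Th-Ncover} applied in $G$ is needed, since an edge $uv\in E(H)$ with $u\in M$ satisfies $uv\in EM_G(u)$ because $uv\in E(G)$ and $v\in N_G(u)$. Once this local monitoring fact is in place, the counting $\beta(H)\le p-1$ and the two sharpness examples are routine.
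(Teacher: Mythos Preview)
Your proposal is correct and follows essentially the same route as the paper, which gives no explicit proof for Corollary~\ref{Col:CSH} beyond the phrase ``Similar to Theorem~\ref{The:sTN}''. Your argument---Theorem~\ref{Th-Ncover} in $G$ to show a vertex cover of $H$ monitors $E(H)$, hence $\operatorname{dem}(G|_H)\le\beta(H)\le p-1$, together with the sharpness witnesses $H=S_p$ in $K_n$ and $G=H=K_p$---is exactly the intended filling-in of that remark.
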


\begin{theorem}
If $H$ is a connected induced subgraph
of graph $G$, then
$$
\operatorname{dem}(G)-\operatorname{dem}(G|_H) \leq |V(G)|-|V(H)|.
$$
Moreover, if $G$ and $H$ are both complete graphs,
then the bound is sharp.
\end{theorem}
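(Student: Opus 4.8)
Write $n=|V(G)|$ and $p=|V(H)|$. The plan is to turn a DEM basis of $H$ in $G$ into a DEM set of $G$ at the cost of adding every vertex that lies outside $H$. Let $M_H\subseteq V(H)$ be a DEM set of $H$ in $G$ with $|M_H|=\operatorname{dem}(G|_H)$, and set
$$M=M_H\cup\bigl(V(G)\setminus V(H)\bigr).$$
Since $M_H\subseteq V(H)$, the two pieces are disjoint, so $|M|=\operatorname{dem}(G|_H)+n-p$. It then suffices to show that $M$ is a DEM set of $G$, for this gives $\operatorname{dem}(G)\le|M|=\operatorname{dem}(G|_H)+n-p$, which is exactly the claimed inequality.

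To verify $\bigcup_{x\in M}EM(x)=E(G)$, I would split the edges of $G$ into two kinds. For $e\in E(H)$, the definition of a DEM set of $H$ in $G$ already gives $e\in\bigcup_{x\in M_H}EM(x)\subseteq\bigcup_{x\in M}EM(x)$. The key step concerns $e=uv\in E(G)\setminus E(H)$: since $H$ is an \emph{induced} subgraph of $G$, the endpoints $u,v$ cannot both lie in $V(H)$ (otherwise $uv$ would be an edge of $H$), so at least one of them, say $u$, belongs to $V(G)\setminus V(H)\subseteq M$; as $v\in N_G(u)$, Theorem \ref{Th-Ncover} gives $uv\in EM(u)$, whence $e\in\bigcup_{x\in M}EM(x)$. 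This observation — that an edge of $G$ missing from an induced subgraph must stick out of $V(H)$ — is really the whole content of the bound, and I do not expect any further obstacle; the rest is bookkeeping together with Theorem \ref{Th-Ncover}.

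For sharpness, take $G=K_n$ and let $H=K_p$ be the subgraph induced on any $p$ of its vertices. Theorem \ref{th-dem-n} gives $\operatorname{dem}(K_n)=n-1$. For the restrict-DEM number, note that for every edge $uv$ of $K_p$ and every vertex $w\in V(K_n)\setminus\{u,v\}$ one has $d_{K_n}(w,u)=d_{K_n-uv}(w,u)=1$ and $d_{K_n}(w,v)=d_{K_n-uv}(w,v)=1$, so $uv\notin EM(w)$; by Proposition \ref{Obs:CUV} the edge $uv$ is monitored only by $u$ and $v$. Hence any DEM set of $H$ in $G$ must meet $\{u,v\}$ for every edge of $K_p$, i.e.\ it is a vertex cover of $K_p$, so $\operatorname{dem}(K_n|_{K_p})\ge\beta(K_p)=p-1$; conversely any $p-1$ vertices of $K_p$ already monitor $E(K_p)$, so $\operatorname{dem}(K_n|_{K_p})=p-1$. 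Therefore $\operatorname{dem}(G)-\operatorname{dem}(G|_H)=(n-1)-(p-1)=n-p=|V(G)|-|V(H)|$, and the bound is attained.
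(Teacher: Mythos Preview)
Your proof is correct and follows essentially the same approach as the paper: take a minimum restrict-DEM set $M_H$ of $H$ in $G$, augment it by $V(G)\setminus V(H)$, and use the induced-subgraph hypothesis together with Theorem~\ref{Th-Ncover} to handle every edge outside $E(H)$. Your sharpness argument via $G=K_n$, $H=K_p$, Proposition~\ref{Obs:CUV}, and the vertex-cover bound $\beta(K_p)=p-1$ is likewise the same as the paper's (the paper cites Theorem~\ref{Theorem:Upperbond} for the upper bound, where you argue it directly, but the content is identical).
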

\begin{proof}
For any graph $G$ and $H$, where
$H$ is an induced subgraph of $G$.
Let $M_1 \subseteq V(H)$ be a restrict-DEM set of $H$ in $G$ with $|M_1|=\operatorname{dem}(G|_H)$.
Let $M=(V(G)-V(H))\cup M_1$.
We will prove that $\operatorname{dem}(G)\leq |M|$.
For any edge $uv\in E(G)$,
if $u$ or $v$ in $V(G)-V(H)$, then
it follows from Theorem \ref{Th-Ncover} that
$e$ is monitored by the vertex in $V(G)\setminus V(H)$.
For any edge $e\in E(H)$,
since $M_1$ is a restrict-DEM set, it follows that $e$ is monitored by
the vertex in $M_1$, and hence $M$ is a DEM set in $G$.
Since $|M|=|M_1|+(|V(G)|-|V(H)|)=|V(G)|-|V(H)|
+\operatorname{dem}(G|_H)$,
it follows that $\operatorname{dem}(G)\leq |M|=\operatorname{dem}(G|_H)
+(|V(G)|-|V(H)|)$, and
so $\operatorname{dem}(G)-
\operatorname{dem}(G|_H)\leq
(|V(G)|-|V(H)|)$.
Furthermore, let $G=K_n$ and $H=K_m$ ($3\leq m\leq n$).
From Theorem \ref{th-dem-n},
$\operatorname{dem}(G)=n-1$.
For any $uv\in E(H)$ and $w \in
\left(N(u)\cup N(v)\right)\setminus\{u,v\}$,
we have
$d_G(w,u)=d_G(w,v)=1$ and
$d_{G - uv}(w,u)=d_{G - uv}(w,v)=1$,
and hence $uv \notin EM(w)$.
From Proposition \ref{Obs:CUV},
any edge $uv\in E(H)$ is only monitored
by $u$ or $v$, and so
$\operatorname{dem}(K_n|_{K_m})\geq \beta{(K_m)}=m-1$.
From Theorem \ref{Theorem:Upperbond},
$\operatorname{dem}(K_n|_{K_m})\leq \beta{(K_m)}=m-1$, and hence $\operatorname{dem}(K_n|_{K_m})=m-1$.
Therefore, $\operatorname{dem}(G)-\operatorname{dem}(G|_H)= (n-1)-(m-1)=|V(G)|-|V(H)|$, as desired.
\end{proof}

\section{Perturbation results for some known graphs}

Firstly, we study the change of DEM numbers for some well-known graphs when any edge (or vertex) of the graph is deleted.

\subsection{Deleting one edge or vertex from some known graphs}
Let $NV_i(G)=\{v\,|\,d_{G}(v)=i,v\in V(G)\}$ and $NE_{a,b}(G)=\{uv\,|\,uv\in E(G), d_{G}(u)=a,d_{G}(v)=b\}$.
Note that
$E(P_n)=NE_{1,2}(P_n)\cup NE_{2,2}(P_n)$.
If $e \in NE_{1,2}(P_n)$, then
$\operatorname{dem}\left(P_n- e\right)=1$,
and hence
$\operatorname{dem}\left(P_n\right)
-\operatorname{dem}\left(P_n- e\right)
=0.$
If $e  \in NE_{2,2}(P_n)$, then
$\operatorname{dem}\left(P_n -  e \right)=2$,
and hence
$\operatorname{dem}\left(P_n\right)
-\operatorname{dem}\left(P_n -  e \right)
=-1$.
\begin{corollary}\label{cPn}
Let $P_n$ be a path of order $n$, where $n\geq 2$.
For any $e\in E(P_n)$,
we have
$$
\operatorname{dem}(P_n- e)=
\begin{cases}
\operatorname{dem}(P_n), &
 \mbox{if $e  \in NE_{1,2}(P_n)$;}  \\
\operatorname{dem}(P_n)+1,
& \mbox{if $e \in NE_{2,2}(P_n)$}.
\end{cases}
$$
\end{corollary}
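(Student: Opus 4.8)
The plan is to reduce everything to Theorem~\ref{th-dem-1} and Observation~\ref{Obs:disjoint}: since $P_n$ is a tree, $\operatorname{dem}(P_n)=1$, so the whole content of the corollary is simply a computation of $\operatorname{dem}$ on the connected components of $P_n-e$, which are determined by the degrees of the two endpoints of $e$.

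First I would treat the case $e\in NE_{1,2}(P_n)$. Here $e$ is one of the two pendant edges of $P_n$; note this already forces $n\geq 3$, because for $n=2$ the unique edge lies in $NE_{1,1}(P_n)$ and neither branch of the corollary applies. Removing a pendant edge splits $P_n$ into $P_{n-1}$ together with an isolated vertex $K_1$. Using the convention $\operatorname{dem}(K_1)=0$ stated just after Observation~\ref{Obs:disjoint}, together with $\operatorname{dem}(P_{n-1})=1$ from Theorem~\ref{th-dem-1} (valid since $n-1\geq 2$), Observation~\ref{Obs:disjoint} gives $\operatorname{dem}(P_n-e)=\operatorname{dem}(P_{n-1})+\operatorname{dem}(K_1)=1=\operatorname{dem}(P_n)$.

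Next, for $e\in NE_{2,2}(P_n)$ (which forces $n\geq 4$, since two adjacent degree-$2$ vertices are needed), both endpoints of $e$ have degree $2$, so $e$ is an internal edge and $P_n-e$ is a disjoint union of two paths $P_a$ and $P_b$ with $a,b\geq 2$ and $a+b=n$. Each of $P_a$ and $P_b$ is a tree containing an edge, so $\operatorname{dem}(P_a)=\operatorname{dem}(P_b)=1$ by Theorem~\ref{th-dem-1}, and Observation~\ref{Obs:disjoint} yields $\operatorname{dem}(P_n-e)=1+1=2=\operatorname{dem}(P_n)+1$.

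There is essentially no obstacle here. The only point requiring minor care is reading off the connected components of $P_n-e$ correctly from the degrees of the endpoints of $e$ (pendant edge versus internal edge) and not dropping the isolated-vertex component in the first case; both facts are immediate from the linear structure of $P_n$. Indeed, this corollary merely records, in the $\operatorname{dem}(P_n-e)$ notation, the two observations already made in the paragraph preceding its statement.
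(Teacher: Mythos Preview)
Your proposal is correct and matches the paper's approach: the paper's justification is exactly the short paragraph preceding the corollary, which simply asserts $\operatorname{dem}(P_n-e)=1$ or $2$ in the two cases, and you have spelled out those assertions via Theorem~\ref{th-dem-1} and Observation~\ref{Obs:disjoint}. Your remark that the $n=2$ case falls outside both branches (since the unique edge lies in $NE_{1,1}$) is a nice observation the paper glosses over.
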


Foucaud et al.~\cite{FKKMR21} obtained the
DEM numbers of
complete bipartite graph $K_{\ell_1,\ell_2}$.
\begin{theorem}{\upshape \cite{FKKMR21}}
\label{ThmKMN}
Let $\ell_1$ and $\ell_2$ be two integers with $\ell \geq 1$ and $\ell_2 \geq 1$. Then
$$
\operatorname{dem}\left(K_{\ell_1,\ell_2}\right)
=\min \{\ell_1,\ell_2\}.
$$
\end{theorem}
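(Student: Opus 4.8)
The plan is to prove the two inequalities separately, after using symmetry to reduce to the case $\ell_1 \le \ell_2$, so that $\min\{\ell_1,\ell_2\}=\ell_1$. Write $A$ and $B$ for the two sides of $K_{\ell_1,\ell_2}$, with $|A|=\ell_1$ and $|B|=\ell_2$. The degenerate case $\ell_1=1$ is immediate: $K_{1,\ell_2}$ is a star, hence a tree, so $\operatorname{dem}(K_{1,\ell_2})=1$ by Theorem \ref{th-dem-1}, which matches $\min\{1,\ell_2\}$. From now on I assume $\ell_1\ge 2$ (and therefore also $\ell_2\ge 2$).

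For the upper bound, observe that $A$ is a vertex cover of $K_{\ell_1,\ell_2}$ of size $\ell_1$, since every edge has exactly one endpoint on each side; in fact $\beta(K_{\ell_1,\ell_2})=\min\{\ell_1,\ell_2\}=\ell_1$. Theorem \ref{Theorem:Upperbond} then gives $\operatorname{dem}(K_{\ell_1,\ell_2})\le \beta(K_{\ell_1,\ell_2})=\ell_1$. (Equivalently, one checks directly that $M=A$ is a DEM set: for any edge $uv$ with $u\in A$, Theorem \ref{Th-Ncover} yields $uv\in EM(u)$.)

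For the lower bound, I will show that in $K_{\ell_1,\ell_2}$ every edge is monitored only by its own two endpoints, and then apply Proposition \ref{Obs:CUV}. Fix an edge $e=uv$ with $u\in A$ and $v\in B$. Here $d(x,y)=1$ when $x,y$ lie on different sides and $d(x,y)=2$ when they lie on the same side, and $N_G(u)\cup N_G(v)=A\cup B=V(K_{\ell_1,\ell_2})$. Take any $w\in V(K_{\ell_1,\ell_2})\setminus\{u,v\}$. If $w\in A$, then $w$ remains adjacent to $v$ after deleting $e$, so $d_{G-e}(w,v)=1=d_G(w,v)$; and since $\ell_2\ge 2$ there is some $v'\in B\setminus\{v\}$, giving the path $w\,v'\,u$, so $d_{G-e}(w,u)=2=d_G(w,u)$. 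If $w\in B$, the argument is symmetric, using $\ell_1\ge 2$ to pick $u'\in A\setminus\{u\}$. In either case $e\notin EM(w)$. By Proposition \ref{Obs:CUV}, $e$ is monitored only by $u$ or $v$. Hence any DEM set $M$ of $K_{\ell_1,\ell_2}$ must meet $\{u,v\}$ for every edge $uv$, i.e. $M$ is a vertex cover, so $|M|\ge\beta(K_{\ell_1,\ell_2})=\ell_1$. Combining the two bounds gives $\operatorname{dem}(K_{\ell_1,\ell_2})=\ell_1=\min\{\ell_1,\ell_2\}$.

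The only real subtlety is the distance-after-deletion bookkeeping, and the point that makes it work is exactly that both sides have at least two vertices, which supplies the alternative length-$2$ detour $w\,v'\,u$ (resp. $w\,u'\,v$); this is why the star case $\ell_1=1$ has to be peeled off first. Everything else is a direct application of Theorems \ref{th-dem-1}, \ref{Th-Ncover}, \ref{Theorem:Upperbond} and Proposition \ref{Obs:CUV}.
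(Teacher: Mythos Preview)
Your proof is correct. Note, however, that the paper does not actually prove Theorem~\ref{ThmKMN}: it is quoted from \cite{FKKMR21} and used as a black box, so there is no ``paper's own proof'' to compare against. That said, your argument is entirely in the spirit of the paper's other computations (e.g.\ the proofs of Proposition~\ref{Lemma:FAN GRAPH} and Theorem~\ref{The:sTN}): the upper bound comes straight from the vertex-cover bound of Theorem~\ref{Theorem:Upperbond}, and the lower bound is obtained by checking the hypothesis of Proposition~\ref{Obs:CUV} for every edge, forcing any DEM set to be a vertex cover and hence to have size at least $\beta(K_{\ell_1,\ell_2})=\min\{\ell_1,\ell_2\}$. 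The separation of the star case $\ell_1=1$ via Theorem~\ref{th-dem-1} is the right way to handle the one situation where the detour argument would fail.
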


The following corollary is immediate.
\begin{corollary}
Let $n\geq 3$ be an integer. Then,

$(i)$ for any edge $e\in E(C_n)$,
$\operatorname{dem}(C_n- e)=\operatorname{dem}(C_n)-1=1$;

$(ii)$ for any edge $e\in E(K_n)$,
$\operatorname{dem}(K_n- e)=\operatorname{dem}(K_n)-1=n-2$;

$(iii)$ for any edge $e\in E(K_{n,n})$,
$\operatorname{dem}(K_{n,n}- e)=\operatorname{dem}(K_{n,n})=n$.
\end{corollary}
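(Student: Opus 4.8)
I would establish the three parts separately, each time replacing $G-e$ by a graph we can already analyse. For $(i)$: $C_n-e\cong P_n$ is a tree, so $\operatorname{dem}(C_n-e)=1$ by Theorem~\ref{th-dem-1}, whereas $\operatorname{fes}(C_n)=1$ gives $\operatorname{dem}(C_n)=\operatorname{fes}(C_n)+1=2$ by Theorem~\ref{Th-fes}; hence $\operatorname{dem}(C_n-e)=1=\operatorname{dem}(C_n)-1$.

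For $(ii)$ and $(iii)$ the values $\operatorname{dem}(K_n)=n-1$ and $\operatorname{dem}(K_{n,n})=n$ are already given by Theorems~\ref{th-dem-n} and~\ref{ThmKMN}, so only $\operatorname{dem}(G-e)$ must be determined. My plan is to show that in $G-e$ \emph{every} edge is monitored only by its two endpoints. Once this is known, Proposition~\ref{Obs:CUV} forces any DEM set of $G-e$ to be a vertex cover of $G-e$, so $\operatorname{dem}(G-e)\ge\beta(G-e)$; together with $\operatorname{dem}(G-e)\le\beta(G-e)$ from Theorem~\ref{Theorem:Upperbond} this yields $\operatorname{dem}(G-e)=\beta(G-e)$. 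Finally I would compute $\beta(G-e)$ with $e=uv$: in $K_n-e$ the only non-adjacent pair is $\{u,v\}$, so an independent set has size at most $2$ and the $n-2$ vertices outside $\{u,v\}$ form a minimum vertex cover, giving $\beta(K_n-e)=n-2=\operatorname{dem}(K_n)-1$; in $K_{n,n}-e$ one bipartite class is a vertex cover (so $\beta\le n$) and no $n-1$ vertices suffice, since the $n+1$ uncovered vertices would meet both classes in a way that still spans an edge, so $\beta(K_{n,n}-e)=n=\operatorname{dem}(K_{n,n})$.

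The real work is the claim that no edge of $G-e$ is monitored by a non-endpoint, which I would verify by a short distance computation, split according to the position of the edge relative to $e=uv$. In $K_n-e$ the only non-adjacent pairs are $\{u,v\}$ and the edge $xy$ being tested, so for any $w\notin\{x,y\}$ one has $d(w,x)=d(w,y)=1$ both in $K_n-e$ and in $(K_n-e)-xy$, hence $xy\notin EM(w)$; the only exception occurs when $n=3$ and a vertex becomes isolated, but then $K_3-e=P_3$ is a tree and $(ii)$ is trivial. For $K_{n,n}-e$ the relevant distances are $d(u,v)=3$, distance $1$ for every other cross pair, and distance $2$ for same-side pairs, and one checks (using $n\ge3$) that removing one more edge leaves all of these unchanged, so again every edge is endpoint-monitored. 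I expect this case analysis --- and the observation that $(iii)$ genuinely requires $n\ge3$, since $K_{2,2}-e=P_4$ has DEM number $1$ rather than $2$ --- to be the only point needing care; everything else is an immediate appeal to Theorems~\ref{th-dem-1}, \ref{th-dem-n}, \ref{Th-fes}, \ref{Theorem:Upperbond}, \ref{ThmKMN} and Proposition~\ref{Obs:CUV}.
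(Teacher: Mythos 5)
Your proposal is correct and follows essentially the same route as the paper: part $(i)$ via Theorem~\ref{th-dem-1}, and parts $(ii)$--$(iii)$ by using distance computations and Proposition~\ref{Obs:CUV} to force any DEM set of $G-e$ to be a vertex cover, then counting (the paper organizes $(iii)$ slightly differently, exhibiting one unmonitored edge for a hypothetical $(n-1)$-set rather than proving the full endpoint-only claim, but the computations are identical). One small imprecision in your $(ii)$: when $\{w,x\}=\{u,v\}$ the relevant distance is $2$, not $1$, though it is still unchanged by deleting $xy$ for $n\ge 4$, so the conclusion stands.
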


\begin{proof}

\vspace*{-4mm}
$(i)$ From Theorem \ref{th-dem-1}, we have
$\operatorname{dem}(C_n- e)=
\operatorname{dem}(P_n)= 1$. Since $\operatorname{dem}(C_n)=2$, it follows that
$\operatorname{dem}(C_n- e)
=\operatorname{dem}(C_n)-1=1$.

\medskip
$(ii)$ Let $G=K_n$.
From Theorem \ref{th-dem-n},
we have $\operatorname{dem}(G)=n-1$ and $\operatorname{dem}(G-uv)\leq n-2$ for any edge $uv\in E(G)$.
Then, we prove that $\operatorname{dem}(G-uv)\geq n-2$.
Suppose that $\operatorname{dem}(G-uv)\leq n-3$.
Let $M$ be a DEM set of $G-uv$ with the  minimum  cardinality.
For any edge $xy \in E(G - uv)$ and
vertex $w\in (N_G(x)\cup N_G(y))\setminus\{x,y\}$,
we have
$d_G(w,x)=d_{G - xy}(w,x)$ and
$d_G(w,y)=d_{G - xy}(w,y)$,
and hence $xy \notin EM(w)$.
From Proposition \ref{Obs:CUV},
any edge $xy\in E(G-uv)$
is only monitored by $x$ and $y$,
and hence $M\cap\{x,y\}\neq \emptyset$.
If $u,v\in M$, then there exist two vertices $v_1,v_2\in V(G)\setminus \{u,v\}$
such that $v_1,v_2\notin M$, and hence $v_1v_2\notin \cup_{x\in M}EM(x)$, which contradicts to the fact that $\cup_{x\in M}EM(x)=E(G-uv)$.
Suppose that $u$ or $v\notin M$. Without loss of generality, let $u\notin M$. Then there exists a vertex $v_1\in V(G)\setminus \{u,v\}$
such that $v_1\notin M$, and hence $uv_1\notin \cup_{x\in M}EM(x)$, which contradicts to the fact that $\cup_{x\in M}EM(x)=E(G-uv)$.
Therefore, $\operatorname{dem}(G-uv)\leq n-2$, and so $\operatorname{dem}(G-uv)= n-2$.
This implies that
$\operatorname{dem}(K_n- e)=\operatorname{dem}(K_n)-1=n-2$.

\medskip
$(iii)$ Let $G=K_{n,n}$ with
vertex set $V(G)=\{u_i\,|\,1\leq i\leq n\}$
$\cup \{v_i\,|\,1\leq i\leq n\}$
and edge set $E(G)=\{u_iv_j\,|\,1\leq i,j \leq n\}$.
From Theorem
\ref{ThmKMN}, we have $\operatorname{dem}(K_{n,n})=n$.
Without loss of generality,
let $G_1=G-u_1v_1$. Firstly, we prove that $\operatorname{dem}(G_1)\geq n$.
Suppose that $M$ is a DEM set of $G_1$ with $|M|=n-1$.
Then, there exists two vertices $u_p, v_q\notin M$ with $u_pv_q\in E(G_1)$.
For any $u\in M\cap \{u_i\,|\,1\leq i\leq n\}$, we have $d_{G_1-u_pv_q}(u,u_p)=d_{G_1}(u,u_p)=2$ and $d_{G_1-u_pv_q}(u,v_q)=d_{G_1}(u,v_q)=1$, and hence $u_pv_q\notin EM(u)$. Similarly, $u_pv_q\notin EM(v)$ for any
$v\in M\cap \{v_i\,|\,1\leq i\leq n\}$, and hence $u_pv_q\notin \cup_{x\in M}EM(x)$, which contradicts to the fact that $\cup_{x\in M}EM(x)=E(G_1)$. Therefore, $\operatorname{dem}(G_1)\geq n$.
Then, we prove that $\operatorname{dem}(G_1)\leq n$.
Let $M=\{u_i\,|\,1\leq i\leq n\}$.
Then $M$ is a vertex cover set
of $G_1$.
From Theorem \ref{Theorem:Upperbond},
$\operatorname{dem}(G_1)\leq |M|=n$,
and hence $\operatorname{dem}(G_1)=n$.
Therefore, $\operatorname{dem}(K_{n,n}- e)=\operatorname{dem}(K_{n,n})=n$.
\end{proof}

There exist three graphs $G_1,G_2,G_3$
and $v\!\in V(G)$ such that
$\operatorname{dem}(G_1)\!>\! \operatorname{dem}(G_1-v)$,
$\operatorname{dem}(G_2)\! = \operatorname{dem}(G_2-v)$,
$\operatorname{dem}(G_3) < \operatorname{dem}(G_3-v)$, respectively.

The following corollary is immediate.
\begin{corollary}\label{corL1}
Let $n \geq 3$ be an integer.
For any $v\in v(G)$, we have\medskip

$(i)$ $\operatorname{dem}(C_n\setminus v)=
\operatorname{dem}(C_n)-1=1$;

$(ii)$ $\operatorname{dem}(K_n\setminus v)
=\operatorname{dem}(K_n)-1=n-2$;

$(iii)$ $\operatorname{dem}(K_{n,n}\setminus v)
=\operatorname{dem}(K_{n,n})-1=n-1$.
\end{corollary}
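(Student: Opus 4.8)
The plan is to observe that in each case $G \setminus v$ is isomorphic to a graph whose distance-edge-monitoring number has already been computed in the results cited above, so that the corollary follows by a one-line substitution; all three parts are essentially bookkeeping once this identification is made.

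For $(i)$, deleting any vertex of $C_n$ leaves the path $P_{n-1}$, which is a tree, so Theorem \ref{th-dem-1} gives $\operatorname{dem}(C_n \setminus v) = 1$; since $C_n$ is not a tree, the same theorem (equivalently Theorem \ref{Th-fes} applied with $\operatorname{fes}(C_n) = 1$) gives $\operatorname{dem}(C_n) = 2$, and the claimed equality follows. For $(ii)$, $K_n \setminus v \cong K_{n-1}$, so Theorem \ref{th-dem-n} yields $\operatorname{dem}(K_n \setminus v) = (n-1) - 1 = n - 2$ and $\operatorname{dem}(K_n) = n - 1$. For $(iii)$, removing a vertex of $K_{n,n}$ deletes it from one part and leaves $K_{n-1,n}$, so Theorem \ref{ThmKMN} gives $\operatorname{dem}(K_{n,n} \setminus v) = \min\{n-1, n\} = n - 1$, while $\operatorname{dem}(K_{n,n}) = n$; subtracting proves the identity. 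Each intermediate graph $P_{n-1}$, $K_{n-1}$, $K_{n-1,n}$ is connected, so the cited theorems apply without modification.

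There is no genuine obstacle; the only point that merits a quick check is the boundary case $n = 3$ in parts $(i)$ and $(ii)$, where both $C_3 \setminus v$ and $K_3 \setminus v$ degenerate to $K_2 \cong P_2$, which is a tree with $\operatorname{dem} = 1$. Since $1 = 2 - 1$ and $1 = 3 - 2$, the formulas remain correct there, so nothing special is needed and the argument goes through uniformly for all $n \geq 3$.
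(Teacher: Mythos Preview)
Your proposal is correct and matches the paper's approach: the paper simply declares the corollary ``immediate'' without writing out a proof, and your argument spells out exactly the identifications ($C_n\setminus v\cong P_{n-1}$, $K_n\setminus v\cong K_{n-1}$, $K_{n,n}\setminus v\cong K_{n-1,n}$) together with the relevant Theorems~\ref{th-dem-1}, \ref{th-dem-n}, and~\ref{ThmKMN} that make it so.
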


\begin{proposition}\label{cPnv1}
Let $P_n$ be a path with vertex set $\{v_i\,|\, 1\leq i\leq n\}$ and edge set $\{v_iv_{i+1}\,|\,1\leq i\leq n-1\}$, where $n\geq 5$.
For any $v\in V(P_n)$,
we have
$$\operatorname{dem}(P_n\setminus v)=
\begin{cases}
  \operatorname{dem}(P_n), &
\mbox{if $v \in \{v_1, v_2,v_{n-1},v_n\}$};  \\
\operatorname{dem}(P_n)+1, &
\mbox{if $v \in \{v_i\,|\,3\leq i\leq n-2\}$.}
\end{cases}
$$
\end{proposition}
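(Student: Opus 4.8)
The plan is to reduce everything to Theorem~\ref{th-dem-1} (a connected graph with at least one edge has DEM number $1$ iff it is a tree, so in particular every path has DEM number $1$) together with Observation~\ref{Obs:disjoint} and the convention $\operatorname{dem}(K_1)=0$. Since $P_n$ is a tree, $\operatorname{dem}(P_n)=1$, so the statement amounts to showing that $P_n\setminus v$ has DEM number $1$ when $v$ is one of the two outermost vertices on either end, and DEM number $2$ when $v$ is an interior vertex, and the whole argument is a case analysis on the position of $v$.

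First I would treat the boundary cases. If $v=v_1$ (and symmetrically $v=v_n$), then $P_n\setminus v\cong P_{n-1}$ is still a path, hence a tree, so $\operatorname{dem}(P_n\setminus v)=1=\operatorname{dem}(P_n)$. If $v=v_2$ (and symmetrically $v=v_{n-1}$), then $P_n\setminus v$ has exactly two components, the isolated vertex $v_1$ and the subpath $v_3v_4\cdots v_n\cong P_{n-2}$; because $n\ge 5$ forces $n-2\ge 3$, this subpath is a tree with at least one edge, so by Observation~\ref{Obs:disjoint} and Theorem~\ref{th-dem-1} we get $\operatorname{dem}(P_n\setminus v)=\operatorname{dem}(K_1)+\operatorname{dem}(P_{n-2})=0+1=1=\operatorname{dem}(P_n)$.

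Next I would handle the interior case $v=v_i$ with $3\le i\le n-2$. Here $P_n\setminus v_i$ splits into the two subpaths $v_1\cdots v_{i-1}\cong P_{i-1}$ and $v_{i+1}\cdots v_n\cong P_{n-i}$, and the constraint $3\le i\le n-2$ is exactly what forces $i-1\ge 2$ and $n-i\ge 2$, so each component is a path carrying at least one edge. Applying Theorem~\ref{th-dem-1} to each component and then Observation~\ref{Obs:disjoint} gives $\operatorname{dem}(P_n\setminus v_i)=\operatorname{dem}(P_{i-1})+\operatorname{dem}(P_{n-i})=1+1=2=\operatorname{dem}(P_n)+1$. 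Collecting the cases yields the displayed formula.

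I expect no serious obstacle in this argument; the only point that needs care is the bookkeeping of the component structure at the transitional indices (distinguishing $v=v_2$ from $v=v_3$, and symmetrically at the other end) and verifying that the hypothesis $n\ge 5$ is precisely what guarantees that every nontrivial component arising is a path with at least one edge, so that Theorem~\ref{th-dem-1} applies to it and it contributes exactly $1$ to the sum.
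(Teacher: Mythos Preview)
Your proposal is correct and follows essentially the same approach as the paper: a case split on the position of $v$, reducing each case to Theorem~\ref{th-dem-1}, Observation~\ref{Obs:disjoint}, and the convention $\operatorname{dem}(K_1)=0$. Your write-up is in fact slightly more careful than the paper's, since you explicitly track why the hypothesis $n\ge 5$ guarantees that each nontrivial component is a path with at least one edge.
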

\eject
\begin{proof}
From Theorem \ref{th-dem-1}, we have $\operatorname{dem}\left(P_n\right)=1$.
If $v \in \{v_1,v_n\}$,
then
$\operatorname{dem}\left(P_n\setminus v\right)=1$.
If $v \in \{v_2,v_{n-1}\}$,
then
$\operatorname{dem}\left(P_n\setminus v\right)
=\operatorname{dem}\left(P_{n-1}\right)+
\operatorname{dem}\left(K_1\right)=1$,
where $V(K_1)=\{v_1\}$ or $V(K_1)=\{v_n\}$, and hence $\operatorname{dem}\left(P_n\right)
-\operatorname{dem}\left(P_n\setminus v\right)=0$.
If $v \in  \{v_i\,|\,3\leq i\leq n-2\}$, then
$\operatorname{dem}\left(P_n- v\right) =2$,
and hence
$\operatorname{dem}\left(P_n\right)
-\operatorname{dem}\left(P_n\setminus v\right)=-1$.
\end{proof}

Proposition \ref{cPnv1} shows
that there exists a graph $G$ and $v\in V(G)$
such that $\operatorname{dem}(G) = \operatorname{dem}(G\setminus v)$ or
$\operatorname{dem}(G) < \operatorname{dem}(G\setminus v)$.

\subsection{Whether the DEM set is still applicable?}

Foucaud et al.~proved in \cite{FKKMR21} that the problem
DEM SET is $NP$-complete.
For any graph $G$, a natural question is whether the original DEM set  can monitor
all edges if some edges or vertices in $G$ are delated. We design the Algorithm \ref{algorithm:EM_veq} and
the time complexity  is polynomial.
\begin{framed}
\noindent
WHETHER THE DEM SET IS STILL APPLICABLE?\\
Instance: A graph $G=(V, E)$, an edge
$e\in E(G)$  and a DEM
set $M$ of $G$.\\
Question:  Whether $M$ is still
a DEM set
for the graph $G-e$?
\end{framed}
Given a graph $G$, a DEM set $M$,
and an edge $e\in E(G)$,
our goal is to determine whether the
original DEM set $M$
is still valid in the resulting graph $G- e$.
The algorithm is shown
in Algorithm~\ref{algorithm:EM_veq}.

\begin{algorithm}
\caption{The algorithm for determining
$M$ is or not
monitor set for $G- e$}
\label{algorithm:EM_veq}
 \begin{algorithmic}[1]
\Require A graph $G$, $M\subseteq V(G)$
and $e\in E(G)$;
\Ensure $E(G- e) \subseteq \cup_{x\in M}{EM(x)}$ is TRUE or FALSE;
\State $M_1$ $\gets E(G- e)$
\For{each vertex $v\in M$}
\State $M_1$ $ \gets$ $M_1-EM(v)$
\EndFor

\If{$M_1$  $=$ $\emptyset$}
\Return  $E(G- e) \subseteq \cup_{x\in M}{EM(x)}$ is TRUE;
\Else{}
\Return
$E(G- e) \subseteq \cup_{x\in M}{EM(x)}$ is FALSE;
\EndIf
\end{algorithmic}
\end{algorithm}

The algorithm of how to compute the edge set $EM(x)$ from $G$ is polynomial by
the breadth-first spanning tree algorithm.
Hence the time complexity of Algorithm \ref{algorithm:EM_veq} is polynomial.

\section{Conclusion}

In this paper, we studied
the effect of deleting edges and vertices in a graph $G$ on the DEM number.
We obtained that
$\operatorname{dem}(G-e)-\operatorname{dem}(G) \leq 2$ for any graph $G$ and $e \in E(G)$.
Furthermore, the bound is sharp. In addition, we can find a graph $H$ and
$v\in V(H)$ such that $\operatorname{dem}(H\setminus v)-\operatorname{dem}(H)$ can be arbitrarily large.
This fact gives an answer to the monotonicity of the DEM number.
This means that there exist two graphs $H$ and $G$ with $H\sqsubseteq G$ such that $\operatorname{dem}(H) \geq \operatorname{dem}(G)$.

\medskip
It is interesting to consider the following problems for future work.
\begin{itemize}
\item[] $(1)$ Characterize the graphs $\operatorname{dem}(H)\geq \operatorname{dem}(G)$ if $H \sqsubseteq G$.

\item[] $(2)$ For a graph $G$ and $E\subseteq E(G)$, what is
the maximum value of $|E|$ such
that $\operatorname{dem}(G)= \operatorname{dem}(G-E)$?

\item[] $(3)$ For any $\epsilon>0$, whether the ratio $\frac{\operatorname{dem}(G) } {\operatorname{dem}(H)}\leq \epsilon$ holds, where $H$ is an induced subgraph of $G$.
\end{itemize}
In addition, it would be interesting to study distance-edge monitoring
sets in further standard graph classes, including circulant graphs, graph products,
or line graphs. In addition, characterizing the graphs with $\operatorname{dem}(G)=n-2$
would be of interest, as well as clarifying further the relation of the
parameter $\operatorname{dem}(G)$ to other standard graph parameters, such as
arboricity, vertex cover number and feedback edge set number.\\[0.2mm]

\noindent \textbf{Acknowledgements.} We would like to thank the anonymous referees for suggesting the problem and for several helpful comments regarding this paper.

\end{document}